\providecommand{\U}[1]{\protect\rule{.1in}{.1in}}
\newcommand{\BlackBoxes}{\global\overfullrule5pt}
\newcommand{\R}{\mathbb{R}}
\newcommand{\N}{\mathbb{N}}
\newcommand{\Eop}{\mathbb{E}}
\newcommand{\Pop}{\mathbb{P}}
\newcommand{\Q}{\mathbb{Q}}
\newcommand{\seq}[1]{(#1_n)_{n\in\N}}
\newcommand{\indicator}{{\mathds 1}}
\newcommand{\Ind}[1]{\indicator_{\{#1\}}}
\newcommand{\prT}[1]{#1=(#1_t)_{t\in[0,T]}}
\newcommand{\xib}{(\xi,b)}
\newcommand{\stprT}[1]{(#1_t)_{t\in[0,T]}}
\newcommand{\stT}{{s\in[t,T]}}
\newtheorem{theorem}{Theorem}
\newtheorem{corollary}[theorem]{Corollary}
\newtheorem{lemma}[theorem]{Lemma}
\newtheorem{proposition}[theorem]{Proposition}
\theoremstyle{definition}
\newtheorem{definition}[theorem]{Definition}
\numberwithin{equation}{section} \numberwithin{theorem}{section}
\def\0{\kern0pt\-\nobreak\hskip0pt\relax}
\def\makeoverbar#1#2#3#4#5#6#7{ \setbox0=\hbox{$\m@th#2\mkern#5mu{{}#3{}}\mkern#6mu$} \setbox1=\null \dimen@=#4\fontdimen8#13 \dimen@=3.5\dimen@
\advance\dimen@ by \ht0 \dimen@=-#7\dimen@ \advance\dimen@ by \wd0
\ht1=\ht0 \dp1=\dp0 \wd1=\dimen@
\dimen@=\fontdimen8#13 \fontdimen8#13=#4\fontdimen8#13
\rlap{\hbox to \wd0{$\m@th\hss#2{\overline{\box1}}\mkern#5mu$}}
\fontdimen8#13=\dimen@}
\def\mylabel#1#2{{\def\@currentlabel{#2}\label{#1}}}
\begin{document}
\title[Optimal investment and reinsurance problems]{Bayesian optimal investment and reinsurance with dependent financial and insurance risks}
\author[N. \smash{B\"auerle}]{Nicole B\"auerle${}^*$}
\address[N. B\"auerle]{Department of Mathematics,
Karlsruhe Institute of Technology (KIT), D-76128 Karlsruhe, Germany}

\email{\href{mailto:nicole.baeuerle@kit.edu}
{nicole.baeuerle@kit.edu}}

\author[G. \smash{Leimcke}]{Gregor Leimcke${}^*$}
\address[G. Leimcke]{Department of Mathematics,
Karlsruhe Institute of Technology (KIT), D-76128 Karlsruhe, Germany}

\email{\href{gregor.leimcke@mail.de} {gregor.leimcke@mail.de}}

\thanks{${}^*$ Department of Mathematics,
Karlsruhe Institute of Technology (KIT), D-76128 Karlsruhe, Germany}
\begin{abstract}
Major events like natural catastrophes or the COVID-19 crisis have impact both on the financial market and on claim arrival intensities and claim sizes of insurers. Thus, when optimal  investment and reinsurance  strategies have to be determined  it is important to consider models which reflect this dependence. In this paper we make a proposal how to generate dependence between the financial market and claim sizes in times of crisis and determine via a stochastic control approach an optimal  investment and reinsurance  strategy which maximizes the expected exponential utility of terminal wealth. Moreover, we also allow that the claim size distribution may be learned in the model. We give comparisons and bounds on the optimal strategy using simple models. What turns out to be very surprising is that numerical results indicate that even a  minimal dependence which is created in this model has a huge impact on the control in the sense that the insurer is much more prudent then.

\end{abstract}
\maketitle


\makeatletter \providecommand\@dotsep{5} \makeatother



\vspace{0.5cm}
\begin{minipage}{14cm}
{\small
\begin{description}
\item[\rm \textsc{ Key words} ]
{\small Risk Theory, Stochastic Control, Dependence Modeling, Learning, Bayesian Model}
\item[\rm \textsc{JEL C61, C11, G22}]
\end{description}
}
\end{minipage}

\section{Introduction}

The COVID-19 pandemic has a significant impact on individuals, society and almost all sectors of the economy. This also applies to the insurance industry as well as the financial market by drop down of asset prices. The COVID-19 crisis is one example for an event with impact on financial and insurance risks, which shows that it makes sense to add interdependencies between both. This is also suggested by Wang et al.\ \cite{Wang2018}, who point out the following two reasons: First, (re)insurance companies transfer their insurance risks to the capital market by using insurance-linked securities, like catastrophe bonds, for instance. As a result, an insurer invested in the financial market is exposed to the insurance risks exported by another insurance company to the financial market, and there may be dependencies among these risks for example through natural catastrophes.
A second interconnectedness among financial and insurance risks is in insurance contracts for financial guarantees, which can cause systemic risk. 

Whereas  it is common now in the actuarial literature to model dependencies between different lines of business, the number of papers which connect the evolution of the financial market to the occurrence of claims is sparse. 

A widespread approach to obtain dependent business lines is by {\em common shock } models. In general this means that there is an additional Poisson process which produces joint claims in all or many business lines. Papers which have used this  approach are among others \cite{BaeuerleGruebel2005,GBC12,YuenLiangZhou2015,BLX16,BiChen2019,BaeuerleLeimcke2020}. The first two papers in this list deal with modeling and computational aspects of performance measures, whereas the last four use these models to solve stochastic control problems for optimal reinsurance and investment for different criteria and for diffusion as well as jump models. The advantage of modeling dependence in this way is that we obtain an immediate interpretation for the interdependence. Since the papers \cite{YuenLiangZhou2015,BLX16,BiChen2019,BaeuerleLeimcke2020} consider a financial market which is independent from the claim generation mechanism, the control problems for investment and reinsurance decompose which makes it of course easier to obtain explicit solutions. Another popular approach to model dependence between business lines is to use {\em L\'evy copuals}, see among others \cite{bk05,bb11,atwy14} and \cite{acw11} for an overview. This approach is elegant from a mathematical point of view but its interpretation is less clear than for common shock models. Other approaches include the construction of dependence via {\em interacting intensities} (see \cite{BaeuerleGruebel2008}) or a {\em common subordinator} (see \cite{SchererSelch2018}).

The first contribution of this paper is to model a dependence between the financial market and the insurance business for the joint problem of optimal investment and (proportional) reinsurance. To keep the model simple we restrict here to one business line for the insurance risk, but the model can be extended here in a straightforward way. A paper which connects financial and insurance risk is  \cite{Wang2018} where a discrete-time risk model is considered. The authors there assume a joint distribution for the claim size and the discount factor at each point in time and are interested in the asymptotics for the finite-time ruin. They do not consider a control. The second paper is the recent paper by \cite{bs20} who create the dependence by a common factor process which influences drift and volatility of the risky asset as well as size and risk fluctuations of the insurance risk process. They consider a diffusion model and general utility function and obtain explicit solutions in some special cases. In contrast to their approach we assume here that in 'normal' times we have independence and that  dependence is created by major events like catastrophes. More precisely, whenever the claim size exceeds a certain threshold we assume that this corresponds to a catastrophe and implies at the same time a drop of the risky asset by a random proportion. What turns out to be very surprising is the fact that creating only a small dependence has a sincere effect on the optimal investment  strategy.

Our second contribution is that we allow the claim size distribution to be learned. In most articles, it is assumed  that the insurer has complete knowledge of the model. However, in reality, insurance companies operate in a setting with partial information. That is, with regard to the net claim process, only the claim arrival times and magnitudes are directly observable. Therefore we study the optimal investment and reinsurance problem in a partial information framework. More precisely we consider a Bayesian approach and restrict here to the claim size distribution which is allowed to be learned from a finite set of possible distributions (for learning the intensity see e.g. \cite{BaeuerleLeimcke2020}). 
A paper with learning in an actuarial context is \cite{LST14} where dividend payment is optimized and the drift of the risky asset has to be learned. The model there is a diffusion model. \cite{s15,LiangBayraktar2014} are both Hidden-Markov models which means that a latent hidden factor influences model parameters. In \cite{s15} again the dividend has to be maximized in a diffusion setting with unobservable dirft.   Based on the suggestion in \cite[p.\,165]{AsmussenAlbrecher2010}, the authors in \cite{LiangBayraktar2014} consider the optimal investment and reinsurance problem for maximizing exponential utility under the assumption that the claim intensity and loss distribution depend on the states of the Hidden Markov chain.

The aim in our paper is to maximize the expected  exponential utility  of the insurer's capital at a fixed time point. Note that this is an interesting optimization criterion which interpolates between a mean-variance criterion and a robust approach (for details see \cite{BaeuerleLeimcke2020}). The control consists of (proportional) reinsurance and investment into two assets. The baseline financial market is given by a Black Scholes model and the insurance model is a Cram\'er-Lundberg model. As explained before, as soon as the claim size exceeds a threshold the risky asset drops by a random proportion. Using stochastic control methods we are able to characterize the optimal investment and reinsurance strategy via the Hamilton-Jacobi-Bellman (HJB) equation. 
Since the value function may not be differentiable everywhere we use the Clarke gradient as a general gradient in our analysis. In the case of known model data we get explicit optimal investment and reinsurance strategies and are able to discuss the influence of the threshold level which creates the dependency.

The paper is organized as follows: In the next section we introduce our basic model which consists of the claim arrival process, the financial market, the strategies and the optimization problem. In Section \ref{sec:learn} we state the model with learning and explain how we can transform the model with unknown claim size distribution to a model with known data. The standard approach here is to include a filter process which keeps track of all relevant observations. Section \ref{sec:sol} contains the solution. Being able to show that the value function possesses some Lipschitz properties we can prove that it is a solution of a generalized HJB equation where we replace a derivative by the generalized Clarke gradient. Thus, we are also able to characterize an optimal pair of investment and reinsurance strategy.
Due to the dependence between the financial market and the claim  process these strategies are now rather complicated. So we first manage in Section \ref{sec:comp} to compare the optimal strategy to the optimal one where we have independence between the financial market and claim occurrence. It will turn out that the insurance company will invest less  when dependence shows up. Indeed a numerical example will reveal the magnitude of the impact of the threshold which creates the dependence. We can show that even large thresholds which create a minimal dependence have a huge impact on the investment strategy. Second we are able to compare the optimal investment strategy in our model to the optimal one in a model with known data and where the jump size distribution exactly equals  our expectation. We will see that in the latter model the invested amount provides an upper bound to what is invested in the more complicated model. In the appendix we summarize additional information on the Clarke gradient and provide detailed calculations and proofs for our main theorems.

\section{The Optimal Investment and Reinsurance Model}\label{sec:model}
We consider an insurance company with the aim to maximize the expected utility of the terminal surplus by choosing optimal investment and reinsurance strategies. The processes $\Psi$ and $W$ below are defined on a common probability space $(\Omega,\mathcal{F},\Pop)$.

\subsection{The aggregated claim amount process}
In the following, let $N=(N_t)_{t\ge0}$ be a Poisson process with intensity $\lambda>0$.  We interpret the jump times of $N$, denoted by $(T_n)_{n\in\N}$, as arrival times of insurance claims. We assume that $(Y_n)_{n\in\N}$ is a sequence of positive random variables, where $Y_n$ describes the claim size at $T_n$.
The insurer faces uncertainty about the claim size distribution. This is taken into account by a Bayesian approach.
Let $\{F_\vartheta:\vartheta\in\Theta\}$, $\Theta\subset\R^n$, be a family of distributions on $(0,\infty)$, where $\vartheta$ in unknown. 
We view $\vartheta$ as a random variable taking values in $\Theta=\{1,\ldots,m\}$ for some $m\in\N$ and initial distribution $\pi_\vartheta(j)$, $j=1,\ldots,m$.
Moreover, we suppose that $F_j$ is absolutely continuous with density $f_j$, where
\begin{equation*}
  M_{j}(z):= \int_{(0,\infty)} e^{zy}f_j(y)dy<\infty,\quad z\in\R,\quad j=1,\ldots,m.
\end{equation*}
The sequence $Y_1,Y_2,\ldots$ is assumed to be conditional independent and identically distributed according to $F_\vartheta$ given $\vartheta$ as well as independent of $(T_n)_{n\in\N}$.
The aggregated claim amount process, denoted by $(S_t)_{t\ge0}$, is given by
\begin{equation*}
S_t = \sum_{i=1}^{N_t} Y_i = \int_0^t y\, \Phi(dt,dy),
\end{equation*}
where $\Phi:=(T_n,Y_n)$ is the $(0,\infty)$-Marked Point Process which carries the information about the claim arrival time and amounts.

\subsection{The financial market} 
The surplus will be invested by the insurer into a financial market, where it is assumed that there exists one risk-free asset and one risky asset. The price process of the \emph{risk-free asset}, denoted by $B=(B_t)_{t\ge0}$, is given by
\begin{equation*}
d B_t = rB_t dt, \quad B_0=1,
\end{equation*}
with \emph{risk-free interest rate}  $r\in\R$. That is, $B_t = e^{rt}$ for all $t\ge0$.
The price of the risky asset drops down by a random value at the claim arrival time $T_n$, if the corresponding insurance claim $Y_n$ exceed a fixed threshold $L>1$. 
We assume that $(Z_n)_{n\in\N}$ is a sequence of independent and identically distributed random variables taking values in $(0,1)$ with distribution $Q$. 
It is supposed that $(Z_n)_{n\in\N}$ is independent of $(T_n)_{n\in\N}$ and $(Y_n)_{n\in\N}$.
The random variable $Z_n$ describes the relative jump height downwards of the risky asset at time $T_n$, if $Y_n>L$. From now on, we set $\Psi := (T_n,(Y_n,Z_n))_{n\in\N}$ and let $E:=(0,\infty)\times(0,1)$. That is, $\Psi$ is the $E$-Marked Point Process which contains the information of the claim arrival times, claim sizes and potential relative jumps downwards of the risky asset. The filtration generated by $\Psi$ is denoted by $\mathfrak{F}^\Psi=(\mathcal{F}_t^\Psi)_{t\ge0}$.
The price process of the risky asset evolves according to a geometric Brownian motion between the jumps. That is, the price process of the \emph{risky asset}, denoted by  $P=(P_t)_{t\ge0}$, is characterized by
\begin{equation*}
d P_t = P_{t-}\bigg(\mu dt + \sigma d W_t - \int_E z \mathds{1}_{(L,\infty)}(y)\Psi(dt,d(y,z))\bigg) , \quad P_0=1,
\end{equation*}
where $\mu\in\R$ and $\sigma>0$ are constants describing the drift and volatility of the risky asset, respectively, and $(W_t)_{t\ge0}$ is a standard Brownian motion which is independent of $\seq{T}$, $\seq{Y}$ and $\seq{Z}$. Since the price process of the risky asset is observable, the filtration generated by $P$, denoted by $\mathfrak{F}^P=(\mathcal{F}_t^P)_{t\ge0}$, is known by the insurer. 
Throughout this work, ${\mathfrak G}=(\mathcal{G}_t)_{t\ge0}$ denotes the observable filtration of the insurer which is given by
\begin{equation*}
{\mathcal G}_t = {\mathcal F}^P_t\vee{\mathcal F}_t^\Psi,\quad t\ge0.
\end{equation*}

\subsection{The strategies}
We assume that the wealth of the insurance company is invested into the previously described financial market. 
\begin{definition}\label{def:investment}
	An \emph{investment strategy}, denoted by $\xi=(\xi_t)_{t\ge0}$, is an $\R$-valued, c\`{a}dl\`{a}g and ${\mathfrak G}$-predictable process such that $| \xi_t|\le K$ for some $0<K<\infty$. $\xi_t$ is the amount of money invested at time $t$.
\end{definition}

The restriction  $| \xi_t|\le K$ is only a technical tool. We will make $K$ sufficiently large later, s.t.\ the optimal $\xi_t^\star$ is the same as in the unrestricted problem.

We further assume that the first-line insurer has the possibility to take a proportional reinsurance. Therefore, the \emph{part of the insurance claims paid by the insurer}, denoted by $h(b,y)$, satisfies
\begin{equation*}
h(b,y) = b\cdot y
\end{equation*}
with \emph{retention level} $b\in[0,1]$ and \emph{insurance claim} $y\in(0,\infty)$. Here we suppose that the insurer is allowed to reinsure a fraction of her/his claims with retention level $b_t\in[0,1]$ at every time $t$. 

\begin{definition}\label{def:reinsurance}
	A \emph{reinsurance strategy}, denoted by $b=(b_t)_{t\ge0}$, is a $[0,1]$-valued, c\`{a}dl\`{a}g and ${\mathfrak G}$-predictable process. 
\end{definition}
We denote by ${\mathcal U}[t,T]$ the set of all admissible strategies $(\xi,b)$ on $[t,T]$. 
We assume that the policyholder's payments to the insurance company are modelled by a fixed \emph{premium (income) rate} $c=(1+\eta)\kappa$ with safety loading $\eta>0$ and fixed constant $\kappa>0$, which means that premiums are calculated by the expected value principle.
If the insurer chooses retention levels less than one, then the insurer has to pay premiums to the reinsurer. The \emph{part of the premium rate left to the insurance company} at retention level $b\in[0,1]$, denoted by $c(b)$, is $c(b) = c - \delta(b)$, where $\delta(b)$ denotes the \emph{reinsurance premium rate}. We say $c(b)$ is the \emph{net income rate}.
Moreover, the net income rate $c(b)$ should increase in $b$, which is fulfilled by setting $\delta(b) :=  (1-b)(1+\theta)\kappa$ with $\theta>\eta$ which represents the safety loading of the reinsurer.
Therefore
\begin{equation}\label{eq:cb}
c(b) = (1+\eta)\kappa - (1-b)(1+\theta)\kappa = (\eta-\theta)\kappa + (1+\theta)\kappa\,b.
\end{equation}
This reinsurance premium model is used e.g.\ in \cite{ZhuShi2019}. 
The surplus process $(X^{\xi,b}_t)_{t\ge0}$ under an admissible investment-reinsurance strategy $\xib\in{\mathcal U}[0,T]$ is given by
\begin{align*}
d X^{\xi,b}_t
&=(X^{\xi,b}_t - \xi_t)r dt + \xi_t\bigg(\mu dt+\sigma dW_t-\int_E z\mathds{1}_{(L,\infty)}(y)\Psi(dt,d(y,z))\bigg) + c(b_t)dt - b_t dS_t  \\
&= \big(rX^{\xi,b}_t + (\mu - r)\xi_t + c(b_t)\big) dt + \xi_t\sigma dW_t - \int_E\big(b_{t}y+ \xi_t z \mathds{1}_{(L,\infty)}(y)\big)\Psi(dt,d(y,z)).
\end{align*}
We suppose that $X^{\xi,b}_0=x_0 >0$  is the initial capital of the insurance company.

\subsection{The optimization problem}
Clearly, the insurance company is interested in an optimal investment-reinsurance strategy. But there are various optimality criteria to specify optimization of proportional reinsurance and investment strategies.  We consider the expected utility of wealth at the terminal time $T>0$ as criterion with
exponential utility function $U:\R\to\R$ 
\begin{equation}\label{eq:u}
U(x)=-e^{-\alpha x},
\end{equation}
where the parameter $\alpha>0$ measures the \emph{degree of risk aversion}. The exponential utility function is useful since by choosing $\alpha$ we can interpolate between a risk-sensitive criterion and a robust point of view as explained in \cite{BaeuerleLeimcke2020}. The case of small $\alpha $ can be seen as maximizing the expectation with a bound on the variance and the case of large $\alpha$ can be seen as a robust optimization.

Next, we are going to formulate the dynamic optimization problem. 
We define the value functions, for any $(t,x)\in[0,T]\times\R$ and $(\xi,b)\in{\mathcal U}[t,T]$, by
\begin{equation}\label{eq:problem} 
\begin{aligned}
V^{\xi,b}(t,x) &:= \Eop^{t,x}\big[U(X^{\xi,b}_T)\big],  \\
V(t,x) &:= \sup_{(\xi,b)\in{\mathcal U}[t,T]}V^{\xi,b}(t,x).
\end{aligned}
\end{equation}
The expectation $\Eop$ is taken w.r.t.\ the  probability measure $\pi_\vartheta \otimes\Pop$ and $\Eop^{t,x}$ denotes the conditional expectation given $X^{\xi,b}_t=x$.

\section{A Model with Learning}\label{sec:learn}

The task is to reduce the control problem~\eqref{eq:problem} with partial information within the introduced framework to one with complete information, taken the observations into account. 

\subsection{Filtering}
By the Bayes rule, the posterior probability mass function of $\vartheta$ given the observation $\bar{Y}_n=\bar{y}_n$ with $\bar{Y}_n:=(Y_1,\ldots,Y_n)$ and $\bar{y}_n:= (y_1,\ldots,y_n)$ is 
\begin{equation}\label{posttheta}
\Pop(\vartheta=j|\bar{Y}_n=\bar{y}_n) 
= \frac{\pi_\vartheta(j)\prod_{i=1}^n f_j(y_i)}{\sum_{k=1}^m\pi_\vartheta(k)\prod_{i=1}^n f_k(y_i)},\quad j=1,\ldots,m.
\end{equation}
However, the solution method requires a dynamic representation of this posterior probability distribution given the information up to any time $t$. To achieve this, let us introduce the following notation.
Throughout this paper, we write
\begin{equation*}
  p_j(t) = \Pop(\vartheta=j|\mathcal{F}_t^\Psi),\quad t\ge0,\quad j=1,\ldots,m.
\end{equation*}
Moreover, let $(p_t)_{t\ge0}$ denote the $m$-dimensional process defined by 
\begin{equation*}
  p_t:=(p_1(t),\ldots,p_m(t)),\quad t\ge0.
\end{equation*}
We obtain the following representation of the process $(p_t)_{t\ge0}$ from \eqref{posttheta}:
\begin{equation}\label{pj}
p_j(t) = \pi_\vartheta(j) + \int_0^t\int_{(0,\infty)}\bigg(\frac{p_j(s-)\,f_j(y)}{\sum_{k=1}^m p_k(s-)\,f_k(y)}-p_j(s-)\bigg)\Phi(ds,dy),\quad j=1,\ldots,m.
\end{equation}
Note that $(p_t)_{t\ge0}$ is a pure jump process and the new state of $(p_t)$ at the jump time $T_n$ with jump sizes $Y_n$ is 
\begin{equation*}
p_{T_n} = J\big(p_{T_n-},Y_n\big),\quad n\in\N,
\end{equation*}
where
\begin{equation*}
J(p,y) := \left(\frac{f_1(y)\, p_1}{\sum_{k=1}^m f_k(y)\,p_k},\ldots,\frac{f_m(y)\,p_m}{\sum_{k=1}^m f_k(y)\,p_k}\right),
\end{equation*}
for $p=(p_1,\ldots,p_m)\in\Delta_m:=\{x\in\R_+^m:\sum_{k=1}^m x_i=1\}$ and $y\in(0,\infty)$.

\begin{proposition}\label{GintkernelPsi}
	The ${\mathfrak G}$-intensity kernel of $\Psi=(T_n,(Y_n,Z_n))$, denoted by  $\hat{\nu}(t,d(y,z))$, is given by
	\begin{equation*}
	\hat{\nu}(t,d(y,z)) = \lambda\sum_{k=1}^m p_k(t)f_k(y)dyQ(dz),\quad t\ge0.
	\end{equation*}
\end{proposition}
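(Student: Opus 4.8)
The plan is to compute the $\mathfrak{G}$-intensity kernel of $\Psi$ by first identifying the $\mathfrak{F}^\Psi$-intensity kernel and then arguing that passing to the larger filtration $\mathfrak{G}={\mathfrak F}^P\vee{\mathfrak F}^\Psi$ does not change it, since the additional information in ${\mathfrak F}^P$ is generated by the Brownian motion $W$ together with the jumps of $\Psi$ themselves, and $W$ is independent of $\Psi$. First I would recall that the marked point process $\Psi=(T_n,(Y_n,Z_n))$ is built from a Poisson process $N$ of intensity $\lambda$, an i.i.d.\ sequence $(Z_n)$ with law $Q$ independent of everything else, and marks $(Y_n)$ that, conditionally on $\vartheta$, are i.i.d.\ with density $f_\vartheta$; conditioning on $\vartheta$ the $\mathbb{F}^\Psi$-compensator of $\Psi$ is the deterministic kernel $\lambda f_\vartheta(y)\,dy\,Q(dz)\,dt$. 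Mixing over the posterior of $\vartheta$ then gives the $\mathfrak{F}^\Psi$-intensity kernel $\lambda\bigl(\sum_{k=1}^m p_k(t)f_k(y)\bigr)dy\,Q(dz)$, because $p_k(t)=\Pop(\vartheta=k\mid\mathcal F_t^\Psi)$ by definition and the innovations-type argument for filtered point processes (as in e.g.\ Br\'emaud, or the standard filtering references) shows that the $\mathfrak{F}^\Psi$-predictable projection of the conditionally-Poisson intensity is exactly this mixture.

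Second I would address the enlargement from $\mathfrak{F}^\Psi$ to $\mathfrak{G}$. The key observation is that $\mathcal{F}_t^P\subset\sigma\bigl((W_s)_{s\le t}\bigr)\vee\mathcal{F}_t^\Psi$: indeed, from the SDE for $P$ the risky-asset price at time $t$ is a measurable function of the Brownian path up to $t$ and of the jump times and jump marks of $\Psi$ up to $t$ (the integral $\int_E z\mathds 1_{(L,\infty)}(y)\Psi(d t,d(y,z))$ is $\mathcal F_t^\Psi$-measurable). Hence $\mathcal{G}_t\subset\mathcal{G}_t':=\sigma\bigl((W_s)_{s\le t}\bigr)\vee\mathcal F_t^\Psi$, and since $W$ is independent of $\Psi$ (hence of $\vartheta$ and of the whole driving randomness of $\Psi$), the law of the future of $\Psi$ given $\mathcal{G}_t'$ coincides with its law given $\mathcal F_t^\Psi$. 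By the tower property the $\mathfrak{G}$-compensator equals the $\mathfrak{F}^\Psi$-compensator: for any nonnegative $\mathfrak{G}$-predictable integrand one checks that the compensated integral is a $\mathfrak G$-martingale, using that predictable integrands against the independent noise $W$ do not affect the conditional intensity. Thus $\hat\nu(t,d(y,z))=\lambda\sum_{k=1}^m p_k(t)f_k(y)\,dy\,Q(dz)$.

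I expect the main obstacle to be the rigorous justification of the filtering/innovations step, i.e.\ showing that the $\mathfrak{F}^\Psi$-intensity is obtained by replacing $f_\vartheta$ with its posterior mixture $\sum_k p_k(t)f_k$; this is where one must invoke the general theory of filtering for marked point processes (the representation \eqref{pj} of $(p_t)$ is precisely the filter dynamics associated with this step) rather than a short direct computation. The enlargement argument, by contrast, is routine once the inclusion $\mathcal F_t^P\subset\sigma((W_s)_{s\le t})\vee\mathcal F_t^\Psi$ and the independence of $W$ are in place. A clean way to package everything is: (i) condition on $\vartheta$ to get the conditionally-Poisson kernel; (ii) use independence of $W$ to enlarge the filtration without changing the compensator; (iii) de-condition on $\vartheta$ via the posterior $p_t$, using that $p_k(t)$ is $\mathcal F_t^\Psi$-measurable (hence $\mathfrak G$-predictable, being left-continuous with right limits as a pure jump process) to conclude that the displayed kernel is indeed $\mathfrak{G}$-predictable as required of an intensity kernel.
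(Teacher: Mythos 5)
Your proof is correct and arrives at the right formula, but you organize the argument in the opposite direction from the paper. The paper starts from the enlarged filtration ${\mathfrak G}\vee\sigma(\vartheta)$, observes that there the intensity kernel of $\Psi$ is the known (conditionally deterministic) kernel $\lambda f_\vartheta(y)\,dy\,Q(dz)$, and then obtains the ${\mathfrak G}$-intensity in a single projection step by conditioning on $\mathcal{G}_t$, using that the posterior predictive claim density is $\sum_{k=1}^m p_k(t)f_k(y)$. You instead compute the $\mathfrak{F}^\Psi$-intensity first by the same Bayesian averaging over $\vartheta$, and then separately justify that the enlargement $\mathfrak{F}^\Psi\subset\mathfrak{G}$ leaves the compensator unchanged, via the inclusion $\mathcal{F}_t^P\subset\sigma\bigl((W_s)_{s\le t}\bigr)\vee\mathcal{F}_t^\Psi$ and the independence of $W$ from $\Psi$. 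Your two-stage version makes explicit a point the paper leaves implicit, namely that the price observations add no information about $\vartheta$ or the claim mechanism, so that $\Pop(\vartheta=k\mid\mathcal{G}_t)=\Pop(\vartheta=k\mid\mathcal{F}_t^\Psi)=p_k(t)$ and the kernel is genuinely $\mathfrak{G}$-predictable; the paper's one-shot conditioning from ${\mathfrak G}\vee\sigma(\vartheta)$ absorbs that observation into the claim that $\lambda f_\vartheta(y)\,dy\,Q(dz)$ is already the ${\mathfrak G}\vee\sigma(\vartheta)$-intensity (not merely the ${\mathfrak F}^\Psi\vee\sigma(\vartheta)$-intensity). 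Both routes ultimately rest on the same three facts: the conditionally Poisson structure given $\vartheta$, the Br\'emaud-type projection of intensity kernels onto a subfiltration, and the independence of $W$.
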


\begin{proof}
First note that $\hat{\nu}$ is a transition kernel. The  ${\mathfrak G}$-intensity is derived from the ${\mathfrak G} \vee \sigma(\vartheta)$-intensity kernel $\lambda f_\vartheta(y)dyQ(dz)$ by conditioning on ${\mathcal G}_t$ (see \cite{bre}). Note here in particular that the posterior predictive distribution of the claim sizes given the observed claims up to time $t$ is $\sum_{k=1}^m p_k(t)f_k(y)dy$.
\end{proof}

We denote by $\hat{\Psi}(dt, d(y,z))$ the compensated random measure given by
\begin{equation}\label{Psihat}
\hat{\Psi}(dt, d(y,z)) :=  \Psi(dt, d(y,z)) - \hat\nu(t,d(y,z))dt,
\end{equation}
where $\hat\nu$ is defined as in Proposition~\ref{GintkernelPsi}.
Thus, we obtain the following indistinguishable representation of the surplus process $(X^{\xi,b}_t)_{t\ge0}$:
\begin{equation}\label{wealth}
\begin{aligned}
d X^{\xi,b}_t = \bigg(&r X_t^{\xi,b} + (\mu-r)\xi_t + c(b_t) - \lambda\sum_{k=1}^m p_k(t)\big(b_t\mu_k + \xi_t \bar F_k(L)\Eop[Z]\big)\bigg)dt  \\
& + \xi_t\sigma dW_t - \int_E \big(b_t y + \xi_t z \mathds{1}_{(L,\infty)}(y)\big) \hat{\Psi}(dt, d(y,z)),\quad t\ge0,
\end{aligned}
\end{equation}
where $\mu_j:=\int_{(0,\infty)} y f_j(y)dy$, $\bar F_j$ denotes the survival function of $F_j$, $j=1,\ldots,m$, and $Z$ is a random variable with $Z\sim Z_1$. Note that all processes here are ${\mathfrak G}$-adapted.
This dynamic will be one part of the reduced control model discussed in the next section. 

\subsection{The Reduced Control Problem}
The process $(p_t)_{t\ge0}$ in \eqref{pj} carries all relevant information about the unknown parameter $\vartheta$ contained in the observable filtration ${\mathfrak G}$ of the insurer.  Therefore, the state process of the reduced control problem with complete observation is the $(m+1)$-dimensional process
\begin{equation*}
(X^{\xi,b}_s,p_s)_{s\in[t,T]},
\end{equation*}
where $(X^{\xi,b}_s)$ is given by \eqref{wealth} and $(p_s)$ is given by \eqref{pj} for some fixed initial time $t\in[0,T)$ and $\xib\in{\mathcal U}[t,T]$. 
We can now formulate the reduced control problem. For any $(t,x,p)\in[0,T]\times\R\times\Delta_m$, the value functions are given by
\begin{equation}\label{P} \tag{P}
\begin{aligned}
V^{\xi,b}(t,x,p) &:= \Eop^{t,x,p}\big[U(X^{\xi,b}_T)\big],\\
V(t,x,p) &:= \sup_{(\xi,b)\in{\mathcal U}[t,T]}V^{\xi,b}(t,x,p),
\end{aligned}
\end{equation}
where $ \Eop^{t,x,p}$ denotes the conditional expectation given $(X^{\xi,b}_t,p_t)=(x,p)$. An investment-reinsurance strategy $(\xi^\star,b^\star)\in{\mathcal U}[t,T]$ is optimal if
$V(t,x,p) = V^{\xi^\star,b^\star}(t,x,p).$
Note that by classical filtering results we have that $V(0,x,\pi_\vartheta)=V(0,x)$
(see e.g. \cite{BaeuerleRieder2007}).


\section{The Solution}\label{sec:sol}

\subsection{The HJB equation}\label{sec:HJB}

In a first step we derive the HJB equation for the value function $V$ using standard methods and assuming full differentiability of $V$, which results in
\begin{equation}\label{HJBV}
\begin{aligned}
&0=\sup_{(\xi,b)\in[-K,K]\times[0,1]} \bigg\{V_t(t,x,p) - \lambda V(t,x,p) +V_x(t,x,p)\big(rx + (\mu-r)\xi+c(b)\big)  \\
&+ \frac12\sigma^2V_{xx}(t,x,p)\xi^2 + \lambda \sum_{k=1}^m p_k\int_E V\big(t,x-(b y + z\xi\mathds{1}_{(L,\infty)}(y)),J(p,y)\big)f_k(y)dyQ(dz)\bigg\},
\end{aligned}
\end{equation}
For solving \eqref{HJBV} we apply the usual separation approach: For any $(t,x,p)\in[0,T]\times\R\times\Delta_m$, we assume
\begin{equation}\label{separation}
V(t,x,p) = -e^{-\alpha x e^{r(T-t)}}g(t,p)
\end{equation}
with $g\ge 0$. This implies that we conclude from~\eqref{HJBV}
\begin{equation}\label{HJBgdiff}
\begin{aligned}
0&=\inf_{(\xi,b)\in[-K,K]\times[0,1]} \bigg\{g_t(t,p) - \lambda g(t,p) - \alpha e^{r(T-t)}g(t,p)\Big((\mu-r)\xi + c(b) - \frac12\alpha \sigma^2 e^{r(T-t)}\xi^2\Big)  \\
&\quad+\lambda \sum_{k=1}^m p_k \int_0^\infty g(t,J(p,y))e^{\alpha b y e^{r(T-t)}} \int_{(0,1)} e^{\alpha \xi z \mathds{1}_{(L,\infty)}(y)e^{r(T-t)}}Q(dz) f_k(y)dy \bigg\}.
\end{aligned}
\end{equation}
However, $V$ is probably not differentiable w.r.t.\ $t$. Assuming $t\mapsto g(t,p)$ is Lipschitz on $[0,T]$ for all $p\in\Delta_m$, we can replace the partial derivative of $g$ w.r.t.\ $t$ by Clarke's generalized subdifferential (see appendix).
 Throughout, we denote by ${\mathcal L}$ an operator acting on functions $g:[0,T]\times\Delta_m\to(0,\infty)$ and $(\xi,b)\in [-K,K]\times[0,1]$ which is defined by
\begin{equation}\label{L}
{\mathcal L} g(t,p;\xi,b) :=  - \lambda g(t,p)+ \alpha e^{r(T-t)}g(t,p)(\theta-\eta)\kappa + \gamma(t,p,\xi,b),
\end{equation}
where
\begin{equation}\label{eq:gamma}
\begin{aligned}
 \gamma(t,p,\xi,b) := &-\alpha e^{r(T-t)} g(t,p)\Big((\mu-r)\xi -\frac12\alpha\sigma^2 e^{r(T-t)}\xi^2 + (1+\theta) \kappa b\Big) \\
&+\lambda\sum_{k=1}^m p_k \int_0^\infty g(t,J(p,y)) e^{\alpha b y e^{r(T-t)}}\int_{(0,1)}e^{\alpha \xi z \mathds{1}_{(L,\infty)}(y) e^{r(T-t)}}Q(dz)f_k(y)dy.
\end{aligned}
\end{equation}
Using this operator and replacing the partial derivative of $g$ w.r.t.\ $t$,  in~\eqref{HJBgdiff} by Clarke's generalized subdifferential, we get the generalized HJB equation for $g$:
\begin{equation}\label{HJBg}
0 = \inf_{(\xi,b)\in[-K,K]\times[0,1]}\big\{ {\mathcal L} g(t,p;\xi,b)\big\} + \inf_{\varphi\in\partial^C\! g_p(t)}\{\varphi\}
\end{equation}
for all $(t,p)\in[0,T)\times\Delta_m$ with boundary condition
\begin{equation}\label{HJBgbound}
g(T,p) = 1,\quad p\in\Delta_m.
\end{equation}
Note that  we set $\partial^C\! g_p(t)=\{g_p^\prime(t)\}$ at the points $t$ where the subdifferential exists.
The notation $g_p(t)$ indicates that the derivative is w.r.t.\ $t$ for fixed $p$.

\subsection{Candidate for an optimal strategy}
To obtain candidates for an optimal strategy, we have to minimize the function $\gamma$ given in \eqref{eq:gamma} w.r.t.\ $(\xi,b)$ for fixed $(t,p)$.
For this purpose we introduce the following notation:
\begin{equation*}
  M_Z(u) := \Eop\big[e^{uZ}\big], \quad u\in\R.
\end{equation*}
Notice that $M_Z^\prime(u)=\Eop\big[Z e^{uZ}\big]$ and $M_Z^{\prime\prime}(u)=\Eop\big[Z^2e^{uZ}\big]$ whenever they exist.

\begin{lemma}\label{gamma}
	For any $(t,p)\in[0,T]\times\Delta_m$, the function $\R^2\ni (\xi,b)\mapsto \gamma(t,p,\xi,b)$ is strictly convex and
	\begin{align*}
	\frac{\partial}{\partial \xi}\gamma(t,p,\xi,b) &= -\alpha e^{r(T-t)}g(t,p)\big((\mu-r)-\alpha\sigma^2 e^{r(T-t)}\xi\big) \\
   &\quad + \lambda\,\alpha\,e^{r(T-t)}\sum_{k=1}^m p_k \int_L^\infty g(t,J(p,y))e^{\alpha b y e^{r(T-t)}}f_k(y)dy\,M_Z^\prime(\alpha\,e^{r(T-t)}\xi),\\
   \frac{\partial}{\partial b}\gamma(t,p,\xi,b) &= -\alpha\,e^{r(T-t)}g(t,p)\,(1+\theta)\kappa \\
   &\quad + \lambda\alpha e^{r(T-t)}\sum_{k=1}^m p_k\!\!\int_0^\infty\!\!\! yg(t,J(p,y))e^{\alpha b y e^{r(T-t)}}\!\!\int_{(0,1)}\!\!\!e^{\alpha \xi z\mathds{1}_{(L,\infty)}(y) e^{r(T-t)}}Q(dz)f_k(y)dy.
	\end{align*}
\end{lemma}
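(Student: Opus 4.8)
The plan is to verify the two claimed formulas by direct differentiation of the explicit expression~\eqref{eq:gamma} for $\gamma$, and then to establish strict convexity by examining the Hessian. Since $\gamma(t,p,\xi,b)$ is, for fixed $(t,p)$, a sum of a quadratic polynomial in $\xi$ and $b$ plus an integral of terms of the form $e^{\alpha b y e^{r(T-t)}}\int_{(0,1)}e^{\alpha\xi z\mathds{1}_{(L,\infty)}(y)e^{r(T-t)}}Q(dz)$ against the finite measure $\lambda\sum_k p_k f_k(y)\,dy$, the differentiations under the integral sign are justified by standard dominated-convergence arguments, using that the $M_j$ are finite on all of $\R$ (so exponential moments of $Y$ under each $f_j$ exist) and that $z\in(0,1)$ is bounded, so $M_Z(u)=\Eop[e^{uZ}]$ and its derivatives exist and are finite for every $u\in\R$. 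I would first record this integrability remark, then differentiate term by term: the polynomial part of $\partial_\xi\gamma$ gives $-\alpha e^{r(T-t)}g(t,p)\big((\mu-r)-\alpha\sigma^2 e^{r(T-t)}\xi\big)$, and differentiating the integral term in $\xi$ brings down a factor $\alpha z\mathds{1}_{(L,\infty)}(y)e^{r(T-t)}$, so the $z$-integral becomes $\alpha e^{r(T-t)}M_Z'(\alpha e^{r(T-t)}\xi)$ on $\{y>L\}$ and vanishes on $\{y\le L\}$, which collapses $\int_0^\infty$ to $\int_L^\infty$ and yields exactly the stated expression. The $b$-derivative is analogous but simpler, since $b$ enters only through $e^{\alpha b y e^{r(T-t)}}$, producing the factor $\alpha y e^{r(T-t)}$ and leaving the $z$-integral untouched; the polynomial part contributes $-\alpha e^{r(T-t)}g(t,p)(1+\theta)\kappa$.

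For strict convexity, the cleanest route is to compute the second-order partials and show the Hessian is positive definite. The pure second derivative $\partial_{\xi\xi}\gamma$ has the quadratic contribution $\alpha^2\sigma^2 e^{2r(T-t)}g(t,p)>0$ plus a nonnegative integral term involving $M_Z''(\alpha e^{r(T-t)}\xi)=\Eop[Z^2 e^{\cdots Z}]\ge 0$; since $\sigma>0$ and $g>0$, this is strictly positive. The derivative $\partial_{bb}\gamma$ equals the integral $\lambda\alpha^2 e^{2r(T-t)}\sum_k p_k\int_0^\infty y^2 g(t,J(p,y))e^{\alpha b y e^{r(T-t)}}\int_{(0,1)}e^{\alpha\xi z\mathds{1}_{(L,\infty)}(y)e^{r(T-t)}}Q(dz)f_k(y)\,dy$, which is strictly positive because the integrand is strictly positive and the measure is nontrivial. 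The cross term $\partial_{\xi b}\gamma$ comes only from the integral over $\{y>L\}$ and equals $\lambda\alpha^2 e^{2r(T-t)}\sum_k p_k\int_L^\infty y g(t,J(p,y))e^{\alpha b y e^{r(T-t)}}M_Z'(\alpha e^{r(T-t)}\xi)f_k(y)\,dy$. To see the Hessian is positive definite it then suffices to apply the Cauchy--Schwarz inequality to the integral representations: interpreting the relevant integrals as inner products in $L^2$ of the (finite) measure $\lambda\sum_k p_k e^{\alpha b y e^{r(T-t)}}(\cdots)f_k(y)\,dy$ restricted to $\{y>L\}$, one gets $(\partial_{\xi b}\gamma)^2\le (\text{$b$-integral part of }\partial_{\xi\xi})\cdot\partial_{bb}\gamma$, and the strict positivity of the extra quadratic term $\alpha^2\sigma^2 e^{2r(T-t)}g(t,p)$ in $\partial_{\xi\xi}\gamma$ makes the determinant strictly positive. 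Alternatively, one can observe directly that $(\xi,b)\mapsto e^{\alpha b y e^{r(T-t)}+\alpha\xi z\mathds{1}_{(L,\infty)}(y)e^{r(T-t)}}$ is convex (composition of an affine map with $\exp$), so its integral against a nonnegative measure is convex, and adding the strictly convex quadratic $\tfrac12\alpha^2\sigma^2 e^{2r(T-t)}g(t,p)\xi^2$ (with the linear terms irrelevant to convexity) yields strict convexity in $\xi$; combined with strict convexity in $b$ from the $y^2$ term this gives joint strict convexity.

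I would present the Cauchy--Schwarz/Hessian argument as the main one, since it gives \emph{joint} strict convexity cleanly. The only genuinely delicate point is the differentiation under the integral sign and the resulting integrability bounds: one must check that, locally uniformly in $(\xi,b)$ on any bounded set (in particular on $[-K,K]\times[0,1]$, which is all that is needed for the application in~\eqref{HJBg}), the integrands and their $(\xi,b)$-derivatives are dominated by fixed integrable functions. This follows because $e^{\alpha b y e^{r(T-t)}}\le e^{\alpha y e^{r T}}$ for $b\in[0,1]$ and $e^{\alpha\xi z e^{r(T-t)}}\le e^{\alpha K e^{rT}}$ for $|\xi|\le K$, $z\in(0,1)$, so each term is controlled by a constant multiple of $(1+y+y^2)e^{\alpha y e^{rT}}f_k(y)$, which is integrable precisely because $M_k(z)<\infty$ for all $z\in\R$; the $z$-integrals are trivially bounded since $z\in(0,1)$. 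Everything else is routine bookkeeping: splitting $\int_0^\infty$ at $L$ where the indicator $\mathds{1}_{(L,\infty)}(y)$ turns the $z$-derivative on or off, and recognizing $M_Z'(\cdot)$ and $M_Z''(\cdot)$ in the $z$-integrals.
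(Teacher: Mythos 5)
Your main line of argument — compute the gradient and Hessian of $\gamma$, observe that the $\sigma^2 g$ contribution to $\partial^2_\xi\gamma$ is strictly positive, and use Cauchy--Schwarz to control the cross term — is correct and is essentially the paper's proof. The paper packages the Hessian as $\alpha^2 e^{2r(T-t)}\big(A+\lambda\sum_k p_k B_k\big)$ with $A$ the strictly positive-definite diagonal part (the $\sigma^2 g$ term plus the $\{y\le L\}$ contribution to $\partial^2_b\gamma$) and each $2\times2$ matrix $B_k$ positive semidefinite by $(M_Z')^2\le M_Z''M_Z$; you instead keep the three Hessian entries and argue via the determinant, combining $(M_Z')^2\le M_Z''M_Z$ with the integral Cauchy--Schwarz bound $\big(\int_L^\infty y\,d\nu\big)^2\le \nu\big((L,\infty)\big)\int_L^\infty y^2\,d\nu$ for the weight $\nu(dy)=\sum_k p_k g(t,J(p,y))e^{\alpha b y e^{r(T-t)}}f_k(y)\,dy$. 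These are the same two inequalities the paper needs, and your packaging is in fact the cleaner one: the integral Cauchy--Schwarz makes it transparent why the off-diagonal term is dominated, whereas the intermediate step as displayed in the paper (dropping the $y$ and $y^2$ weights inside the cross and corner entries, citing $L>1$) is not a valid pointwise bound when $x_1x_2<0$, even though the conclusion $B_k\succeq 0$ is correct.

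Two minor points. The lemma asserts strict convexity on all of $\R^2$, so the domination justifying differentiation under the integral should be stated on arbitrary compact subsets of $\R^2$, not only on $[-K,K]\times[0,1]$; this is immediate from $M_j(z)<\infty$ for every $z\in\R$ and from $z\in(0,1)$ being bounded, but it is worth saying so explicitly. Also, your ``alternative'' argument at the end is not valid as phrased: joint convexity together with strict convexity in each variable separately does \emph{not} imply joint strict convexity (consider $(\xi,b)\mapsto(\xi+b)^2$, which is strictly convex in each variable for the other fixed, jointly convex, yet constant along $\xi+b=0$). To make that route rigorous you would still have to exclude a flat direction, i.e.\ control the cross term, which is precisely what the Hessian/determinant argument does — so you are right to present that as the main one.
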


\begin{proof}
   A straightforward calculation yields the announced partial derivatives and
	\begin{align*}
   \frac{\partial^2\gamma(t,p,\xi,b)}{\partial \xi^2} &= \alpha^2\sigma^2e^{2r(T-t)}g(t,p) \\
   &\quad + \lambda\alpha^2e^{2r(T-t)}\sum_{k=1}^m p_k\int_L^\infty\! g(t,J(p,y))e^{\alpha b y e^{r(T-t)}}f_k(y)dy\,M_Z^{\prime\prime}(\alpha e^{r(T-t)}\xi),\\
\frac{\partial^2\gamma(t,p,\xi,b)}{\partial b^2} &= \lambda\alpha^2e^{2r(T-t)}\sum_{k=1}^m p_k\bigg(\int_0^L y^2g(t,J(p,y))e^{\alpha b y e^{r(T-t)}}f_k(y)dy\\
&\quad +\int_L^\infty y^2g(t,J(p,y))e^{\alpha b y e^{r(T-t)}}f_k(y)dy M_Z(\alpha e^{r(T-t)}\xi)\bigg), \\
\frac{\partial^2\gamma(t,p,\xi,b)}{\partial b\partial\xi} &= \lambda\alpha^2e^{2r(T-t)}\sum_{k=1}^m p_k\int_L^\infty yg(t,J(p,y))e^{\alpha b y e^{r(T-t)}}f_k(y)dy M_Z^\prime(\alpha e^{r(T-t)}\xi).
\end{align*}
Therefore, the Hessian matrix $H_\gamma$ of $\gamma$ w.r.t.\ $(\xi,b)$ is given by
\begin{equation*}
H_\gamma=\alpha^2e^{2r(T-t)}\Big(A+\lambda\sum_{k=1}^m p_k B_k\Big)
\end{equation*}
with
\begingroup
\renewcommand*{\arraystretch}{1.4}
\begin{equation*}
A:=\begin{pmatrix}
\sigma^2g(t,p) & 0 \\
0 & \lambda\sum_{k=1}^m p_k\int_0^L y^2g(t,J(p,y))e^{\alpha b y e^{r(T-t)}}f_k(y)dy
\end{pmatrix}
\end{equation*}
and
\begin{equation*}
B_k:=\begin{pmatrix}
a_k &  b_k \\
b_k & c_k
\end{pmatrix}
\end{equation*}
\endgroup
with
\begin{align*}
a_k &:= \int_L^\infty g(t,J(p,y))e^{\alpha b y e^{r(T-t)}}f_k(y)dy M_Z^{\prime\prime}(\alpha e^{r(T-t)}\xi),\\
b_k &:= \int_L^\infty yg(t,J(p,y))e^{\alpha b y e^{r(T-t)}}f_k(y)dy M_Z^\prime(\alpha e^{r(T-t)}\xi),\\
c_k &:= \int_L^\infty y^2g(t,J(p,y))e^{\alpha b y e^{r(T-t)}}f_k(y)dy M_Z(\alpha e^{r(T-t)}\xi).
\end{align*}
for $k=1,\ldots,m$. To prove the convexity of $(x,y)\mapsto \gamma(t,p,\xi,b)$, it is sufficient to show that $H_\gamma$ is positive definite. Clearly, $A$ is positive definite. Moreover, for any $k\in\{1,\ldots,m\}$ and $\bar{x}=(x_1,x_2)\in\R^2\setminus\{0\}$, it holds (since $L>1$ and $(M_Z^\prime)^2 \le M_Z^{\prime\prime} M_Z$ by the Cauchy-Schwarz inequality)
\begin{align*}
&\bar{x} B_k \bar{x}^\top = x_1^2 a_k + 2x_1x_2b_k + x_2^2c_k \\
 &\ge \int_L^\infty \!g(t,J(p,y))e^{\alpha b y e^{r(T-t)}}f_k(y)dy\Big(x_1^2 M_Z^{\prime\prime}(\alpha e^{r(T-t)}\xi) \\
 &\quad + x_2^2 M_Z(\alpha e^{r(T-t)}\xi) + 2x_1x_2 M_Z^\prime(\alpha e^{r(T-t)}\xi)\Big) \\
&\ge \int_L^\infty \!g(t,J(p,y))e^{\alpha b y e^{r(T-t)}}f_k(y)dy\bigg(x_1^2 \frac{\big(M_Z^{\prime}(\alpha e^{r(T-t)}\xi)\big)^2}{M_Z(\alpha e^{r(T-t)}\xi)} + x_2^2 M_Z(\alpha e^{r(T-t)}\xi) \\
&\quad + 2x_1x_2 M_Z^\prime(\alpha e^{r(T-t)}\xi)\bigg) \\
&= \int_L^\infty \!g(t,J(p,y))e^{\alpha b y e^{r(T-t)}}f_k(y)dy \bigg(x_1 \frac{M_Z^{\prime}(\alpha e^{r(T-t)}\xi)}{\sqrt{M_Z(\alpha e^{r(T-t)}\xi)}} + x_2\sqrt{M_Z(\alpha e^{r(T-t)}\xi)}\bigg)^2>0.
\end{align*}
Consequently, $H_\gamma$ is positive definite.
\end{proof}

Setting $\nabla\gamma$ to zero, we obtain the following first order condition for the candidate of an optimal strategy in case $g>0$:
\begin{equation}\label{foc}
\begin{aligned}
v_1(t,p,\xi,b) &=\mu-r, \\
v_2(t,p,\xi,b) &= (1+\theta)\kappa,
\end{aligned}
\end{equation}
where
\begin{align*}
v_1(t,p,\xi,b) &:= \alpha\sigma^2 e^{r(T-t)}\xi+ \lambda\sum_{k=1}^m p_k\int_L^\infty \frac{g(t,J(p,y))}{g(t,p)}e^{\alpha b y e^{r(T-t)}}f_k(y)dy\,M_Z^\prime(\alpha\,e^{r(T-t)}\xi),\\
v_2(t,p,\xi,b) &:= \lambda\sum_{k=1}^m p_k\int_0^\infty y\frac{g(t,J(p,y))}{g(t,p)}e^{\alpha b y e^{r(T-t)}}\int_{(0,1)} e^{\alpha\xi z\mathds{1}_{(L,\infty)}(y)e^{r(T-t)}}Q(dz)f_k(y)dy.
\end{align*}
The next proposition states that this system of equations is solvable.

\begin{proposition}\label{candidates}
	For any $(t,p)\in[0,T]\times\Delta_m$, \eqref{foc} has a unique root w.r.t.\ $(\xi,b)$, denoted by $r(t,p):=(r_1(t,p),r_2(t,p))$, where $r_2(t,p)$ is increasing w.r.t.\ the safety loading parameter $\theta$ of the reinsurer. Moreover,
	it holds, 
	\begin{enumerate}
		\item $r_2(t,p) \le 0$ if $(1+\theta)\kappa \le A(t,p)$,
		\item $0< r_2(t,p) <1$ if $A(t,p) < (1+\theta)\kappa < B(t,p)$,
		\item $r_2(t,p)\ge 1$ if $(1+\theta)\kappa \ge B(t,p)$,
		\item $r_1(t,p)$ is decreasing with $r_2(t,p)$,
	\end{enumerate}
with
\begin{align*}
A(t,p) &:= v_2(t,p,r_1(t,p),0), \\
B(t,p) &:= v_2(t,p,r_1(t,p),1).
\end{align*}
\end{proposition}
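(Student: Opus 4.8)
The plan is to leverage the strict convexity of $\gamma$ from Lemma~\ref{gamma} to reduce the two-dimensional system \eqref{foc} to a one-dimensional monotone equation. First I would note that, by the identities behind \eqref{L}--\eqref{eq:gamma}, $\partial_\xi\gamma(t,p,\xi,b)=\alpha e^{r(T-t)}g(t,p)\big(v_1(t,p,\xi,b)-(\mu-r)\big)$ and $\partial_b\gamma(t,p,\xi,b)=\alpha e^{r(T-t)}g(t,p)\big(v_2(t,p,\xi,b)-(1+\theta)\kappa\big)$, so when $g>0$ the roots of \eqref{foc} are exactly the critical points of the strictly convex map $(\xi,b)\mapsto\gamma(t,p,\xi,b)$; strict convexity then permits at most one such point. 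Existence follows as soon as this map is coercive on $\R^2$: in the $\xi$-direction the positive quadratic term $\tfrac12\alpha^2\sigma^2 e^{2r(T-t)}g(t,p)\xi^2$ forces growth, while in the $b$-direction the linear term $-\alpha e^{r(T-t)}g(t,p)(1+\theta)\kappa\,b$ drives $\gamma\to+\infty$ as $b\to-\infty$ and the integral in \eqref{eq:gamma}, which carries the factor $e^{\alpha b y e^{r(T-t)}}$ and is finite for every $b$ by the standing assumption $M_j(z)<\infty$ for all $z\in\R$, drives $\gamma\to+\infty$ as $b\to+\infty$ (boundedness of $g$ on the compact set $[0,T]\times\Delta_m$ keeps all integrals controlled). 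Hence $(\xi,b)\mapsto\gamma(t,p,\xi,b)$ has a unique minimizer $r(t,p)=(r_1(t,p),r_2(t,p))$, which is the unique root of \eqref{foc}.

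Next I would decouple the two equations. By the Hessian computation in the proof of Lemma~\ref{gamma}, for fixed $(t,p)$ the derivative $\partial_\xi v_1$ is the $(1,1)$-entry of that Hessian divided by $\alpha e^{r(T-t)}g(t,p)$ and is strictly positive (because $\sigma>0$ and $M_Z''>0$), and $v_1(t,p,\cdot,b)$ is onto $\R$ thanks to the summand $\alpha\sigma^2 e^{r(T-t)}\xi$; hence the first equation of \eqref{foc} has for each $b\in\R$ a unique solution $\xi=\phi_{t,p}(b)$, this map does not depend on $\theta$, it satisfies $r_1(t,p)=\phi_{t,p}(r_2(t,p))$, and by the implicit function theorem it is differentiable with $\phi_{t,p}'=-\partial_b v_1/\partial_\xi v_1\le0$, since $\partial_b v_1=\partial_\xi v_2\ge0$ (the mixed entry of the Hessian involves $M_Z'>0$); this already yields assertion~(d). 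Setting $\psi_{t,p}(b):=v_2(t,p,\phi_{t,p}(b),b)$, the component $r_2(t,p)$ is exactly the solution of $\psi_{t,p}(b)=(1+\theta)\kappa$, and $\psi_{t,p}'(b)=\partial_\xi v_2\cdot\phi_{t,p}'(b)+\partial_b v_2=\big(\partial_\xi v_1\,\partial_b v_2-\partial_\xi v_2\,\partial_b v_1\big)/\partial_\xi v_1$; the numerator is the determinant of the Jacobian of $(v_1,v_2)$ in $(\xi,b)$, and since this Jacobian is the Hessian of $\gamma$ in $(\xi,b)$ divided by the positive scalar $\alpha e^{r(T-t)}g(t,p)$, it is positive definite by Lemma~\ref{gamma}; therefore $\psi_{t,p}'>0$ and $\psi_{t,p}$ is strictly increasing.

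The remaining statements follow immediately from the strict monotonicity of $\psi_{t,p}$. Since $r_2(t,p)=\psi_{t,p}^{-1}\big((1+\theta)\kappa\big)$ with $\psi_{t,p}^{-1}$ increasing, $r_2(t,p)$ increases with $\theta$. Writing $A(t,p)=v_2(t,p,\phi_{t,p}(0),0)=\psi_{t,p}(0)$ and $B(t,p)=v_2(t,p,\phi_{t,p}(1),1)=\psi_{t,p}(1)$ — which coincide with $v_2(t,p,r_1(t,p),0)$ and $v_2(t,p,r_1(t,p),1)$ at the value of $\theta$ for which $r_2(t,p)=0$, respectively $r_2(t,p)=1$ — strict monotonicity of $\psi_{t,p}$ gives $r_2(t,p)\le0\iff(1+\theta)\kappa=\psi_{t,p}(r_2(t,p))\le\psi_{t,p}(0)=A(t,p)$, and analogously $0<r_2(t,p)<1\iff A(t,p)<(1+\theta)\kappa<B(t,p)$ and $r_2(t,p)\ge1\iff(1+\theta)\kappa\ge B(t,p)$, i.e.\ (a)--(c).

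The step I expect to be the main obstacle is the coercivity of $(\xi,b)\mapsto\gamma(t,p,\xi,b)$, specifically the mixed directions with $b\to+\infty$ and $\xi\to-\infty$: there the exponential blow-up of $e^{\alpha b y e^{r(T-t)}}$ must be weighed against the possible decay of $M_Z(\alpha e^{r(T-t)}\xi)$, which one settles by keeping the mass of the $F_k$ on $(0,L]$ (where the inner $Q$-integral equals one and $\xi$ drops out) in view, or else by tracking the growth rates and using $M_j(z)<\infty$ for all $z$. Everything after the existence step is routine once the positive-definiteness of the Hessian of $\gamma$ from Lemma~\ref{gamma} is in hand.
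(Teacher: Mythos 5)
Your overall strategy is the same as the paper's: use the strict convexity from Lemma~\ref{gamma} together with coercivity of $(\xi,b)\mapsto\gamma(t,p,\xi,b)$ to obtain a unique minimizer, and then reduce \eqref{foc} to monotone one-dimensional considerations. The paper states the four coordinate limits of $\gamma$ without comment; your elaboration via the $\xi^2$-term and the split of the integral over $(0,L]$ versus $(L,\infty)$ is sound (note also that the positive quadratic in $\xi$ already dominates every linear term along any ray, so the ``bad'' mixed directions you worry about are in fact harmless). Your implicit-function step for $\phi_{t,p}$, the identity $\partial_b v_1=\partial_\xi v_2$ from symmetry of the Hessian, and $\psi_{t,p}'=\det J/\partial_\xi v_1>0$ are all correct, and they give a more systematic derivation of the monotonicity of $r_2$ in $\theta$ and of (d) than the paper's one-line remark.

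The one genuine (if easily repaired) gap is in (a)--(c). The proposition defines $A(t,p)=v_2(t,p,r_1(t,p),0)$ and $B(t,p)=v_2(t,p,r_1(t,p),1)$, with $r_1(t,p)$ the first component of the actual root; these quantities therefore depend on $\theta$. You instead set $A(t,p)=\psi_{t,p}(0)=v_2(t,p,\phi_{t,p}(0),0)$ and $B(t,p)=\psi_{t,p}(1)$, which agree with the paper's $A,B$ only at the boundary values of $\theta$ (where $r_2=0$ or $r_2=1$), as you yourself note parenthetically --- but you then proceed to prove the equivalences $r_2\le 0\iff(1+\theta)\kappa\le\psi_{t,p}(0)$ etc., which are statements about your redefined thresholds, not the paper's. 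Both sets of equivalences are true (each is ``iff $r_2\le0$''), but establishing that fact is precisely what your argument omits. The intended --- and simpler --- proof of (a)--(c) with the paper's $A,B$ is the direct one the paper sketches: since $b\mapsto v_2(t,p,r_1,b)$ is strictly increasing and $v_2(t,p,r_1,r_2)=(1+\theta)\kappa$, one has $r_2\le 0\iff(1+\theta)\kappa\le v_2(t,p,r_1,0)=A(t,p)$ and $r_2\ge 1\iff(1+\theta)\kappa\ge v_2(t,p,r_1,1)=B(t,p)$, with (b) as the complement. Replacing your $\psi$-based conclusion of (a)--(c) by this two-line monotonicity argument (while keeping your $\psi$-based derivation of the $\theta$-monotonicity of $r_2$, which is the cleanest part of your write-up) closes the gap.
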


\begin{proof}
   Due to the strict convexity of $\gamma$ according to Lemma \ref{gamma} and
   \begin{equation*}
   \lim_{\xi\to-\infty}\gamma(t,p,\xi,b) = \lim_{\xi\to+\infty}\gamma(t,p,\xi,b) = \lim_{b\to-\infty}\gamma(t,p,\xi,b) = \lim_{b\to+\infty}\gamma(t,p,\xi,b) = \infty,
   \end{equation*}
   there exists a unique minimizer of the function $\gamma$ w.r.t.\ $(\xi,b)$ for fixed $(t,p)$, i.e.\ \eqref{foc} has a unique root denoted by $r(t,p):=(r_1(t,p),r_2(t,p))$.
   Note that $\R\ni b\mapsto v_2(t,p,\xi,b)$ is strictly increasing and thus $A(t,p)<B(t,p)$. Then statements (a), (b) and (c) follow from considering the zeros of \eqref{foc} in $\theta$ when $b=0$ and when $b=1$. For (d) note that $\xi, b\mapsto v_1(t,p,\xi,b)$ are both increasing.
\end{proof}	

The proposition above provides the candidate for an optimal investment-reinsurance strategy. Let $K$ be large s.t. $|r_1(t,p)|\le K$ for all $t\in [0,T], p\in \Delta_m$.
For any $(t,p)\in[0,T]\times\Delta_m$, we set
\begin{equation*}
b(t,p) := \begin{cases} 0, & \theta\le A(t,p)/\kappa-1,\\
1, &\theta\ge B(t,p)/\kappa-1, \\
r_2(t,p), &\text{otherwise}.
\end{cases}
\end{equation*}
Then the candidate for an optimal investment-reinsurance strategy $(\xi^\star,b^\star)=(\xi^\star_t,b^\star_t)_{t\ge[0,T]}$ is given by 
\begin{equation*}
  b^\star_t := b(t,p_{t-}) \mbox{ and } \xi^\star_t := r_1(t,p_{t-}),
\end{equation*}
the latter equation only holds if $A(t,p_{t-})<(1+\theta)\kappa < B(t,p_{t-})$. If $b_t^\star=0$ or $b_t^\star=1$, then we have to find the minimum point of $\gamma$ on $(-\infty, \infty)\times [0,1]$. In the case $b_t^\star=0$, $\xi_t^\star$ may deviate from $r_1(t,p_{t-})$.
We have to solve $v_1(t,p,\xi,0)=\mu-r$ here, which unique root w.r.t.\ $\xi$ is denoted by $a_0(t,p)$.
Similarly, we denote by $a_1(t,p)$ the unique root w.r.t.\ $\xi$ of $v_1(t,p,\xi,1)=\mu-r$.
Setting
\begin{equation*}
z(t,p) := \begin{cases} (a_0(t,p),0), & \theta\le A(t,p)/\kappa-1,\\
(a_1(t,p),1), &\theta\ge B(t,p)/\kappa-1, \\
r(t,p), &\text{otherwise},
\end{cases}
\end{equation*}
we obtain the following representation of the candidate for an optimal investment-reinsurance strategy $(\xi^\star,b^\star)=(\xi^\star_t,b^\star_t)_{t\in[0,T]}$:
\begin{equation}\label{optstr}
(\xi^\star_t,b^\star_t) := z(t,p_{t-}),\quad t\in[0,T].
\end{equation}
Notice that the strategy $(\xi^\star,b^\star)$ can only jump at the claim arrival times due to the dependency on the filter process $(p_t)_{t\ge0}$. 

\subsection{Verification}
This section is devoted to a verification theorem to ensure that the solution of the stated generalized HJB equation yields the value function (see Theorem~\ref{veri}). We also demonstrate an existence theorem of the solution of the HJB equation (see Theorem~\ref{existenceHJB}).  Both proofs can be found in the appendix.

\begin{theorem}\label{veri}
	Suppose there exists a bounded function $h:[0,T]\times\Delta_m\to(0,\infty)$ such that $t\mapsto h(t,p)$ is Lipschitz on $[0,T]$ for all $p\in\Delta_m$, $p\mapsto h(t,p)$ is continuous on $\Delta_m$ for all $t\in[0,T]$ and $h$ satisfies the generalized HJB equation \eqref{HJBg}	for all $(t,p)\in[0,T)\times\Delta_m$ with boundary condition
	\begin{equation}\label{hHJBbcond}
	h(T,p) = 1,\quad p\in\Delta_m.
	\end{equation}
	Then
	\begin{equation*}
	V(t,x,p)= -e^{-\alpha x e^{r(T-t)}}h(t,p),\quad (t,x,p)\in[0,T]\times\R\times\Delta_m,
	\end{equation*}
	and  $(\xi^\star,b^\star)=(\xi^\star_s,b^\star_s)_{s\in[t,T]}$ with $(\xi^\star_s,b^\star_s)$ given by~\eqref{optstr} (with $g$ replaced by $h$ in $A(s,p)$ and $B(s,p)$) is an optimal feedback strategy for the given optimization problem~\eqref{P}, i.e.\ $V(t,x,p) = V^{\xi^\star,b^\star}(t,x,p)$.
\end{theorem}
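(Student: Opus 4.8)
The plan is to run a standard stochastic-control verification argument adapted to the non-smooth (Clarke-gradient) setting. Fix $(t,x,p)\in[0,T]\times\R\times\Delta_m$ and write $\tilde V(s,x,p):=-e^{-\alpha xe^{r(T-s)}}h(s,p)$. First I would show that $\tilde V$ is an upper bound for $V^{\xi,b}(t,x,p)$ over all admissible $\xib\in{\mathcal U}[t,T]$, and then that equality holds for the feedback strategy $(\xi^\star,b^\star)$. For the upper-bound half, apply the It\^o formula for jump-diffusions to $s\mapsto \tilde V(s,X^{\xi,b}_s,p_s)$ along the dynamics \eqref{wealth} and \eqref{pj}: the continuous martingale part is $\int \tilde V_x \xi_s\sigma\,dW_s$ and the jump-martingale part comes from $\hat\Psi$. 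The drift terms, after using the explicit exponential form of $\tilde V$, collapse to precisely $-e^{-\alpha X^{\xi,b}_se^{r(T-s)}}\,{\mathcal L}h(s,p_s;\xi_s,b_s)$ plus the $t$-derivative contribution of $h$. Because $s\mapsto h(s,p)$ is only Lipschitz, not $C^1$, at this point I would invoke the version of It\^o/Lebesgue differentiation valid for Lipschitz functions (as recorded in the appendix on the Clarke gradient): for a.e.\ $s$ the time-derivative exists and lies in $\partial^C h_p(s)$, so the corresponding integrand is bounded below by $\inf_{\varphi\in\partial^Ch_p(s)}\varphi$. Combining with the fact that $h$ solves \eqref{HJBg}, the total drift is $\ge 0$ pointwise, hence $s\mapsto \tilde V(s,X^{\xi,b}_s,p_s)$ is a submartingale (after checking the local-martingale parts are true martingales, which follows from $|\xi_s|\le K$, boundedness of $h$, and finiteness of the moment generating functions $M_j$, $M_Z$ — the bound $L>1$ and the exponential integrability hypotheses are what make all the $\int_L^\infty$ terms finite). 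Taking expectations and using the terminal condition $h(T,\cdot)=1$ together with $\tilde V(T,x,p)=U(x)$ gives $\tilde V(t,x,p)\ge \Eop^{t,x,p}[U(X^{\xi,b}_T)]=V^{\xi,b}(t,x,p)$, and taking the supremum yields $\tilde V\ge V$.

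For the reverse inequality I would repeat the computation with the feedback controls $\xi^\star_s=r_1(s,p_{s-})$-type and $b^\star_s$ from \eqref{optstr}. By construction these are exactly the minimizers of $(\xi,b)\mapsto{\mathcal L}h(s,p;\xi,b)$ on $[-K,K]\times[0,1]$ — here I use Proposition \ref{candidates} (with $g$ replaced by $h$): when the unconstrained root $r(s,p)$ has second coordinate in $(0,1)$ the infimum in \eqref{HJBg} is attained at $(\xi^\star,b^\star)=r(s,p)$, and in the boundary cases $b^\star\in\{0,1\}$ the minimizer over the strip is $(a_0(s,p),0)$ or $(a_1(s,p),1)$, which is precisely how $z(s,p)$ was defined. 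Hence along this strategy $\inf_{(\xi,b)}{\mathcal L}h(s,p_s;\xi,b) = {\mathcal L}h(s,p_s;\xi^\star_s,b^\star_s)$, and the drift of $s\mapsto\tilde V(s,X^{\xi^\star,b^\star}_s,p_s)$ is $\le 0$ a.e.; combined with the submartingale property from the HJB inequality in the other direction one gets that this process is in fact a martingale. Before concluding I must check that $(\xi^\star,b^\star)$ is genuinely admissible in the sense of Definitions \ref{def:investment}–\ref{def:reinsurance}: it is $[-K,K]$- resp.\ $[0,1]$-valued by the choice of $K$ after Proposition \ref{candidates}, it is ${\mathfrak G}$-predictable because it is a measurable function of $p_{s-}$, and it is c\`adl\`ag because $p$ is a pure-jump process jumping only at the $T_n$; continuity of $p\mapsto h(s,p)$ (hence of $r_1,r_2,a_0,a_1,A,B$) is what makes $s\mapsto z(s,p_{s-})$ well-behaved. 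Taking expectations in the martingale identity gives $\tilde V(t,x,p)=\Eop^{t,x,p}[U(X^{\xi^\star,b^\star}_T)]=V^{\xi^\star,b^\star}(t,x,p)\le V(t,x,p)$.

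Putting the two inequalities together yields $V(t,x,p)=\tilde V(t,x,p)=-e^{-\alpha xe^{r(T-t)}}h(t,p)$ and simultaneously identifies $(\xi^\star,b^\star)$ as optimal, which is the assertion. I expect the main obstacle to be the non-smoothness in $t$: making rigorous the step where the pathwise (or expected) increment of $h(s,p_s)$ along the trajectory is controlled by $\inf_{\varphi\in\partial^C h_p(s)}\varphi$, rather than by an honest derivative. The clean way is to avoid It\^o in $t$ altogether on the Lipschitz variable: write $h(s,p_s)-h(t,p_t)$ as $[h(s,p_s)-h(s,p_t)]+[h(s,p_t)-h(t,p_t)]$, handle the first bracket by the jump-integral representation \eqref{pj} of $p$ (no time-regularity needed), and handle the second bracket — a difference of a single Lipschitz function of $s$ — using that a Lipschitz function is absolutely continuous with a.e.\ derivative, plus the standard fact that a one-sided Dini derivative of a Lipschitz function lies in its Clarke subdifferential, so that $h(s,p_t)-h(t,p_t)=\int_t^s (d/d\tau)h(\tau,p_t)\,d\tau$ with the integrand in $\partial^C h_p(\tau)$ for a.e.\ $\tau$. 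A secondary technical point, routine but necessary, is the martingale (not merely local-martingale) property of the stochastic integrals, for which the uniform bound $|\xi^\star_s|\le K$, boundedness of $h$ away from $0$ and $\infty$, and the assumed finiteness of $M_j(z)$ for all real $z$ and of $M_Z$ on the relevant range are exactly the ingredients; I would relegate both verifications, as the theorem statement already signals, to the appendix.
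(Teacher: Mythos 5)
Your proposal takes essentially the same route as the paper's proof (Lemma~\ref{characG} plus the argument in the appendix): apply the product rule to $G=f\cdot h$, use the generalized HJB inequality together with the strict negativity of $f$ to obtain a one-sided bound, verify the martingale property of the stochastic integrals via boundedness of $\xi$, $h$, and the exponential-moment hypotheses (Lemmas~\ref{Qxibt}--\ref{fbounded}), and then show that the feedback pair $(\xi^\star,b^\star)$ achieves equality because $\mathcal{H}h(s,p_s;\xi^\star_s,b^\star_s)=0$ a.e. One substantive slip worth fixing: since $\mathcal{H}h\ge 0$ from \eqref{HJBg} and the prefactor $f(s,X^{\xi,b}_s)=-e^{-\alpha X^{\xi,b}_se^{r(T-s)}}$ is strictly negative, the drift of $s\mapsto \tilde V(s,X^{\xi,b}_s,p_s)$ is $\le 0$, making it a \emph{supermartingale}; your statement ``drift $\ge 0$, hence submartingale'' is inconsistent with the inequality $\tilde V(t,x,p)\ge \Eop^{t,x,p}[U(X^{\xi,b}_T)]$ that you then (correctly) deduce. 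Similarly, in the equality step it is cleaner to argue directly, as the paper does, that at a.e.\ $t$ (where $t\mapsto h(t,p)$ is differentiable) the Clarke subdifferential reduces to the singleton $\{h_t(t,p)\}$ by Proposition~\ref{pr:propClarkediff}, so \eqref{HJBg} becomes $\inf_{(\xi,b)}\mathcal{L}h+h_t=0$ and the minimizing feedback yields $\mathcal{H}h(s,p_s;\xi^\star_s,b^\star_s)=0$ outright, giving a true martingale, rather than appealing to ``the submartingale property in the other direction.'' Your bracket decomposition of $h(s,p_s)-h(t,p_t)$ is a valid reorganization of the same computation that Lemma~\ref{characG} performs through the product rule and the quadratic covariation term, though care is needed to iterate it between successive jumps of the pure-jump process $p$ so that the time argument in the first bracket is the jump time rather than the endpoint $s$.
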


\subsection{Existence result for the value function}
\label{existencevalue}

We now show that there exists a function $h:[0,T]\times\Delta_m\to(0,\infty)$ satisfying the conditions stated in Theorem~\ref{veri}. For this purpose let 
\begin{equation}\label{g}
g(t,p) := \inf_{(\xi,b)\in{\mathcal U}[t,T]}g^{\xi,b}(t,p),
\end{equation}
with
\begin{equation}\label{gxib}
\begin{aligned}
g^{\xi,b}(t,p) := \Eop^{t,p}\bigg[\exp\bigg\{&-\int_t^T \alpha e^{r(T-s)}\big((\mu-r)\,\xi_s+c(b_s)\big)ds -\int_t^T\alpha\sigma e^{r(T-s)}\xi_sdW_s\\
&+\int_t^T \int_E \alpha \big( b_s y + \xi_s z \mathds{1}_{(L,\infty)}(y)\big)e^{r(T-s)}\Psi(ds, d(y,z))\bigg\}\bigg],
\end{aligned}
\end{equation}
where $\Eop^{t,p}$ denotes the conditional expectation given $(p_t,q_t)=(p,q)$. The next lemma summarizes useful properties of $g$. A proof can be found in the appendix.

\begin{lemma}\label{propg}
	The function $g$ defined by~\eqref{g} has the following properties:
	\begin{enumerate}
		\item $g$ is bounded on $[0,T]\times\Delta_m$ by a constant $0<K_1<\infty$ and $g>0$.
		\item $g^{\xi,b}(t,p) = \sum_{j=1}^m p_j g^{\xi,b}(t,e_j)$ for all $(t,p)\in[0,T]\times\Delta_m$ and $(\xi,b)\in{\mathcal U}[t,T]$.
      \item $g^{\xi,b}(t,J(p,y)) = \sum_{j=1}^m \frac{f_j(y)p_j}{\sum_{k=1}^m f_k(y)p_k} g^{\xi,b}(t,e_j)$ for all $(t,p)\in[0,T]\times\Delta_m$ and $(\xi,b)\in{\mathcal U}[t,T]$.
		\item $\Delta_m\ni p\mapsto g(t,p)$ is concave for all $t\in[0,T]$.
		\item $[0,T]\ni t\mapsto g(t,p)$ is Lipschitz on $[0,T]$ for all $p\in\Delta_m$.
	\end{enumerate}
\end{lemma}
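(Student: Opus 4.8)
\textbf{Proof strategy for Lemma \ref{propg}.}

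The plan is to exploit the explicit exponential form of $g^{\xi,b}$ together with the linearity of the filter dynamics in $p$. The crucial structural observation is that the integrand inside the expectation in \eqref{gxib} does \emph{not} depend on $p$: only the law of the driving objects $W$, $\Psi$ depends on the initial filter value, and this dependence enters solely through the ${\mathfrak G}$-intensity kernel $\hat\nu(t,d(y,z)) = \lambda\sum_k p_k(t)f_k(y)\,dy\,Q(dz)$ from Proposition \ref{GintkernelPsi}. Hence, starting the filter at a vertex $e_j$ is the same as conditioning on $\vartheta=j$, and starting it at a general $p\in\Delta_m$ is a mixture over $j$ with weights $p_j$. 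This yields (b): $g^{\xi,b}(t,p)=\sum_{j=1}^m p_j\,g^{\xi,b}(t,e_j)$. Statement (c) then follows from (b) by the jump rule $p_{T_n}=J(p_{T_n-},Y_n)$ and the explicit form of $J$, reading off the coordinates $\frac{f_j(y)p_j}{\sum_k f_k(y)p_k}$ as the new weights. Part (a): boundedness above by some $K_1<\infty$ follows because the exponent is a sum of three terms each of which is uniformly bounded or has a controllable exponential moment --- the drift term is bounded on the compact strategy set and finite horizon, the stochastic integral term contributes a factor whose expectation is controlled by $|\xi|\le K$ (e.g.\ via a Girsanov/martingale argument or Cauchy--Schwarz together with the exponential moment of the jump part), and the jump integral is nonnegative with finite exponential moment because $\Psi$ has finitely many jumps up to $T$ a.s.\ and $M_j(z)<\infty$ for all $z$ (so $\Eop[e^{\alpha b Y e^{rT}}]<\infty$); positivity $g>0$ is immediate since $g^{\xi,b}$ is an expectation of a strictly positive random variable and the infimum over the strategy set preserves a strictly positive lower bound (one may obtain it by bounding the exponent from below and using Jensen's inequality).

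For (d), concavity of $p\mapsto g(t,p)$: since $g(t,p)=\inf_{(\xi,b)}g^{\xi,b}(t,p)$ and by (b) each $g^{\xi,b}(t,\cdot)$ is \emph{linear} (hence concave) on $\Delta_m$, the infimum of a family of concave functions is concave. This is the cleanest of the five items and follows immediately once (b) is in hand.

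Part (e) is the main obstacle and will require real work: Lipschitz continuity of $t\mapsto g(t,p)$ on $[0,T]$. The standard route is to show that $g^{\xi,b}(t,p)$, as a function of the initial time $t$, has a uniformly (in $(\xi,b)$ and $p$) bounded ``time-derivative'' in the sense that $|g^{\xi,b}(t_1,p)-g^{\xi,b}(t_2,p)|\le C|t_1-t_2|$ with $C$ independent of the strategy, whence $|g(t_1,p)-g(t_2,p)|\le \sup_{(\xi,b)}|g^{\xi,b}(t_1,p)-g^{\xi,b}(t_2,p)|\le C|t_1-t_2|$. To get this, fix $t_1<t_2$ and couple the two problems: on $[t_2,T]$ use the same strategy and the same driving noise, and estimate the difference contributed by the extra interval $[t_1,t_2]$. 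The drift term contributes $O(|t_2-t_1|)$ directly (bounded integrand); the Brownian integral over $[t_1,t_2]$ contributes a factor whose deviation from $1$ is $O(\sqrt{t_2-t_1})$ in $L^2$ --- this is the dangerous square-root term, so one must instead argue at the level of the exponential moment and use that $\Eop[\exp\{-\int_{t_1}^{t_2}\alpha\sigma e^{r(T-s)}\xi_s dW_s-\tfrac12\int_{t_1}^{t_2}\alpha^2\sigma^2 e^{2r(T-s)}\xi_s^2 ds\}]=1$ so that the genuine contribution is the bounded-variation compensator term of order $O(|t_2-t_1|)$; the jump integral over $[t_1,t_2]$ contributes a factor that equals $1$ unless a claim arrives in $[t_1,t_2]$, an event of probability $O(|t_2-t_1|)$, and on that event the factor has a finite (uniformly bounded, since $b\le1$, $|\xi|\le K$, and $M_j$, $M_Z$ are finite) expectation --- so this term is also $O(|t_2-t_1|)$. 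Combining, $|g^{\xi,b}(t_1,p)-g^{\xi,b}(t_2,p)|\le C|t_2-t_1|$ uniformly, and taking the infimum gives (e). The technical care needed is in making the exponential-moment bounds uniform over the (compact) strategy set and over $p\in\Delta_m$, and in handling the stochastic integral via its stochastic exponential rather than naive $L^2$ estimates; everything else is bookkeeping.
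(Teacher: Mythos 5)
Your proof follows the same route the paper points to: (b) and (c) ``by conditioning,'' (d) as an infimum of affine functions of $p$, and (a) and (e) by the arguments of \cite{BaeuerleLeimcke2020} and \cite{BaeuerleRieder2007}. Items (a)--(d) in your sketch are sound; in particular the Jensen bound for $g>0$ does give a uniform lower bound because the exponent in \eqref{gxib} has expectation bounded below uniformly, using $|\xi|\le K$, $b\in[0,1]$, $z\in(0,1)$ and the finite first moments of the $F_j$. Two small points on (e): the inequality $|g(t_1,p)-g(t_2,p)|\le\sup_{(\xi,b)}|g^{\xi,b}(t_1,p)-g^{\xi,b}(t_2,p)|$ is not literally well posed because the index sets $\mathcal{U}[t_1,T]$ and $\mathcal{U}[t_2,T]$ differ, so one should argue the two one-sided bounds separately (extending an almost-optimal strategy from $[t_2,T]$ for one direction, conditioning on $\mathcal{G}_{t_2}$ for the other); and in the conditioning direction the remaining factor is $g^{\xi,b}(t_2,p_{t_2})$ rather than $g^{\xi,b}(t_2,p)$, since the filter may jump on $[t_1,t_2]$. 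Your observation that a claim in $[t_1,t_2]$ occurs with probability $O(|t_2-t_1|)$ while the integrand remains bounded by (a) is exactly what closes this gap, and is the device used in \cite{BaeuerleRieder2007}, so the overall argument goes through.
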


Notice that $e_j$ denotes the $j$th unit vector. We are now in the position to show the following existence result of a solution of the generalized HJB equation.

\begin{theorem}\label{existenceHJB}
	The value function of  problem~\eqref{P} is given by
	\begin{equation*}
	V(t,x,p) = -e^{-\alpha x e^{r(T-t)}}g(t,p),\quad (t,x,p)\in[0,T]\times\R\times\Delta_m,
	\end{equation*}
	where $g$ is defined by~\eqref{g} and satisfies the generalized HJB equation \eqref{HJBg} for all $(t,p)\in[0,T)\times\Delta_m$ with boundary condition $g(T,p)=1$ for all $p\in\Delta_m$. 
	Furthermore, $(\xi^\star,b^\star)=(\xi^\star_s,b^\star_s)_{s\in[t,T]}$ with $(\xi^\star_s,b^\star_s)$ given by~\eqref{optstr} is the optimal investment and reinsurance strategy of the optimization problem~\eqref{P}. 
\end{theorem}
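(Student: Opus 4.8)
The plan is to derive everything from the infimum representation \eqref{g} of $g$, via a dynamic programming principle for $g$, and then to invoke the verification Theorem~\ref{veri}. \emph{Step 1 (separation).} Solving the linear surplus SDE \eqref{wealth} by variation of constants expresses $X_T^{\xi,b}$, under $\Eop^{t,x,p}$, as $x e^{r(T-t)}$ plus the drift, diffusion and jump integrals of \eqref{wealth} weighted by the discount factor $e^{r(T-s)}$. Substituting this into $U(x)=-e^{-\alpha x}$, pulling the deterministic factor $e^{-\alpha x e^{r(T-t)}}$ out of the expectation and comparing with \eqref{gxib} shows $V^{\xi,b}(t,x,p)=-e^{-\alpha x e^{r(T-t)}}g^{\xi,b}(t,p)$; the finiteness of $M_j$ on all of $\R$, together with $|\xi|\le K$ and $b\in[0,1]$, is what guarantees $g^{\xi,b}(t,p)<\infty$ and that the $dW$- and jump-integrals occurring below are true martingales. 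Taking the supremum over $(\xi,b)\in\mathcal{U}[t,T]$ and absorbing the minus sign turns the supremum into the infimum \eqref{g}, so $V(t,x,p)=-e^{-\alpha x e^{r(T-t)}}g(t,p)$; the boundary value $g(T,p)=1$ is immediate because all integrals in \eqref{gxib} are empty at $t=T$.

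\emph{Step 2 (dynamic programming).} Let $D_s^{\xi,b}$ denote the exponential in \eqref{gxib} with the outer integral running over $[t,s]$, so that $g^{\xi,b}(t,p)=\Eop^{t,p}[D_T^{\xi,b}]$. The multiplicative flow property $D_T^{\xi,b}=D_{t+h}^{\xi,b}\cdot(\text{the analogous functional on }[t+h,T])$, together with the Markov property of the autonomous pure-jump filter process $(p_s)_{s\ge0}$, gives $\Eop^{t,p}[D_T^{\xi,b}\mid\mathcal{G}_{t+h}]=D_{t+h}^{\xi,b}\,g^{\xi,b}(t+h,p_{t+h})$. A pasting argument in one direction and a measurable selection of $\varepsilon$-optimal continuations in the other then yield the dynamic programming principle
\[
g(t,p)=\inf_{(\xi,b)\in\mathcal{U}[t,t+h]}\Eop^{t,p}\big[D_{t+h}^{\xi,b}\,g(t+h,p_{t+h})\big],\qquad 0\le t\le t+h\le T.
\]

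\emph{Step 3 (generalized HJB and optimality).} I apply It\^o's formula to $s\mapsto D_s^{\xi,b}g(s,p_s)$ on $[t,t+h]$, using that $t\mapsto g(t,q)$ enters only through its a.e.\ derivative (it is Lipschitz in $t$ by Lemma~\ref{propg}(e)) and that the filter component $(p_s)$ contributes purely jump terms, the jump of $D^{\xi,b}$ at a claim $y$ with asset shock $z$ being the factor $e^{\alpha(b_s y+\xi_s z\mathds{1}_{(L,\infty)}(y))e^{r(T-s)}}$ and the $\mathfrak G$-compensator of $\Psi$ being $\hat\nu$ from Proposition~\ref{GintkernelPsi}. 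Collecting the drift, the $dW$-quadratic-variation correction and the compensated jump term — and reading $c(b)=(\eta-\theta)\kappa+(1+\theta)\kappa b$ from \eqref{eq:cb}, which is what produces the term $\alpha e^{r(T-s)}g(s,p_s)(\theta-\eta)\kappa$ of \eqref{L} — shows that the $ds$-integrand is exactly $D_s^{\xi,b}\big(g_t(s,p_s)+\mathcal{L}g(s,p_s;\xi_s,b_s)\big)$. Inserting this into the dynamic programming principle, dividing by $h$ and letting $h\downarrow0$ (dominated convergence and Lebesgue differentiation) gives, at every differentiability point $t$ of $g(\cdot,p)$, the classical equation $g_t(t,p)+\inf_{(\xi,b)\in[-K,K]\times[0,1]}\mathcal{L}g(t,p;\xi,b)=0$. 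Since $g$ is continuous in $t$ and $(t,\xi,b)\mapsto\mathcal{L}g(t,p;\xi,b)$ is continuous (dominated convergence, using $M_j<\infty$), the function $m(t,p):=\inf_{(\xi,b)\in[-K,K]\times[0,1]}\mathcal{L}g(t,p;\xi,b)$ is continuous in $t$; hence $g_t(t,p)=-m(t,p)$ for a.e.\ $t$, and by the definition of Clarke's generalized subdifferential as the closed convex hull of limits of nearby derivatives, $\partial^C\! g_p(t)=\{-m(t,p)\}$ for all $t\in[0,T)$. Therefore $\inf_{(\xi,b)}\{\mathcal{L}g(t,p;\xi,b)\}+\inf_{\varphi\in\partial^C\! g_p(t)}\{\varphi\}=m(t,p)-m(t,p)=0$ on all of $[0,T)\times\Delta_m$, i.e.\ $g$ solves \eqref{HJBg} with $g(T,p)=1$. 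Finally $g$ is bounded and strictly positive (Lemma~\ref{propg}(a)), Lipschitz in $t$ (Lemma~\ref{propg}(e)), and $p\mapsto g(t,p)$ is continuous — the affine maps $p\mapsto g^{\xi,b}(t,p)=\sum_{j} p_j\,g^{\xi,b}(t,e_j)$ of Lemma~\ref{propg}(b) are equi-Lipschitz in $p$ because their coefficients are uniformly bounded by Lemma~\ref{propg}(a), hence so is their infimum $g$ — so Theorem~\ref{veri} applies with $h:=g$ and identifies \eqref{optstr} as the optimal strategy.

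\emph{Where the difficulty sits.} The technical core is Step 2 together with the passage from the a.e.\ classical equation to the everywhere generalized one: making the flow/Markov computation and the $\varepsilon$-optimal measurable selection rigorous in the partial-information model, justifying It\^o's formula for a function that is only Lipschitz (not $C^1$) in time, and establishing the continuity of $t\mapsto m(t,p)$ that forces the Clarke subdifferential to collapse to a singleton consistent with \eqref{HJBg}. The integrability and martingale bookkeeping is routine once $M_j<\infty$ on $\R$ is invoked.
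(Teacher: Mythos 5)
Your proof is correct and takes a genuinely different route. The paper works at the level of $V=f\cdot g$, uses the martingale decomposition of Lemma~\ref{characG} and an explicit $\varepsilon$-optimal-strategy argument (exploiting continuity of $V$) to obtain the two HJB inequalities at differentiability points, and then extends to all points by treating $\partial^C g_p(t)$ as a set and invoking the convex-hull characterization of Theorem~\ref{th:genCgco}. You instead argue directly on $g$ via a dynamic programming principle for the multiplicative functional $D$, and you add a stronger observation that the paper does not make: $m(t,p):=\inf_{(\xi,b)\in[-K,K]\times[0,1]}\mathcal{L}g(t,p;\xi,b)$ is continuous in $t$ (dominated convergence using $M_j<\infty$ and the boundedness of $g$ from Lemma~\ref{propg}(a), together with compactness of the control set), so that the a.e.\ identity $g_t(t,p)=-m(t,p)$ combined with absolute continuity of $g(\cdot,p)$ forces $g(\cdot,p)\in C^1$ on $[0,T)$ and $\partial^C g_p(t)=\{-m(t,p)\}$ by Theorem~\ref{th:genCgco} and Proposition~\ref{pr:propClarkediff}. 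Filled in, this would show that the Clarke machinery the paper develops is actually superfluous for this model and the classical HJB holds at every $(t,p)$ — a mild strengthening of what the paper establishes. The trade-off is that the two hardest pieces of your plan are flagged but not executed: the measurable-selection/pasting argument underpinning the dynamic programming principle in Step~2, and the change-of-variables formula for a process driven through a function that is only Lipschitz (not $C^1$) in $t$; the paper circumvents the first by constructing an $\varepsilon$-optimal continuation strategy from the continuity of $V$, and the second by the absolutely-continuous-in-$t$ version of the decomposition in Lemma~\ref{characG}, where $\mathcal{H}h$ is simply set to zero on the null set where $h_t$ fails to exist.
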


\section{Comparison results}\label{sec:comp}

\subsection{Case of independent financial and insurance risks}

In this section we present a comparison result of the optimal strategy given in Theorem~\ref{existenceHJB} and the one in the case of independent financial and insurance risks. In this case the price process of the risky asset has no jumps if an insurance claim exceed the threshold $L$, i.e.\ the price process of the risky asset evolves according to a geometric Brownian motion. 
Throughout this section, we suppose that $K$ is large.
We write $(\tilde\xi^\star,\tilde b^\star)$ for the optimal investment and reinsurance strategy in the case of no interdependencies between the financial and insurance market as describe above. 
We obtain the special solution (cp. \cite[Ch.\,6]{gl20})
\begin{align*}
\tilde\xi^\star_t &= \frac{\mu-r}{\sigma^2}\frac{1}{\alpha}e^{-r(T-t)},\\
\tilde b^\star_t &= \tilde b(t,p_{t-}),
\end{align*}
where
\begin{equation*}
\tilde b(t,p) := \begin{cases} 0, & \theta\le \tilde A(t,p)/\kappa-1,\\
1, &\theta\ge \tilde B(t,p)/\kappa-1, \\
\tilde r(t,p), &\text{otherwise},
\end{cases}
\end{equation*}
with
\begin{align*}
  \tilde\gamma(t,p,b) &:= \lambda\sum_{k=1}^m p_k \int_0^\infty y \frac{g(t,J(p,y))}{g(t,p)} e^{\alpha b y e^{r(T-t)}} f_k(y)dy, \\
   \tilde A(t,p) &:= \tilde\gamma(t,p,0), \\
   \tilde B(t,p) &:= \tilde\gamma(t,p,1),
\end{align*}
and $\tilde r(t,p)$ is the unique root of $\tilde\gamma(t,p,b)=(1+\theta)\kappa$ w.r.t.\ $b$.
The next theorem provides a comparison of the optimal investment strategies $\xi^\star$ and $\tilde\xi^\star$.

\begin{theorem}\label{comparison}
   For any $t\in[0,T]$ it holds $  \xi^\star_t \le \tilde\xi^\star_t.$
\end{theorem}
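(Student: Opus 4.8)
The plan is to compare the first-order conditions that characterize $\xi^\star_t$ and $\tilde\xi^\star_t$. Fix $t\in[0,T]$ and write $p=p_{t-}$. In the dependent model the optimal amount invested solves an equation of the form $v_1(t,p,\xi,b)=\mu-r$, where
\begin{equation*}
v_1(t,p,\xi,b)=\alpha\sigma^2 e^{r(T-t)}\xi+\lambda\sum_{k=1}^m p_k\int_L^\infty \frac{g(t,J(p,y))}{g(t,p)}e^{\alpha b y e^{r(T-t)}}f_k(y)dy\,M_Z'(\alpha e^{r(T-t)}\xi),
\end{equation*}
while $\tilde\xi^\star_t$ solves the simpler equation $\alpha\sigma^2 e^{r(T-t)}\tilde\xi=\mu-r$, i.e.\ it is exactly the root of the \emph{first term alone} of $v_1$ set equal to $\mu-r$. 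The key structural observation is that the extra summand in $v_1$ is nonnegative: indeed $g>0$ (Lemma~\ref{propg}(a)), $e^{\alpha b y e^{r(T-t)}}>0$, $f_k\ge0$, $\lambda>0$, and $M_Z'(u)=\Eop[Ze^{uZ}]\ge 0$ since $Z$ takes values in $(0,1)$. Hence for every $\xi$ and every admissible $b$,
\begin{equation*}
v_1(t,p,\xi,b)\ge \alpha\sigma^2 e^{r(T-t)}\xi.
\end{equation*}

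Next I would use monotonicity in $\xi$. Both $\xi\mapsto \alpha\sigma^2 e^{r(T-t)}\xi$ and $\xi\mapsto v_1(t,p,\xi,b)$ are strictly increasing in $\xi$ (the latter was noted in the proof of Proposition~\ref{candidates}; it also follows since $M_Z'$ is increasing because $M_Z''=\Eop[Z^2e^{uZ}]\ge0$). Let $\tilde\xi$ be the root of $\alpha\sigma^2 e^{r(T-t)}\xi=\mu-r$ and let $\xi^\star$ be the root of $v_1(t,p,\xi,b^\star_t)=\mu-r$, where $b^\star_t$ is the optimal retention level. Evaluating the inequality above at $\xi=\xi^\star$ gives $\mu-r=v_1(t,p,\xi^\star,b^\star_t)\ge \alpha\sigma^2 e^{r(T-t)}\xi^\star$, hence $\alpha\sigma^2 e^{r(T-t)}\xi^\star\le\mu-r=\alpha\sigma^2 e^{r(T-t)}\tilde\xi$, and dividing by the positive constant $\alpha\sigma^2 e^{r(T-t)}$ yields $\xi^\star\le\tilde\xi=\tilde\xi^\star_t$. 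This is the heart of the argument.

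The one point that needs care — and which I expect to be the main obstacle — is the boundary cases $b^\star_t\in\{0,1\}$. When $b^\star_t=0$ the optimal $\xi^\star_t$ is not $r_1(t,p)$ but the root $a_0(t,p)$ of $v_1(t,p,\xi,0)=\mu-r$; when $b^\star_t=1$ it is the root $a_1(t,p)$ of $v_1(t,p,\xi,1)=\mu-r$. In all three cases, however, $\xi^\star_t$ is the root of $v_1(t,p,\xi,b)=\mu-r$ for the relevant value of $b\in[0,1]$, and the inequality $v_1(t,p,\xi,b)\ge\alpha\sigma^2 e^{r(T-t)}\xi$ holds for \emph{every} $b$, so the same one-line comparison applies verbatim. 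One should also record that $K$ has been chosen large enough (as assumed throughout the section) so that none of $r_1(t,p), a_0(t,p), a_1(t,p)$ is truncated by the constraint $|\xi|\le K$; otherwise the root characterization would fail. Finally, if at some $t$ the filter $g$-ratios vanish (which cannot happen here since $g>0$), the argument would degenerate to equality, consistent with the claimed inequality. Assembling these observations gives $\xi^\star_t\le\tilde\xi^\star_t$ for all $t\in[0,T]$.
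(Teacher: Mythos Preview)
Your proposal is correct and follows essentially the same approach as the paper: compare the first-order condition $\alpha\sigma^2 e^{r(T-t)}\xi=\mu-r$ for $\tilde\xi^\star_t$ with the condition $v_1(t,p,\xi,b)=\mu-r$ for $\xi^\star_t$, observe that the extra summand in $v_1$ is nonnegative, and use monotonicity in $\xi$ to conclude. Your treatment is in fact more detailed than the paper's (which dispatches the argument in three lines), in particular your explicit handling of the boundary cases $b^\star_t\in\{0,1\}$ and the remark about $K$ being large enough.
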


\begin{proof}
  Fix $t\in[0,T]$. Note that the first order condition of $\tilde\xi^\star$ is
  \begin{equation*}
  \alpha\sigma^2 e^{r(T-t)}\xi=\mu-r,
  \end{equation*}
  where the left-hand side is always less than $v_1(t,p,\xi,b)$ from~\eqref{foc} and crosses $\mu-r$ from below.
  Consequently, $\xi_t^\star\le \frac{\mu-r}{\sigma^2}\frac{1}{\alpha}e^{-r(T-t)} $. 
\end{proof}

The theorem says that it is always optimal to invest more money into the risky asset in the absence of interdependencies between financial and insurance risks than in the presence of dependencies. This is not surprising since the interdependency in our model may only imply some downward jumps of the risky asset. A negative investment into the financial market can  be used to hedge against claims.

\subsection{Case of complete information}

First note that the case with complete information is always a special case of our general model. We obtain this case when the prior is concentrated on a single value.
In order to state the optimal strategy in the complete information case, we define for any $t\in[0,T]$ and $(\xi,b)\in\R^2$
\begin{align*}
v_1^F(t,\xi,b) &:= \alpha\sigma^2e^{r(T-t)}\xi + \lambda \int_L^\infty e^{\alpha b y e^{r(T-t)}}F(dy)M_Z^\prime\big(\alpha e^{r(T-t)}\xi\big),\\
v_2^F(t,\xi,b) &:= \lambda \int_0^\infty y e^{\alpha b y e^{r(T-t)}}\int_{(0,1)}e^{\alpha\xi z \mathds{1}_{(L,\infty)}(y)e^{r(T-t)}}Q(dz)F(dy),
\end{align*}
for some distribution $F$ on $(0,\infty)$.
Furthermore, we denote by $r^F(t)=(r^F_1(t),r^F_2(t))$ the unique root w.r.t.\ $(\xi,b)$ of
\begin{equation}\label{full:foc}
\begin{aligned}
v_1^F(t,\xi,b) &= \mu-r\\
v_2^F(t,\xi,b) &= (1+\theta)\kappa,
\end{aligned}
\end{equation}
which exists, and we define
\begin{equation*}
A_F(t) := v_2^F(t,r^F_1(t),0), \quad B_F(t) := v_2^F(t,r^F_1(t),1).
\end{equation*}
Moreover, $a_0^F(t)$ denotes the unique root w.r.t.\ $\xi$ of $v_1^F(t,\xi,0)=\mu-r$ and $a_1^F(t)$ the unique root w.r.t.\ $\xi$ of $v_1^F(t,\xi,1)=\mu-r$.
By the same line of arguments as in Proposition~\ref{candidates}, we obtain under the notation above that the optimal reinsurance strategy $(\xi_F^\star,b^\star_{F})=(\xi_F^\star(t),b^\star_{F}(t))_{t\in[0,T]}$ in the case of complete information is given by
\begin{equation}\label{full:xibstart}
(\xi_F^\star(t),b^\star_F(t)) := 
\begin{cases} (a_0^F(t),0), & \theta\le A_F(t)/\kappa -1, \\
(a_1^F(t),1), &\theta \ge B_F(t)/\kappa-1, \\ 
r^F(t), &\text{otherwise}.
\end{cases}
\end{equation}
Note that $r_1^F(t)$, $r_2^F(t)$, $a_0^F(t)$, $a_1^F(t)$, $A_F(t)$ and $B_F(t)$ are continuous in $t$. Consequently, the optimal strategies $\xi^\star_F$ and  $b^\star_F$ is continuous. Moreover, $(\xi^\star_F,b^\star_F)$ is deterministic and can be calculated easily.

We will now compare the strategies. In order to do so, we assume throughout this section that
\begin{equation*}
F_1(x)\ge F_2(x)\ge \ldots \ge F_m(x)
\end{equation*}
for all $x\in\R$. That is, the claim sizes are ordered stochastically as follows:
\begin{equation*}
Y|\vartheta=1 \preceq_{\textup{st}} Y|\vartheta=2 \preceq_{\textup{st}}\ldots\preceq_{\textup{st}} Y|\vartheta=m,
\end{equation*}
where $\preceq_{\textup{st}}$ denotes the usual stochastic order. This assertion is equivalent to
\begin{equation*}
\int_0^\infty g(y) f_1(y)dy\le\int_0^\infty g(y) f_2(y)dy\le \ldots \le \int_0^\infty g(y) f_m(y)dy
\end{equation*}
for all increasing functions $g$, for which the expectations exist, compare Theorem 1.2.8 in \cite{MuellerStoyan2002}.

First of all we derive bounds for the optimal strategy which can be calculated apriori, i.e.\ independent of the filter process $(p_t)_{t\ge0}$.
For this determination, we introduce the following terms. For any $t\in[0,T]$ and $(\xi,b)\in\R^2$, we set
\begin{align*}
v_1^{\min}(t,\xi,b) &:= \alpha\sigma^2e^{r(T-t)}\xi + \lambda \int_L^\infty e^{\alpha b y e^{r(T-t)}}f_1(y)dy M_Z^\prime\big(\alpha \xi e^{r(T-t)}\big),\\
v_1^{\max}(t,\xi,b) &:= \alpha\sigma^2e^{r(T-t)}\xi + \lambda \int_L^\infty e^{\alpha b y e^{r(T-t)}}f_m(y)dy M_Z^\prime\big(\alpha \xi e^{r(T-t)}\big),
\end{align*}
For some fixed $t\in[0,T]$, we denote by $r^{\min}_1(t)$ the unique root of $v_1^{\min}(t,\xi,b) = \mu-r$ and by $r^{\max}_1(t)$ the unique root of $v_1^{\max}(t,\xi,b) = \mu-r$,  which exist by the same line of arguments as in Proposition~\ref{candidates}.
The announced a-priori-bounds are a direct consequence of the following result.

\begin{proposition}\label{pr:aprioribounds}
   For any $(t,p)\in[0,T]\times\Delta_m$, we have for $v_1$ from \eqref{foc}
   \begin{align*}
   v_1^{\min}(t,\xi,b) &\le v_1(t,p,\xi,b) \le v_1^{\max}(t,\xi,b)\quad\text{for all }(\xi,b)\in\R\times\R_+.
   \end{align*}
\end{proposition}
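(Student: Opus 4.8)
The plan is to prove the inequalities termwise by exploiting the stochastic ordering of the family $\{F_j\}$ together with the identity in Lemma~\ref{propg}(c). Recall that
\[
v_1(t,p,\xi,b) = \alpha\sigma^2 e^{r(T-t)}\xi + \lambda\sum_{k=1}^m p_k\int_L^\infty \frac{g(t,J(p,y))}{g(t,p)}\,e^{\alpha b y e^{r(T-t)}}f_k(y)\,dy\;M_Z^\prime\big(\alpha e^{r(T-t)}\xi\big).
\]
The first (linear-in-$\xi$) term does not depend on $p$, so it suffices to bound the integral term from below by its value with $f_1$ in place of the mixture density $\sum_k p_k\frac{g(t,J(p,y))}{g(t,p)}f_k(y)$, and from above by the analogous expression with $f_m$. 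Since $b\ge 0$, $\alpha>0$ and $e^{r(T-t)}>0$, the factor $y\mapsto e^{\alpha b y e^{r(T-t)}}$ is nonnegative and increasing on $(L,\infty)$; also $M_Z^\prime(\alpha e^{r(T-t)}\xi)=\Eop[Ze^{\alpha e^{r(T-t)}\xi Z}]>0$ since $Z\in(0,1)$. Hence the sign of the desired comparison is governed entirely by comparing the mixing densities.

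The key observation is that by Lemma~\ref{propg}(c), $g(t,J(p,y)) = \sum_{j=1}^m \frac{f_j(y)p_j}{\sum_k f_k(y)p_k}\,g(t,e_j)$, so that
\[
\sum_{k=1}^m p_k\,\frac{g(t,J(p,y))}{g(t,p)}\,f_k(y) = \frac{1}{g(t,p)}\sum_{j=1}^m p_j\,g(t,e_j)\,f_j(y).
\]
Writing $w_j := p_j g(t,e_j)/g(t,p)$ (nonnegative weights), and using Lemma~\ref{propg}(b) which gives $g(t,p)=\sum_j p_j g(t,e_j)$, we see $\sum_j w_j = 1$: the mixing density is the convex combination $\sum_j w_j f_j(y)$. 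Therefore the integral term in $v_1(t,p,\xi,b)$ equals $\lambda M_Z^\prime(\cdot)\sum_j w_j \int_L^\infty e^{\alpha b y e^{r(T-t)}}f_j(y)\,dy$, a convex combination of the $m$ quantities $I_j:=\int_L^\infty e^{\alpha b y e^{r(T-t)}}f_j(y)\,dy$. The stochastic order $F_1 \preceq_{\textup{st}}\cdots\preceq_{\textup{st}}F_m$ applied to the increasing (on the relevant range, after extending by the constant value on $(0,L]$) function $y\mapsto e^{\alpha b y e^{r(T-t)}}\mathds{1}_{(L,\infty)}(y)$ — more carefully, to the increasing function $y\mapsto e^{\alpha b y e^{r(T-t)}}\mathds{1}_{(L,\infty)}(y)$, which is nondecreasing — yields $I_1 \le I_j \le I_m$ for every $j$. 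Consequently $I_1 \le \sum_j w_j I_j \le I_m$, and multiplying through by $\lambda M_Z^\prime(\alpha e^{r(T-t)}\xi) > 0$ and adding back the $p$-independent linear term gives exactly $v_1^{\min}(t,\xi,b) \le v_1(t,p,\xi,b) \le v_1^{\max}(t,\xi,b)$.

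The only mild subtlety — and the point I would be most careful about — is the monotonicity claim for $y\mapsto e^{\alpha b y e^{r(T-t)}}\mathds{1}_{(L,\infty)}(y)$: this function jumps up at $L$ and is increasing on $(L,\infty)$, hence is nondecreasing on all of $(0,\infty)$, so the characterization of $\preceq_{\textup{st}}$ via expectations of increasing functions (Theorem~1.2.8 in \cite{MuellerStoyan2002}, as cited above) applies directly and gives $\int_0^\infty e^{\alpha b y e^{r(T-t)}}\mathds{1}_{(L,\infty)}(y)f_1(y)dy \le \cdots \le \int_0^\infty e^{\alpha b y e^{r(T-t)}}\mathds{1}_{(L,\infty)}(y)f_m(y)dy$, which is precisely $I_1\le\cdots\le I_m$. (One should also note all integrals are finite because $M_j(z)<\infty$ for all $z$, so $e^{\alpha b y e^{r(T-t)}}$ is integrable against each $f_j$ since $\alpha b e^{r(T-t)}\in\R$.) No convexity of $g$ is needed here; only parts (b) and (c) of Lemma~\ref{propg} and the stochastic ordering assumption.
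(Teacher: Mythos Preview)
Your overall strategy coincides with the paper's: rewrite the integral term via Lemma~\ref{propg}(b),(c) as a convex combination of the quantities $I_j=\int_L^\infty e^{\alpha b y e^{r(T-t)}}f_j(y)\,dy$, and then use the stochastic ordering $F_1\preceq_{\textup{st}}\cdots\preceq_{\textup{st}}F_m$ (applied to the nondecreasing function $y\mapsto e^{\alpha b y e^{r(T-t)}}\indicator_{(L,\infty)}(y)$) to squeeze this combination between $I_1$ and $I_m$.

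There is, however, a genuine gap in your execution. Parts (b) and (c) of Lemma~\ref{propg} are stated for $g^{\xi,b}$, \emph{not} for $g$. The value function $g=\inf_{(\xi,b)}g^{\xi,b}$ is merely concave in $p$ (part (d)), and is in general not affine because the minimizing strategy depends on $p$. Hence your claimed identities $g(t,p)=\sum_j p_j g(t,e_j)$ and $g(t,J(p,y))=\sum_j\frac{p_j f_j(y)}{\sum_k p_k f_k(y)}\,g(t,e_j)$ are unjustified, and your weights $w_j=p_j g(t,e_j)/g(t,p)$ need not sum to one. The paper avoids this by carrying out the calculation at the level of $g^{\xi,b}$ for an \emph{arbitrary} admissible $(\xi,b)\in\mathcal{U}[t,T]$: from (b) and (c) one gets
\[
\sum_{k=1}^m p_k\int_L^\infty g^{\xi,b}(t,J(p,y))\,e^{\alpha\bar b y e^{r(T-t)}}f_k(y)\,dy
=\sum_{j=1}^m p_j\,g^{\xi,b}(t,e_j)\,I_j\le I_m\,g^{\xi,b}(t,p),
\]
and only then passes to $g$. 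To repair your argument you should work with $g^{\xi,b}$ throughout, obtain the two inequalities for every admissible $(\xi,b)$, and then take the infimum to recover the statement for $g$; applying the affine identities directly to $g$ is not valid.
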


\begin{proof}
   Choose some $(t,p)\in[0,T]\times\Delta_m$ and $(\bar\xi,\bar b)\in\R\times\R_+$. For any $(\xi,b)\in\mathcal U[t,T]$, an application of Lemma~\ref{propg}~(b) and~(c) yields
   \begin{align*}
   &\sum_{k=1}^{m}p_k\int_L^\infty g^{\xi,b}(t,J(p,y))e^{\alpha\bar b y e^{r(T-t)}}f_k(y)dy \\
   &= \sum_{j=1}^{m}p_j g^{\xi,b}(t,e_j) \int_L^\infty\frac{\sum_{k=1}^{m}p_kf_k(y)}{\sum_{\ell=1}^{m}p_\ell f_\ell(y)} e^{\alpha \bar b y e^{r(T-t)}}f_j(y)dy 
   \le g^{\xi,b}(t,p) \int_L^\infty e^{\alpha \bar b y e^{r(T-t)}}f_m(y)dy,
   \end{align*}
   which yields $v_1(t,p,\bar\xi,\bar b) \le v_1^{\max}(t,\bar\xi,\bar b)$ by dividing by $ g^{\xi,b}$, multiplying both sides by \linebreak $\lambda M_Z^\prime(\alpha \bar\xi e^{r(T-t)})$ and by adding $\alpha\sigma^2e^{r(T-t)}\bar\xi$.
   The inequality $v_1^{\min}(t,\bar \xi,\bar b) \le v_1(t,p,\bar\xi,\bar b)$ is obtained in the same way.
\end{proof}

The proposition directly implies the following corollary:

\begin{corollary}\label{co:aprioribounds}
   The optimal investment strategy $\xi^\star$ from Theorem~\ref{existenceHJB} has the following bounds for $ t\in[0,T]$:
   \begin{eqnarray*}
   r_1^{\max}(t)  \le \xi^\star_t  \quad\text{if } b^\star_{F_1}(t)=b_t^\star,\\
  \xi^\star_t  \le   r_1^{\min}(t)  \ \quad\text{if } b^\star_{F_m}(t)=b_t^\star.
   \end{eqnarray*}
  
\end{corollary}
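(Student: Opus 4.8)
The plan is to derive Corollary~\ref{co:aprioribounds} as a direct consequence of Proposition~\ref{pr:aprioribounds} together with the monotonicity structure already established in Proposition~\ref{candidates}. First I would fix $t\in[0,T]$ and recall that the optimal investment amount $\xi_t^\star$ is, by~\eqref{optstr}, characterized as the unique root w.r.t.\ $\xi$ of $v_1(t,p_{t-},\xi,b_t^\star)=\mu-r$, where $b_t^\star$ is the optimal retention level in force (equal to $0$, $1$, or $r_2(t,p_{t-})$ depending on the reinsurer's safety loading). On the other side, $r_1^{\min}(t)$ and $r_1^{\max}(t)$ are the unique roots of $v_1^{\min}(t,\xi,b)=\mu-r$ and $v_1^{\max}(t,\xi,b)=\mu-r$; note here that by Proposition~\ref{pr:aprioribounds} the functions $v_1^{\min}$ and $v_1^{\max}$ do not depend on $b$ through the $f$-integral in the way $v_1$ does, but in the regimes where the comparison is claimed the relevant retention level is pinned down so that $v_1^{\min}$, $v_1^{\max}$ are evaluated at the same $b=b_t^\star$.

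The key step is then a standard \emph{root-comparison} argument exploiting that all three functions $\xi\mapsto v_1(t,p_{t-},\xi,b_t^\star)$, $\xi\mapsto v_1^{\min}(t,\xi,b_t^\star)$ and $\xi\mapsto v_1^{\max}(t,\xi,b_t^\star)$ are strictly increasing in $\xi$ (this follows from the strict convexity of $\gamma$ and its analogues, i.e.\ from $\partial v_1/\partial\xi>0$, as used already in the proof of Theorem~\ref{comparison}) and cross the level $\mu-r$ from below. Combining this with the pointwise sandwich $v_1^{\min}\le v_1(t,p_{t-},\cdot,b_t^\star)\le v_1^{\max}$ from Proposition~\ref{pr:aprioribounds}: at the root $\xi=r_1^{\min}(t)$ we have $v_1(t,p_{t-},r_1^{\min}(t),b_t^\star)\le v_1^{\max}(t,r_1^{\min}(t),b_t^\star)$; wait — to get the stated direction one uses that $v_1(t,p_{t-},\xi^\star_t,b_t^\star)=\mu-r=v_1^{\min}(t,r_1^{\min}(t),b_t^\star)\ge$ or $\le v_1(t,p_{t-},r_1^{\min}(t),b_t^\star)$ according to the inequality, and then invokes strict monotonicity in $\xi$ to transfer the inequality from function values to arguments. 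Concretely, $\mu-r=v_1^{\min}(t,r_1^{\min}(t),b^\star)\le v_1(t,p_{t-},r_1^{\min}(t),b^\star)$ together with $v_1(t,p_{t-},\xi^\star_t,b^\star)=\mu-r$ and monotonicity give $\xi_t^\star\le r_1^{\min}(t)$; the reverse bound $r_1^{\max}(t)\le\xi_t^\star$ is obtained symmetrically from $v_1\le v_1^{\max}$.

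The two conditional hypotheses $b_{F_1}^\star(t)=b_t^\star$ and $b_{F_m}^\star(t)=b_t^\star$ enter precisely to guarantee that the retention level at which one evaluates $v_1^{\min}$ (built from $f_1$, the stochastically smallest claim law) resp.\ $v_1^{\max}$ (built from $f_m$) coincides with the retention level $b_t^\star$ actually used in $v_1(t,p_{t-},\cdot,b_t^\star)$, so that Proposition~\ref{pr:aprioribounds} can be applied at the correct second argument; otherwise the $b$-dependence inside the integrals $\int_L^\infty e^{\alpha b y e^{r(T-t)}}f_k(y)\,dy$ would break the clean sandwich. I expect the main (mild) obstacle to be exactly this bookkeeping: one must check that under $b_{F_1}^\star(t)=b_t^\star$ (respectively $b_{F_m}^\star(t)=b_t^\star$) the relevant first-order condition for $\xi^\star_t$ really is $v_1(t,p_{t-},\xi,b_t^\star)=\mu-r$ with the \emph{same} $b$ appearing on all sides — including the boundary cases $b_t^\star\in\{0,1\}$ where $\xi^\star_t$ solves $v_1(t,p_{t-},\xi,0)=\mu-r$ or $v_1(t,p_{t-},\xi,1)=\mu-r$ — and that $r_1^{\min}(t)$, $r_1^{\max}(t)$ are defined consistently with that same $b$. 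Once this alignment is in place, the proof is a two-line application of Proposition~\ref{pr:aprioribounds} and strict monotonicity in $\xi$, exactly in the spirit of the proof of Theorem~\ref{comparison}.
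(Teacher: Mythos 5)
Your root-comparison strategy is exactly the right one: Proposition~\ref{pr:aprioribounds} gives the pointwise sandwich $v_1^{\min}\le v_1\le v_1^{\max}$, all three are strictly increasing in $\xi$ and cross $\mu-r$ from below (this is how $r_1^{\min}(t)$, $r_1^{\max}(t)$ are constructed in the paper, in parallel with Proposition~\ref{candidates}), and so the roots obey $r_1^{\max}(t)\le\xi_t^\star\le r_1^{\min}(t)$ once all three first-order conditions are evaluated at the \emph{same} retention level. This is precisely what the paper means by ``the proposition directly implies the corollary.''

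However, your attribution of the conditional hypotheses is reversed relative to the corollary as written. You pair $b^\star_{F_1}(t)=b^\star_t$ with the alignment of $v_1^{\min}$ (built from $f_1$), hence with the upper bound $\xi_t^\star\le r_1^{\min}(t)$, and $b^\star_{F_m}(t)=b^\star_t$ with the alignment of $v_1^{\max}$ (built from $f_m$), hence with the lower bound $r_1^{\max}(t)\le\xi^\star_t$. The corollary instead asserts $r_1^{\max}(t)\le\xi_t^\star$ under $b^\star_{F_1}(t)=b^\star_t$ and $\xi_t^\star\le r_1^{\min}(t)$ under $b^\star_{F_m}(t)=b^\star_t$. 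Your pairing is the natural one (it makes $r_1^{\min}(t)$ coincide with the complete-information investment $\xi^\star_{F_1}(t)$ and $r_1^{\max}(t)$ with $\xi^\star_{F_m}(t)$), while the corollary's pairing forces a hybrid reading in which $r_1^{\max}(t)$ uses the $f_m$-integral but is evaluated at the $F_1$-optimal retention $b^\star_{F_1}(t)$, and vice versa. Both readings are mathematically defensible given the paper's underspecified definition of $r_1^{\min}(t)$ and $r_1^{\max}(t)$ (the retention level $b$ is left implicit), but your proposal proves the \emph{swapped} statement, not the corollary verbatim. You should explicitly note this discrepancy: either the corollary has a typo (the hypotheses should be interchanged) and your version is the intended one, or $r_1^{\min}(t)$ and $r_1^{\max}(t)$ are meant to be evaluated at the ``opposite'' model's optimal retention, in which case your exposition of the role of the hypotheses needs to be corrected to match.
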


 The next theorem is now the main statement of this section. It provides a comparison of the optimal investment strategy to the optimal one in the case of complete information, where the unknown claim size distribution is replaced by their expectation. It turns out that in the latter case the amount which is invested is higher if the retention is the same. In this sense the complete information case provides upper bounds.
 
\begin{theorem}\label{th:comparison}
   Let $(\xi_F^\star,b_F^\star)$ be the function given in~\eqref{full:xibstart} and suppose the insurance company does invest into the financial market, i.e.\ $\xi_t^\star >0$ for all $t\in [0,T]$. Then  if $b_t^\star = b^\star_{\bar F_{p_{t-}}}$ we obtain for $t\in[0,T]$
   \begin{equation*}
   \xi_t^\star \le \xi^\star_{\bar F_{p_{t-}}}(t),\quad \mbox{with }  \bar F_p(dy) := \sum_{k=1}^m p_kf_k(y)dy.
   \end{equation*}
\end{theorem}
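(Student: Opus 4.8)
The plan is to reduce everything to a comparison of the two first-order conditions and then to an elementary correlation inequality. Write $p:=p_{t-}$, $\bar f_p(y):=\sum_{k=1}^m p_kf_k(y)$ (so $\bar F_p(dy)=\bar f_p(y)\,dy$) and $\bar b:=b_t^\star$, which by hypothesis equals $b^\star_{\bar F_p}(t)$. Inspecting the definitions~\eqref{optstr} and~\eqref{full:xibstart} one checks that, in each of the three regimes, the investment component is the unique root of $\xi\mapsto v_1(t,p,\xi,\bar b)=\mu-r$, respectively of $\xi\mapsto v_1^{\bar F_p}(t,\xi,\bar b)=\mu-r$; i.e.\ $v_1(t,p,\xi_t^\star,\bar b)=\mu-r=v_1^{\bar F_p}(t,\xi^\star_{\bar F_p}(t),\bar b)$. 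Both of these maps are strictly increasing in $\xi$ (differentiate and use $M_Z''>0$, exactly as in the proof of Lemma~\ref{gamma}). Hence it suffices to show $v_1^{\bar F_p}(t,\xi_t^\star,\bar b)\le v_1(t,p,\xi_t^\star,\bar b)=\mu-r$: strict monotonicity of $v_1^{\bar F_p}(t,\cdot,\bar b)$ then forces $\xi_t^\star\le\xi^\star_{\bar F_p}(t)$.

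Next I would subtract the two expressions. Since $M_Z'(u)=\Eop[Ze^{uZ}]>0$, $\lambda>0$ and $g(t,p)>0$, cancelling the common term and dividing by the positive factor shows that $v_1^{\bar F_p}(t,\xi,\bar b)\le v_1(t,p,\xi,\bar b)$ is equivalent to the $\xi$-free statement
\[
\int_L^\infty g\big(t,J(p,y)\big)\,e^{\alpha\bar b y e^{r(T-t)}}\,\bar f_p(y)\,dy\ \ge\ g(t,p)\int_L^\infty e^{\alpha\bar b y e^{r(T-t)}}\,\bar f_p(y)\,dy .
\]
By Theorems~\ref{veri} and~\ref{existenceHJB}, $(\xi^\star,b^\star)$ is an optimal feedback strategy for~\eqref{P} from every initial state, and since $V^{\xi^\star,b^\star}(t,x,p)=-e^{-\alpha xe^{r(T-t)}}g^{\xi^\star,b^\star}(t,p)$ (linear dependence of the terminal wealth on the initial capital), comparison with~\eqref{separation} gives $g(t,q)=g^{\xi^\star,b^\star}(t,q)$ for every $q\in\Delta_m$. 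Applying Lemma~\ref{propg}(b),(c) to the admissible strategy $(\xi^\star,b^\star)$ then yields $g(t,q)=\sum_{j=1}^m q_j g(t,e_j)$ and $g(t,J(p,y))=\sum_{j=1}^m \frac{f_j(y)p_j}{\bar f_p(y)}g(t,e_j)$. Substituting these and using Fubini on the finite sum, the displayed inequality collapses to
\[
\sum_{j=1}^m p_j a_j c_j\ \ge\ \Big(\sum_{j=1}^m p_j a_j\Big)\Big(\sum_{j=1}^m p_j c_j\Big),\qquad a_j:=g(t,e_j),\quad c_j:=\int_L^\infty e^{\alpha\bar b y e^{r(T-t)}}f_j(y)\,dy ,
\]
which is exactly Chebyshev's sum inequality (the covariance inequality for the random index $\vartheta\sim p$), provided $(a_j)_j$ and $(c_j)_j$ are monotone in $j$ in the same direction.

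It then remains to verify these two monotonicities. That $c_j$ is non-decreasing in $j$ is easy: $y\mapsto \mathds{1}_{(L,\infty)}(y)e^{\alpha\bar b y e^{r(T-t)}}$ is non-decreasing on $(0,\infty)$ because $\bar b=b_t^\star\ge 0$, so the stochastic ordering $F_1\preceq_{\textup{st}}\cdots\preceq_{\textup{st}}F_m$ gives $c_1\le\cdots\le c_m$ (finiteness follows from finiteness of the $M_j$). For $(a_j)_j$ I would argue that, since $J(e_j,\cdot)\equiv e_j$, the filter started at $e_j$ is absorbed there, so from $e_j$ problem~\eqref{P} reduces to the complete-information problem with claim-size distribution $F_j$, for which the optimal control is deterministic (cf.~\eqref{full:xibstart}); a Strassen coupling of the mark sequences of the problems for $F_j$ and $F_{j+1}$ (same jump times, same $Z_n$, same $W$, coupled claim sizes $Y_n\le Y_n'$) makes the exponent in~\eqref{gxib} a non-decreasing functional of the claim sizes — the relevant summands $\alpha b^\star_s y$ and $\alpha \xi^\star_s z\mathds{1}_{(L,\infty)}(y)$ are each non-decreasing in $y$, here using $b^\star\ge0$ and the standing hypothesis $\xi^\star>0$ — whence $g(t,e_j)\le g(t,e_{j+1})$.

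I expect the monotonicity of $a_j=g(t,e_j)$ in $j$ to be the delicate point: it is precisely where the assumption $\xi_t^\star>0$ is genuinely needed — without it the downward jump of $P$ triggered by a large claim could make a heavier-tailed claim distribution advantageous to the insurer, reversing the order — and it requires both the reduction to the complete-information subproblems at the vertices of $\Delta_m$ and a careful monotone coupling of the mark sequences. Everything else is bookkeeping: the reduction to the first-order conditions, the strict monotonicity of $v_1$ and $v_1^{\bar F_p}$ in $\xi$ (already implicit in Lemma~\ref{gamma} and the proof of Theorem~\ref{comparison}), and the algebraic simplification to Chebyshev's inequality.
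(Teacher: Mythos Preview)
Your proof is correct and follows essentially the same route as the paper: reduce to the pointwise comparison $v_1(t,p,\cdot,\bar b)\ge v_1^{\bar F_p}(t,\cdot,\bar b)$, rewrite the relevant integral using the linearity properties of Lemma~\ref{propg}(b),(c), and conclude via the Chebyshev sum inequality (this is precisely the paper's Lemma~\ref{le:ineqsum}) after checking that the sequences $a_j=g(t,e_j)$ and $c_j=\int_L^\infty e^{\alpha\bar b y e^{r(T-t)}}f_j(y)\,dy$ are comonotone. The paper carries out the algebra with $g^{\xi,b}$ for a generic $(\xi,b)\in\widetilde{\mathcal U}[t,T]$ and only implicitly specializes to the optimizer at the end, whereas you make the identification $g=g^{\xi^\star,b^\star}$ explicit up front; these are equivalent.

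The one place where you are more careful than the paper is the monotonicity of $a_j=g(t,e_j)$. The paper simply asserts that the integrand in $g^{\xi,b}(t,e_j)$ is increasing in $Y_n$ (using $\xi>0$) and invokes stochastic ordering. You instead observe that, starting from $e_j$, the filter is absorbed and the optimal control is deterministic, so the exponent is genuinely a non-decreasing functional of the claim marks for a \emph{fixed} deterministic strategy; together with the infimum defining $g$, this gives $g(t,e_j)\le g^{(\xi^{(j+1)},b^{(j+1)})}(t,e_j)\le g^{(\xi^{(j+1)},b^{(j+1)})}(t,e_{j+1})=g(t,e_{j+1})$. This extra care is worthwhile, since for a general predictable strategy (depending on past $Y_n$'s) the ``integrand increasing in $Y_n$'' statement needs exactly this reduction to deterministic controls to be watertight. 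Just make sure, when you write it up, that you are explicit about fixing the strategy (say the optimizer for $e_{j+1}$) before coupling the mark sequences, and that the assumption $\xi^\star>0$ is being used at the vertices $e_j$ as well.
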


\begin{proof}
   Let us fix $(t,p)\in[0,T]$ and $(\bar\xi,\bar b)\in\R\times\R_+$. From the proof of Proposition~\ref{pr:aprioribounds}, we know already that
   \begin{equation*}
   \sum_{k=1}^{m}p_k\int_L^\infty g^{\xi,b}(t,J(p,y))e^{\alpha\bar b y e^{r(T-t)}}f_k(y)dy
   = \sum_{j=1}^{m}p_j g^{\xi,b}(t,e_j) \int_L^\infty e^{\alpha \bar b y e^{r(T-t)}}f_j(y)dy
   \end{equation*}
   for all $(\xi,b)\in\widetilde{\mathcal{U}}[t,T]$, where $\widetilde{\mathcal{U}}[t,T]$ denotes the set of all admissible strategies $\mathcal{U}[t,T]$ restricted to positive investment strategies. The integrand of
   \begin{align*}
   g^{\xi,b}(t,p) = \Eop^{t,p}\bigg[\exp\bigg\{&-\int_t^T\alpha e^{r(T-s)}\big((\mu-r)\xi_s+c(b_s)\big) ds -\int_t^T\alpha e^{r(T-s)}\xi_s dW_s\\
   &+\sum_{n=1}^{N_{T-t}}\alpha \big(b_{T_n}Y_n+\xi_{T_n} Z_n\mathds{1}_{(L,\infty)}(Y_n)\big)e^{r(T-T_n)}\bigg\}\bigg]
   \end{align*}
   is increasing in $Y_n$ (due to the positivity of $\xi_t$ for all $t\in[0,T]$) and hence $g^{\xi,b}(t,e_1)\le \ldots\le g^{\xi,b}(t,e_m)$.
   Therefore, by Lemma~\ref{propg}~(b) as well as Lemma~\ref{le:ineqsum}, we get
   \begin{align*}
   \sum_{j=1}^{m}p_j g^{\xi,b}(t,e_j) \int_L^\infty e^{\alpha \bar b y e^{r(T-t)}}f_j(y)dy
   \ge g^{\xi,b}(t,p) \int_L^\infty e^{\alpha \bar b y e^{r(T-t)}}\sum_{j=1}^m p_j f_j(y)dy.
   \end{align*}
   In summary, we have 
   \begin{equation*}
   \sum_{k=1}^{m}p_k\int_L^\infty g^{\xi,b}(t,J(p,y))e^{\alpha\bar b y e^{r(T-t)}}f_k(y)dy
   \ge g^{\xi,b}(t,p) \int_L^\infty e^{\alpha \bar b y e^{r(T-t)}}\bar F_p(dy),
   \end{equation*}
   for all $(\xi,b)\in\widetilde{\mathcal{U}}[t,T]$, which yields $v_1(t,p,\bar\xi,\bar b)\ge v_1^{\bar F_p}(t,\bar\xi,\bar b)$ by the same argumentation as in the proof of Proposition~\ref{pr:aprioribounds}. Therefore, we get $\xi_t^\star \le \xi^\star_{\bar F_{p_{t-}}}(t)$ under the assumptions $\xi_t^\star>0$.
\end{proof}

\subsection{Numerical results}
We have seen in the last subsection that it is easy to compute the optimal strategy in the case of full information and that this yields in some cases a bound on the optimal strategy in the case of  incomplete information.  In particular when we set $r=0$ then the strategy obtained through \eqref{full:xibstart} is a constant and does not depend on time, only on the final time horizon.

We have computed the optimal strategy in the case of full information for the following data: The volatility of the financial market is $\sigma=0.4$, the drift $\mu=0.3$ and the interest rate $r=0$. The claim arrival intensity is $\lambda=10$ and the claim sizes are exponentially distributed with parameter $\varrho=0.1$, i.e. $Y\sim Exp(0.1)$. Note that the moment generating function of the exponential distribution does exist only for $\alpha \in (0,\varrho)$. Thus, for all integrals to exist we have to make sure that $\alpha < \varrho$. Hence, we choose $\alpha=0.05$ which means that we are close to the risk-sensitive case. For $Z$ we choose a uniform distribution on $(0,1)$. The expected amount of claims per year in this model is $\Eop N_1 \Eop Y= 100$, so we should choose $(1+\theta)\kappa >100$. Indeed since the premium income itself  is below $(1+\theta)\kappa $ we set $(1+\theta)\kappa =350$. We compute now the optimal investment and reinsurance strategy for different level $L$. Note that the expected claim size is $10$. The larger $L$, the smaller will be the constructed dependency between the markets. For $L\to \infty$ we obtain independence. Figure \ref{fig:influenceL} and shows the results.


\begin{figure}
   \begin{center}
      \includegraphics[width=0.95\textwidth]{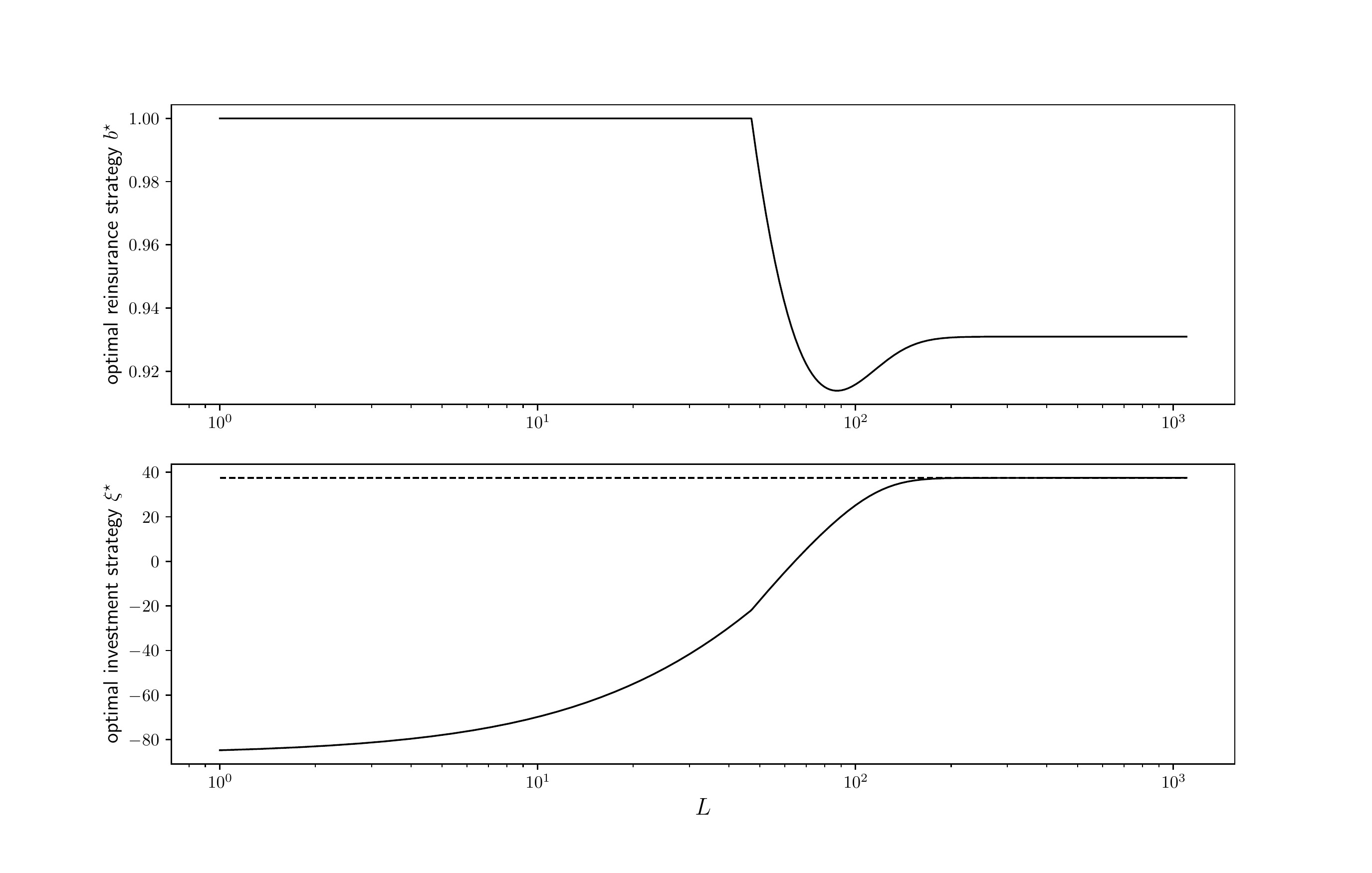}
   \end{center}
   \caption{Optimal strategy in the case of complete observation as a function of $L$ with logaritmically scaled $x$-axis.}
   \label{fig:influenceL}
\end{figure}

So what we obviously see here is that with increasing $L$ the investment is increasing. This may be expected since there will be less drop downs in the financial market when $L$ is large.  However, what is surprising is the following observation:  In the independent case the optimal investment with these parameters is $\xi^\star=\frac{\mu}{\alpha\sigma^2}=37.5$ and for $L\to\infty$ we can see a convergence. But even if $L=100$ which means that the threshold which produces the correlation is 10 times as high as an expected claim, i.e.\ very unlikely to occur (the probability indeed is $4.5^{-5}$) the investment in the risky asset is only $25.19$ compared to $37.5$. Thus, the insurance company is very conservative.  Of course we have a risk-sensitive criterion here, but nevertheless the impact of the dependency is amazing. For $L$ below $64.35$ there is a negative investment into the financial market.
The insurance company then uses the dependence to hedge against claims by shortselling stocks.
In the case of $L\to0$, the optimal investment converges to $-86.57$.
For smaller $L$ there is indeed no reinsurance.  
For $L\to\infty$ the value stabilizes around $b^\star=0.93$, i.e.\  only  $7\,\%$ of the claims are covered by reinsurance.

 In total, the conclusion that we draw here is that in this simple model introducing only a small correlation between claim sizes and behavior of the financial market has already a severe impact on the optimal investment strategy.


\section{Appendix}\label{sec:app}

\subsection{Clarke's generalized subdifferential}
The following definition and results are taken from Section~2.1 in \cite{Clarke1983}, where we restrict ourself to some univariate function by $f:\R\to\R$, which is sufficient for this paper.

\begin{definition}[\cite{Clarke1983}, p.\,25]\label{def:gendirder}
   Let $x\in\R$ be a given point and let $v\in\R$. Moreover, let $f$ be Lipschitz near $x$. Then the \emph{generalized directional derivative} of $f$ at $x$ in the direction $v$, denoted by $f^\circ(x;v)$, is defined by
   \begin{equation*}
   f^\circ(x;v) = \limsup_{y\to x, h\downarrow 0}\frac{f(y+h\,v)-f(y)}{h}.
   \end{equation*}
\end{definition}

\begin{definition}[\cite{Clarke1983}, p.\,27]\label{def:Clarksub}
   Let $f$ be Lipschitz near $x$. Then \emph{Clarke's generalized subdifferential} of $f$ at $x$, denoted by $\partial^C f(x)$, is given by
   \begin{equation*}
   \partial^Cf(x):=\big\{\xi\in\R: f^\circ(x;v)\ge \xi v \text{ for all }v\in\R\big\}.
   \end{equation*}
\end{definition}

\begin{proposition}[\cite{Clarke1983}, Prop.\,2.2.4]\label{pr:propClarkediff}
   If $f$ is strictly differentiable at $x$, then $f$ is Lipschitz near $x$ and $\partial^C f(x) = \{f^\prime(x)\}$. Conversely, if $f$ is Lipschitz near $x$ and $\partial^C f(x)$ reduces to a singleton $\{\zeta\}$, then $f$ is strictly differentiable at $x$ and $f^\prime(x)=\zeta$.
\end{proposition}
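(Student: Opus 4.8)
The plan is to prove the two implications separately, working throughout in the univariate setting $f\colon\R\to\R$ to which the appendix restricts. By positive homogeneity of the objects involved it suffices to test the directions $v=1$ and $v=-1$, and the case $v=-1$ reduces to $v=1$ via the substitution $y'=y+hv$ inside the difference quotient, so everything is controlled by the behaviour of $\big(f(y+h)-f(y)\big)/h$ as $y\to x$ and $h\downarrow0$. Note that in this one-dimensional case strict differentiability at $x$ amounts to the existence of the limit $f'(x):=\lim_{y\to x,\,h\downarrow0}\big(f(y+h)-f(y)\big)/h$, while $f^\circ(x;v)=\limsup_{y\to x,\,h\downarrow0}\big(f(y+hv)-f(y)\big)/h$ by Definition~\ref{def:gendirder}.

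For the forward implication, assume $f$ is strictly differentiable at $x$. I would first establish that $f$ is Lipschitz near $x$: if it were not, there would exist sequences $a_n\to x$ and $h_n\downarrow0$ with $\big|f(a_n+h_n)-f(a_n)\big|/h_n\to\infty$, contradicting the existence of the finite limit $f'(x)$. Next, strict differentiability says precisely that, for every $v\in\R$, the $\limsup$ defining $f^\circ(x;v)$ is in fact a limit, equal to $f'(x)\,v$; hence $\partial^C f(x)=\{\zeta:\zeta v\le f'(x)v\ \text{for all }v\in\R\}$ by Definition~\ref{def:Clarksub}, and testing $v=1$ and $v=-1$ forces $\zeta=f'(x)$, while $f'(x)$ itself clearly lies in this set, so $\partial^C f(x)=\{f'(x)\}$.

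For the converse, assume $f$ is Lipschitz near $x$ and $\partial^C f(x)=\{\zeta\}$. The structural fact to invoke --- valid because $f$ is Lipschitz near $x$ --- is that $v\mapsto f^\circ(x;v)$ is finite, positively homogeneous and subadditive, and coincides with the support function of the (nonempty, compact, convex) set $\partial^C f(x)$, i.e.\ $f^\circ(x;v)=\max\{\zeta' v:\zeta'\in\partial^C f(x)\}$; with $\partial^C f(x)$ a singleton this gives $f^\circ(x;v)=\zeta v$ for every $v$, in particular $f^\circ(x;1)=\zeta$ and $f^\circ(x;-1)=-\zeta$. Rewriting $f^\circ(x;-1)=\limsup_{y\to x,\,h\downarrow0}\big(f(y-h)-f(y)\big)/h=-\liminf_{y\to x,\,h\downarrow0}\big(f(y+h)-f(y)\big)/h$ through the substitution $y'=y-h$, we obtain $\liminf=\zeta=\limsup$ for the quotient $\big(f(y+h)-f(y)\big)/h$, so this limit exists and equals $\zeta$ --- which is exactly strict differentiability of $f$ at $x$ with $f'(x)=\zeta$.

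The only genuinely non-elementary ingredient is the support-function representation $f^\circ(x;v)=\max\{\zeta' v:\zeta'\in\partial^C f(x)\}$ used in the converse; this is a standard fact of the Clarke calculus, resting on the sublinearity of $f^\circ(x;\cdot)$ and, through it, on a Hahn--Banach argument (here merely the separation of a point from a compact convex subset of $\R$), and I would cite it from \cite{Clarke1983} rather than reprove it. Everything else is elementary manipulation of difference quotients; the one place that calls for a little care is deducing, in the forward direction, the local Lipschitz property and the collapse of the $\limsup$ in $f^\circ$ to a genuine limit from strict differentiability --- in higher dimensions this would lean on the uniform-convergence clause in the definition of strict differentiability, but it is immediate in the one-dimensional situation relevant here.
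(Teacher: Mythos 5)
Your proof is correct. Note, though, that the paper does not actually prove Proposition~\ref{pr:propClarkediff}; it is stated purely as a citation to \cite{Clarke1983}, Prop.~2.2.4, so there is no internal argument to compare against. Your argument is in substance Clarke's own: the forward direction by showing that strict differentiability collapses the $\limsup$ in $f^\circ(x;\cdot)$ to the linear function $v\mapsto f'(x)v$ and then reading off the subdifferential from Definition~\ref{def:Clarksub}, and the converse by invoking the support-function identity $f^\circ(x;v)=\max\{\zeta'v:\zeta'\in\partial^C f(x)\}$ (Clarke Prop.~2.1.2) to get $f^\circ(x;1)=\zeta=-f^\circ(x;-1)$ and hence equality of $\limsup$ and $\liminf$ of the difference quotient. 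The reduction to the univariate case is handled correctly (in particular, the observation that the uniformity clause in strict differentiability is automatic in one dimension), and the one non-elementary ingredient you cite rather than reprove is exactly the right one.
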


\begin{theorem}[\cite{Clarke1983}, Thm.\,2.5.1]\label{th:genCgco}
   Let $f$ be Lipschitz near $x$ and let $S$ be an arbitrary set of Lebesgue-measure $0$ in $\R$. Moreover, the set of points, at which the function $f$ is not differentiable, is denoted by $\Omega_f$.
   Then
   \begin{equation*}
   \partial^C f(x) = co\Big\{\lim_{n\to\infty} f^\prime(x_n): x_n\to x, x_n\notin S, x_n\notin\Omega_f\Big\}.
   \end{equation*}
\end{theorem}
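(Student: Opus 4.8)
The plan is to prove the two inclusions of the claimed identity separately; write $\mathcal{D}:=\{\lim_{n\to\infty}f'(x_n):x_n\to x,\ x_n\notin S,\ x_n\notin\Omega_f\}$ for the set whose convex hull appears on the right-hand side. Since $f$ is Lipschitz near $x$ it is differentiable almost everywhere there, so $S\cup\Omega_f$ is Lebesgue-null; hence every punctured neighbourhood of $x$ meets the complement of $S\cup\Omega_f$, on which $\lvert f'\rvert$ is bounded by the local Lipschitz constant. Therefore $\mathcal{D}$ is non-empty, closed and bounded, so $co\,\mathcal{D}=[\min\mathcal{D},\max\mathcal{D}]$ is a compact interval with $\max\mathcal{D}=\limsup_{s\to x,\,s\notin S\cup\Omega_f}f'(s)$ and $\min\mathcal{D}=\liminf_{s\to x,\,s\notin S\cup\Omega_f}f'(s)$. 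I will also use two elementary properties of $v\mapsto f^\circ(x;v)$ read off from Definition~\ref{def:gendirder}: positive homogeneity, $f^\circ(x;tv)=t\,f^\circ(x;v)$ for $t>0$; and subadditivity, $f^\circ(x;v_1+v_2)\le f^\circ(x;v_1)+f^\circ(x;v_2)$, obtained by splitting $f(y+h(v_1+v_2))-f(y)=\bigl[f(y+hv_1+hv_2)-f(y+hv_1)\bigr]+\bigl[f(y+hv_1)-f(y)\bigr]$ and taking limsups. Fed into Definition~\ref{def:Clarksub}, these give the description $\partial^Cf(x)=\{\zeta\in\R:-f^\circ(x;-1)\le\zeta\le f^\circ(x;1)\}$, a compact interval, non-empty because $0=f^\circ(x;0)\le f^\circ(x;1)+f^\circ(x;-1)$.

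For the inclusion $co\,\mathcal{D}\subseteq\partial^Cf(x)$, convexity of $\partial^Cf(x)$ reduces matters to $\mathcal{D}\subseteq\partial^Cf(x)$. Fix $\zeta=\lim_n f'(x_n)$ with $x_n\to x$, $x_n\notin S\cup\Omega_f$. For each $n$, differentiability of $f$ at $x_n$ lets me pick $h_n\in(0,1/n)$ with $\bigl\lvert\tfrac{f(x_n+h_n)-f(x_n)}{h_n}-f'(x_n)\bigr\rvert<1/n$; since $x_n\to x$ and $h_n\downarrow0$, inserting the sequence $(x_n,h_n)$ into the limsup of Definition~\ref{def:gendirder} yields $f^\circ(x;1)\ge\limsup_n\bigl(f'(x_n)-1/n\bigr)=\zeta$. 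The analogous choice based on $\tfrac{f(x_n-h)-f(x_n)}{h}\to-f'(x_n)$ yields $f^\circ(x;-1)\ge-\zeta$. By positive homogeneity these two inequalities amount to $f^\circ(x;v)\ge\zeta v$ for every $v\in\R$, i.e.\ $\zeta\in\partial^Cf(x)$.

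For the reverse inclusion $\partial^Cf(x)\subseteq co\,\mathcal{D}$, by the interval descriptions it suffices to show $f^\circ(x;1)\le\max\mathcal{D}$ and $f^\circ(x;-1)\le-\min\mathcal{D}$. A Lipschitz function is absolutely continuous, so for any $y$ and $h>0$ one has $f(y+h)-f(y)=\int_y^{y+h}f'(s)\,ds=\int_{[y,y+h]\setminus(\{x\}\cup S\cup\Omega_f)}f'(s)\,ds$, the discarded set being null; hence $\tfrac{f(y+h)-f(y)}{h}\le\sup\{f'(s):s\in[y,y+h],\ s\ne x,\ s\notin S\cup\Omega_f\}$. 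If $(y_k,h_k)$ is a sequence with $y_k\to x$, $h_k\downarrow0$ and $\tfrac{f(y_k+h_k)-f(y_k)}{h_k}\to f^\circ(x;1)$, then every point of $[y_k,y_k+h_k]$ lies within $\delta_k:=\lvert y_k-x\rvert+h_k\to0$ of $x$, so the above supremum is at most $g(\delta_k)$, where $g(\delta):=\sup\{f'(s):0<\lvert s-x\rvert\le\delta,\ s\notin S\cup\Omega_f\}$ is non-decreasing with $g(0+)=\max\mathcal{D}$; letting $k\to\infty$ gives $f^\circ(x;1)\le\max\mathcal{D}$. The bound for $f^\circ(x;-1)$ follows identically from $f(y-h)-f(y)=-\int_{y-h}^{y}f'(s)\,ds$, with $\liminf$ replacing $\limsup$. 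Combining the two inclusions gives $\partial^Cf(x)=co\,\mathcal{D}$.

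The step I expect to need the most care is the reverse inclusion: one has to interchange the limsup over the pairs $(y,h)$ in Definition~\ref{def:gendirder} with the ``essential limsup'' of $f'$ at $x$ along the complement of $S\cup\Omega_f$. This is exactly where absolute continuity of $f$ (turning the difference quotient into an average of $f'$) and the null-set hypothesis on $S$ (discarding it harmlessly from that average) are genuinely used; it requires the observation that as $y\to x$ and $h\downarrow0$ the \emph{whole} interval $[y,y+h]$, not just its endpoints, is eventually trapped in an arbitrarily small neighbourhood of $x$, together with the monotonicity of $\delta\mapsto g(\delta)$ that forces $g(\delta_k)\to\max\mathcal{D}$.
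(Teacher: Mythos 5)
Your proof is correct. Be aware that the paper does not prove this statement at all: it is quoted as Theorem 2.5.1 of Clarke's monograph, and the appendix explicitly restricts to univariate $f:\R\to\R$, which is exactly the case you treat, so your argument fills a gap the authors delegate to the literature. The first inclusion $co\,\mathcal{D}\subseteq\partial^C f(x)$ is handled the same way in any dimension; the substance lies in the reverse inclusion, where Clarke's general proof represents $f^\circ(x;v)$ via the support function of $\partial^C f(x)$ and integrates the gradient along line segments, invoking Fubini's theorem to ensure that almost every segment meets the null set $S\cup\Omega_f$ in a set of one-dimensional measure zero. In $\R$ that step collapses, exactly as you exploit: Lipschitz implies absolutely continuous, the difference quotient becomes an average of $f'$ over $[y,y+h]$, the null set is discarded harmlessly from that average, and since the whole interval is eventually trapped in a small neighbourhood of $x$, the average is dominated by $g(\delta_k)\to g(0+)=\max\mathcal{D}$. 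Your preliminary reduction of $\partial^C f(x)$ to the compact interval $[-f^\circ(x;-1),\,f^\circ(x;1)]$ via positive homogeneity and subadditivity of $v\mapsto f^\circ(x;v)$ is also standard and correctly derived, as is the compactness of $\mathcal{D}$ from the local bound $|f'|\le\mathrm{Lip}(f)$. What your elementary route buys is a self-contained one-dimensional proof free of the support-function and Fubini machinery; what it gives up is the $n$-dimensional statement, which this paper never needs.
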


\subsection{Auxiliary Results}

From now on, we denote by $f:[0,T]\times\R\to\R$ the function which is defined by
\begin{equation}\label{f}
f(t,x) := -e^{-\alpha x e^{r(T-t)}}.
\end{equation}

\begin{lemma}\label{Qxibt}
   Let $t\in[0,T]$ and let $(\xi,b)\in\mathcal{U}[0,T]$ be an arbitrary admissible strategy. We set
   \begin{equation}\label{density}
   \begin{aligned}
L^{\xi,b}_t &:= \exp\bigg\{ -\int_0^t \alpha\sigma e^{r(T-s)}\xi_s dW_s -\frac12 \int_0^t \alpha^2\sigma^2e^{2r(T-s)}\xi_s^2 ds \\
&\;\quad+\int_0^t\int_E \alpha(b_s y + \xi_s z \mathds{1}_{(L,\infty)}(y)) e^{r(T-s)}\Psi(ds, d(y,z)) + \lambda t \\
&\;\quad -\int_0^t\lambda \sum_{k=1}^m p_k(s) \int_0^\infty e^{\alpha b_s y e^{r(T-s)}}\int_{(0,1)} e^{\alpha \xi_s z\mathds{1}_{(L,\infty)}(y) e^{r(T-s)}}Q(dz)f_k(y)dy ds\bigg\}.
\end{aligned}
\end{equation}
   Then, a possibly substochastic  measure on $(\Omega,\mathcal{G}_t)$ is defined by $\Q^{\xi,b}_t(A):=\int_A L^{\xi,b}_t d\Pop$, $A\in\mathcal{G}_t$, for every $t\in[0,T]$, i.e.\ $\frac{d\Q^{\xi,b}_t}{d\Pop}:=L^{\xi,b}_t$. The  measures $\Q^{\xi,b}_t$ and $\Pop$ are equivalent.
\end{lemma}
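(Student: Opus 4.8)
The plan is to recognise $(L^{\xi,b}_t)_{t\in[0,T]}$ as the Dol\'eans--Dade stochastic exponential of a local $\Pop$-martingale and then invoke the elementary fact that a nonnegative local martingale is a supermartingale. Set
\[
M_t := -\int_0^t \alpha\sigma e^{r(T-s)}\xi_s\, dW_s + \int_0^t\!\!\int_E \big(e^{\alpha(b_s y + \xi_s z\mathds{1}_{(L,\infty)}(y))e^{r(T-s)}}-1\big)\hat\Psi(ds,d(y,z)),
\]
with $\hat\Psi$ the $\mathfrak G$-compensated jump measure from \eqref{Psihat}. First I would check that $M$ is a well-defined local $\Pop$-martingale on $[0,T]$: its continuous part has a bounded integrand since $|\xi_s|\le K$, so it is a genuine $L^2$-martingale; for the compensated-jump part it suffices that $\int_0^t\int_E |e^{\alpha(b_s y+\xi_s z\mathds{1}_{(L,\infty)}(y))e^{r(T-s)}}-1|\,\hat\nu(s,d(y,z))\,ds<\infty$ $\Pop$-a.s. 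By Proposition~\ref{GintkernelPsi}, $\hat\nu(s,d(y,z))=\lambda\sum_{k=1}^m p_k(s)f_k(y)dy\,Q(dz)$, and since $p_k(s)\le 1$, $0\le b_s\le 1$, $|\xi_s|\le K$ and $e^{r(T-s)}$ is bounded on $[0,T]$, this integrand is dominated by a deterministic multiple of $\sum_k \int_0^\infty e^{\alpha b_s y e^{r(T-s)}}f_k(y)dy + 1$, which is finite by the standing assumption $M_j(z)<\infty$ for all $z\in\R$ and bounded uniformly in $s$.

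Next I would verify the identity $L^{\xi,b}_t=\mathcal E(M)_t$. Expanding the Dol\'eans--Dade formula $\mathcal E(M)_t=\exp\{M_t-\tfrac12[M^c]_t\}\prod_{s\le t}(1+\Delta M_s)e^{-\Delta M_s}$, one has $[M^c]_t=\int_0^t\alpha^2\sigma^2 e^{2r(T-s)}\xi_s^2\, ds$ (the $-\tfrac12\int\cdots$ term of \eqref{density}); at each jump $1+\Delta M_s=e^{\alpha(b_s Y+\xi_s Z\mathds{1}_{(L,\infty)}(Y))e^{r(T-s)}}>0$, so $\prod_{s\le t}(1+\Delta M_s)=\exp\{\int_0^t\int_E \alpha(b_s y+\xi_s z\mathds{1}_{(L,\infty)}(y))e^{r(T-s)}\Psi(ds,d(y,z))\}$ (the third term of \eqref{density}); and the remaining factor $\prod_{s\le t}e^{-\Delta M_s}$ combines with the compensator part of $M$ and, using $\sum_k p_k(s)=1$ and $\int_{(0,1)}Q(dz)=1$, collapses to exactly $\lambda t-\int_0^t\lambda\sum_k p_k(s)\int_0^\infty e^{\alpha b_s y e^{r(T-s)}}\int_{(0,1)}e^{\alpha\xi_s z\mathds{1}_{(L,\infty)}(y)e^{r(T-s)}}Q(dz)f_k(y)dy\,ds$ (the last two terms of \eqref{density}). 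Equivalently one may apply It\^o's formula to the right-hand side of \eqref{density} and check that the $dt$-drift vanishes, i.e.\ $dL^{\xi,b}_t=L^{\xi,b}_{t-}\,dM_t$ with $L^{\xi,b}_0=1$.

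Since every jump factor $1+\Delta M_s$ is strictly positive, $\mathcal E(M)$ is a strictly positive local $\Pop$-martingale starting at $1$; being nonnegative it is a $\Pop$-supermartingale (localise $M$ and apply conditional Fatou), hence $\Eop[L^{\xi,b}_t]=\Eop[\mathcal E(M)_t]\le\mathcal E(M)_0=1$. Thus $\Q^{\xi,b}_t(A):=\int_A L^{\xi,b}_t\,d\Pop$, $A\in\mathcal G_t$, defines a possibly substochastic measure on $(\Omega,\mathcal G_t)$ with density $L^{\xi,b}_t$. For equivalence, observe that the exponent in \eqref{density} is $\Pop$-a.s.\ finite: the $dW$-integral is a.s.\ finite (it is square integrable), the Lebesgue and compensator integrals are finite by the bounds above, and $\int_0^t\int_E\alpha(b_s y+\xi_s z\mathds{1}_{(L,\infty)}(y))e^{r(T-s)}\Psi(ds,d(y,z))$ is a finite sum of finite terms because $N$ has only finitely many jumps on $[0,t]$. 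Hence $L^{\xi,b}_t\in(0,\infty)$ $\Pop$-a.s., so for $A\in\mathcal G_t$ we have $\Q^{\xi,b}_t(A)=0\iff\Pop(A)=0$; therefore $\Q^{\xi,b}_t$ and $\Pop$ have the same null sets and are equivalent.

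I expect the only genuinely delicate point to be the integrability underpinning the first two steps --- namely that the compensated jump integral in $M$ is a bona fide local martingale and that the compensator term in \eqref{density} is finite. This is precisely where the hypothesis $M_j(z)<\infty$ for all $z\in\R$ (rather than merely near $0$) is indispensable, since the relevant integrands grow exponentially in the claim size $y$; everything else is routine bookkeeping with the stochastic-exponential formula.
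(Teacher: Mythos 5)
Your proof is correct and follows essentially the same route as the paper: identify $L^{\xi,b}$ as the Doléans--Dade exponential of the local martingale driven by $-\alpha\sigma e^{r(T-\cdot)}\xi\,dW$ and the compensated jump term, expand $\mathcal E(M)$ to recover \eqref{density}, and use that a nonnegative local martingale is a supermartingale to get $\Eop[L^{\xi,b}_t]\le 1$. You go somewhat beyond the paper's writeup in two places — you spell out the integrability bounds that make the jump integral a genuine local martingale, and you explicitly argue equivalence by showing $L^{\xi,b}_t\in(0,\infty)$ a.s., a step the paper leaves implicit — but this is elaboration of the same argument, not a different one.
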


\begin{proof}
   First, we show that $(L_t^{\xi,b})_{t\ge0}$ is the Dol\'{e}ans-Dade exponential of the martingale $(Z_t)_{t\ge0}$ defined by
   \begin{equation*}
   Z_t := -\int_0^t \alpha \sigma e^{r(T-s)}\xi_sdW_s + \int_0^t \int_E \Big(e^{\alpha(b_s y+ \xi_s z \mathds{1}_{(L,\infty)}(y))e^{r(T-s)}}-1\Big)\hat{\Psi}(ds,d(y,z)).
   \end{equation*}
   That is,
   \begin{equation*}
   L_t^{\xi,b} = \mathcal{E}(Z_t) = e^{Z_t - \frac12\int_0^t \alpha^2\sigma^2e^{2r(T-s)}\xi_s^2ds}\prod_{0<s\le t}(1+\Delta Z_s)e^{-\Delta Z_s},
   \end{equation*}
   where
   \begin{align*}
   \prod_{0<s\le t}\!(1\!+\!\Delta Z_s)e^{-\Delta Z_s} &= \exp\bigg\{\int_0^t\!\int_E \alpha(b_s y+ \xi_s z \mathds{1}_{(L,\infty)}(y))e^{r(T-s)}\Psi(ds,d(y,z))\bigg\} \\
   &\quad\times\! \exp\bigg\{\!\!-\!\int_0^t\!\int_E\! \Big(\!\exp\Big\{\alpha(b_s y+ \xi_s z \mathds{1}_{(L,\infty)}(y))e^{r(T-s)}\Big\}\!-\!1\Big)\Psi(ds,d(y,z))\bigg\}.
   \end{align*}
   This implies the announced representation \eqref{density} of $(L_t^{\xi,b})_{t\ge0}$ since $\hat\Psi-\Psi = \hat\nu$. As $(L_t^{\xi,b})_{t\ge0}$  is a non-negative local martingale, it is a supermartingale and hence $\Eop L_t^{\xi,b} \le 1$ for all $t\ge 0$. 
\end{proof}

\begin{lemma}\label{fbounded}
   Let $\xib\in \mathcal{U}[0,T]$ and let $\prT{L^{\xi,b}}$ be the density process given by~\eqref{density}.
   Then there exists a constant $0<K_2<\infty$ such that 
   \begin{equation*}
   \frac{\big|f(t,X^{\xi,b}_t)\big|}{L^{\xi,b}_t}\le K_2\quad\Pop\text{-a.s.}
   \end{equation*}
   for all $t\in[0,T]$.
\end{lemma}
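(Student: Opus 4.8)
The plan is to derive an explicit pathwise representation of $f(t,X^{\xi,b}_t)$ and to compare it, term by term, with the density $L^{\xi,b}_t$ of Lemma~\ref{Qxibt} given in~\eqref{density}. The crucial point will be that all stochastic integrals cancel in the quotient, leaving a pathwise bounded remainder which can then be estimated by hand.

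First I would apply It\^o's formula to the semimartingale $t\mapsto -\alpha X^{\xi,b}_t e^{r(T-t)}$, whose exponential is $|f(t,X^{\xi,b}_t)|$ in view of~\eqref{f}. Since this map is affine in the surplus there is no second-order contribution; the drift terms proportional to $rX^{\xi,b}_t$ cancel (the $-\alpha e^{r(T-t)}rX^{\xi,b}_t$ coming from $dX^{\xi,b}_t$ against the $+\alpha rX^{\xi,b}_t e^{r(T-t)}$ coming from differentiating $e^{r(T-t)}$), and, $\Psi$ being a marked point process, the jump part is exactly the $\Psi$-integral of $\alpha\big(b_s y+\xi_s z\mathds{1}_{(L,\infty)}(y)\big)e^{r(T-s)}$. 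Together with $X^{\xi,b}_0=x_0$ this yields
\begin{equation*}
\begin{aligned}
|f(t,X^{\xi,b}_t)| = e^{-\alpha x_0 e^{rT}}\exp\bigg\{&-\int_0^t\alpha e^{r(T-s)}\big((\mu-r)\xi_s+c(b_s)\big)ds-\int_0^t\alpha\sigma e^{r(T-s)}\xi_s dW_s\\
&+\int_0^t\int_E\alpha\big(b_s y+\xi_s z\mathds{1}_{(L,\infty)}(y)\big)e^{r(T-s)}\Psi(ds,d(y,z))\bigg\}.
\end{aligned}
\end{equation*}

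Dividing by $L^{\xi,b}_t$ from~\eqref{density}, the Brownian integral $\int_0^t\alpha\sigma e^{r(T-s)}\xi_s dW_s$ and the $\Psi$-integral occur in both exponents with identical coefficients and therefore drop out of the quotient, leaving the pathwise expression
\begin{equation*}
\begin{aligned}
\frac{|f(t,X^{\xi,b}_t)|}{L^{\xi,b}_t} = e^{-\alpha x_0 e^{rT}}\exp\bigg\{&-\int_0^t\alpha e^{r(T-s)}\big((\mu-r)\xi_s+c(b_s)\big)ds+\frac12\int_0^t\alpha^2\sigma^2 e^{2r(T-s)}\xi_s^2\,ds-\lambda t\\
&+\int_0^t\lambda\sum_{k=1}^m p_k(s)\int_0^\infty e^{\alpha b_s y e^{r(T-s)}}\int_{(0,1)}e^{\alpha\xi_s z\mathds{1}_{(L,\infty)}(y)e^{r(T-s)}}Q(dz)f_k(y)dy\,ds\bigg\}.
\end{aligned}
\end{equation*}

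It then remains to bound this exponent from above, uniformly in $t\in[0,T]$ and $\omega\in\Omega$. Using $|\xi_s|\le K$, $b_s\in[0,1]$, $0\le s\le t\le T$, $e^{r(T-s)}\le e^{|r|T}$, $\sum_{k=1}^m p_k(s)=1$, the boundedness of $c(\cdot)$ on $[0,1]$ read off from~\eqref{eq:cb}, and --- crucially --- the finiteness of the claim-size moment generating functions $M_k$ assumed in Section~\ref{sec:model} (so that $\int_0^\infty e^{\alpha b_s y e^{r(T-s)}}f_k(y)dy\le M_k(\alpha e^{|r|T})$, while $\int_{(0,1)}e^{\alpha\xi_s z\mathds{1}_{(L,\infty)}(y)e^{r(T-s)}}Q(dz)\le e^{\alpha K e^{|r|T}}$ because $Q$ is a law on $(0,1)$), each of the four summands in the exponent is dominated by a finite constant depending only on the model data; collecting these bounds gives the assertion with $K_2$ equal to $e^{-\alpha x_0 e^{rT}}$ times the exponential of the sum of those constants. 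I expect no genuine obstacle here; the one step requiring care is the bookkeeping in the It\^o expansion --- making sure the $rX^{\xi,b}_t$ terms cancel and that the jump term enters correctly without a stray compensator --- after which everything reduces to routine estimates.
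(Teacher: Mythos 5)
Your proof is correct and follows essentially the same route as the paper: derive an explicit pathwise representation of the exponent $-\alpha X^{\xi,b}_t e^{r(T-t)}$, observe that upon dividing $|f(t,X^{\xi,b}_t)|$ by $L^{\xi,b}_t$ both the $dW$-integral and the $\Psi$-integral cancel exactly, and bound the remaining finite-variation exponent pathwise using $|\xi_s|\le K$, $b_s\in[0,1]$, $e^{r(T-s)}\le e^{|r|T}$, $\sum_k p_k(s)=1$, and the finiteness of the $M_k$. The paper obtains the representation by citing the explicit SDE solution (Protter, Thm.\,V.52) for $X^{\xi,b}_t$ and then substituting into $f$, while you apply It\^o directly to the affine functional $-\alpha X^{\xi,b}_t e^{r(T-t)}$ and note the cancellation of the $rX$-terms; these are equivalent computations leading to the identical quotient formula, and your subsequent bounding (with $e^{\alpha K e^{|r|T}}$ in place of the paper's $M_Z(\alpha K e^{|r|T})$) differs only in constants.
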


\begin{proof}
   Fix $t\in[0,T]$ and $\xib\in  \mathcal{U}[0,t]$. 
   Using Theorem V.52 in \cite{Protter2005}, the unique solution of \eqref{wealth} is 
   \begin{align*}
   X_t^{\xi,b} &= x_0e^{rt} + \int_0^t e^{r(t-s)}\big((\mu-r)\xi_s+c(b_s)\big)ds + \int_0^t \sigma e^{r(t-s)}\xi_sdW_s \\
   &\quad + \int_0^t \int_E e^{r(t-s)} \big(b_s y + \xi_s z\mathds{1}_{(L,\infty)}(y)\big)\Psi(ds,d(y,z))
   \end{align*}
   Hence
   \begin{align*}
   &\frac{\big|f(t,X^{\xi,b}_t)\big|}{L^{\xi,b}_t} = \exp\bigg\{-\alpha x_0 e^{rT} -\int_0^t \alpha e^{r(T-s)}\Big( (\mu-r)\xi_s +c(b_s) - \frac12\alpha\sigma^2 e^{r(T-s)}\xi_s^2\Big) ds  \\
   &\;+ \int_0^t\lambda \sum_{k=1}^m p_k(s) \int_0^\infty e^{\alpha b_s y e^{r(T-s)}}\int_{(0,1)} e^{\alpha \xi_s z\mathds{1}_{(L,\infty)}(y) e^{r(T-s)}}Q(dz)f_k(y)dy ds -\lambda t\bigg\} \\
   &\le\exp\bigg\{\bigg(\alpha e^{|r|T}\big(|\mu-r|K+(2+\eta+\theta)\kappa\big)+\frac12\alpha^2\,\sigma^2\, e^{2|r|T}K^2\\
   &\qquad\qquad + \lambda\sum_{k=1}^m M_k\big(\alpha e^{|r|T}\big)M_Z\big(\alpha K e^{|r|T}\big)\bigg)T\bigg\}=: K_2,
   \end{align*}
   where $0<K_2<\infty$ is independent of $t\in[0,T]$ as well as $\xib$.
\end{proof}

For convenience we define
\begin{equation}\label{H}
\mathcal{H} h(t,p;\xi,b) := \mathcal{L} h(t,p;\xi,b) + h_t(t,p)
\end{equation}
for all functions $h:[0,T]\times\Delta_m\to(0,\infty)$ and $(\xi,b)\in \R\times [0,1]$, where the right-hand side is well-defined.
Using this notation, the generalized HJB equation~\eqref{HJBg} can be written as
\begin{equation}\label{HHJB}
0= \inf_{(\xi,b)\in[-K,K]\times[0,1]}\{\mathcal{H} g(t,p;\xi,b)\}
\end{equation}
at those points $(t,p)$ with existing $g_t(t,p)$.

\begin{lemma}\label{characG}
	Suppose that $(\xi,b)\in \mathcal U[0,T]$ is an arbitrary strategy and $h:[0,T]\times\Delta_m\to(0,\infty)$ is a bounded function such that $t\mapsto h(t,p)$ is absolutely continuous on $[0,T]$ for all $p\in\Delta_m$ and $p\mapsto h(t,p)$ is continuous on $\Delta_m$ for all $t\in[0,T]$. Then, the function $G:[0,T]\times\R\times\Delta_m\to\R$ defined by
	\begin{equation*}
	G(t,x,p) := -e^{-\alpha x e^{r(T-t)}}h(t,p)
	\end{equation*}
	satisfies
	\begin{equation*}
	d G(t,X^{\xi,b}_t,p_t)  = -e^{-\alpha X^{\xi,b}_t e^{r(T-t)}}\mathcal{H} h(t,p_t;\xi_t,b_t)dt + d\eta^{\xi,b}_t, \quad t\in[0,T],
	\end{equation*}
	where $(\eta^{\xi,b}_t)_{t\in[0,T]}$ is a martingale w.r.t.\ ${\mathfrak G}$ and we set $\mathcal{H} h(t,p;\xi,b)$ zero at those points $(t,p)$ where $h_t$ does not exist.
\end{lemma}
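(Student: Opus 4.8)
The statement asks to show that $G(t,X^{\xi,b}_t,p_t)$, with $G(t,x,p)=-e^{-\alpha x e^{r(T-t)}}h(t,p)$, has a semimartingale decomposition whose finite-variation part is exactly $-e^{-\alpha X^{\xi,b}_t e^{r(T-t)}}\mathcal{H} h(t,p_t;\xi_t,b_t)\,dt$ and whose martingale part is some $\mathfrak{G}$-martingale $\eta^{\xi,b}$. The natural approach is a direct application of the Itô formula for semimartingales with jumps (as in Protter \cite{Protter2005}, Theorem II.33) to $G$ evaluated along the pair $(X^{\xi,b},p)$, followed by careful bookkeeping: collect the $dt$-terms, verify they assemble into $-e^{-\alpha X^{\xi,b}_t e^{r(T-t)}}\mathcal{H} h$, and check that the remaining terms (the $dW$-integral and the compensated-jump integral against $\hat\Psi$) form a martingale.

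First I would write out the dynamics of the two driving processes: $X^{\xi,b}$ is given by \eqref{wealth} with its $dt$-drift, its $\xi_t\sigma\,dW_t$ part and its jump integral against $\hat\Psi$; $p_t$ is given by \eqref{pj}, a pure-jump process whose jumps are also governed by $\Psi$, i.e.\ by the same marked point process. Both jump simultaneously at the claim times $T_n$: $X$ jumps by $-(b_{T_n}Y_n+\xi_{T_n}Z_n\mathds{1}_{(L,\infty)}(Y_n))$ and $p$ jumps from $p_{T_n-}$ to $J(p_{T_n-},Y_n)$. Then I would apply Itô's formula to $G$: since $t\mapsto h(t,p)$ is only absolutely continuous, the time-derivative term is $h_t(t,p_t)\,dt$ for a.e.\ $t$ (this is where the a.c.\ hypothesis is used, via the fundamental theorem of calculus for absolutely continuous functions), and at the $t$-null set where $h_t$ fails to exist we simply declare $\mathcal{H}h$ to be zero there, as the statement allows. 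The continuous-martingale part of $X$ contributes a $\frac12 \sigma^2 \xi_t^2$ second-order term together with $G_{xx}$, and the $G_x$-times-drift term produces the $(\mu-r)\xi_t + c(b_t)$ contribution and the $rX^{\xi,b}_t$ contribution (the latter cancels against the $G_t$ term $\alpha r X^{\xi,b}_t e^{r(T-t)}$, which is the point of the $e^{r(T-t)}$ scaling in the separation ansatz).

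The key structural step is handling the jumps. Each jump of $G$ is $G(t,X_{t-}+\Delta X_t,\, J(p_{t-},Y_t)) - G(t,X_{t-},p_{t-})$, which factors neatly because $G$ is a product of an exponential in $x$ and $h$ in $p$: the jump equals $-e^{-\alpha X_{t-}e^{r(T-t)}}\big(h(t,J(p_{t-},y))e^{\alpha(by+\xi z\mathds{1}_{(L,\infty)}(y))e^{r(T-t)}} - h(t,p_{t-})\big)$ evaluated at the mark. Summing these jumps and compensating with $\hat\nu$ (using Proposition~\ref{GintkernelPsi} for the $\mathfrak{G}$-intensity $\lambda\sum_k p_k(t)f_k(y)\,dy\,Q(dz)$), the compensator integral is precisely the term $\lambda\sum_k p_k\int_0^\infty h(t,J(p,y))e^{\alpha b y e^{r(T-t)}}\int_{(0,1)}e^{\alpha\xi z\mathds{1}_{(L,\infty)}(y)e^{r(T-t)}}Q(dz)f_k(y)\,dy$ appearing in $\gamma$ (see \eqref{eq:gamma}), together with the $-\lambda h(t,p)$ term coming from compensating the $-h(t,p_{t-})$ part; multiplied by $-e^{-\alpha X_{t-}e^{r(T-t)}}$ these are exactly the pieces of $-e^{-\alpha X e^{r(T-t)}}\mathcal{H}h$ not yet accounted for, once one also absorbs the $(\theta-\eta)\kappa$ and $(1+\theta)\kappa b$ terms that come from rewriting $c(b_t)$ via \eqref{eq:cb}. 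Collecting everything, the $dt$-part is $-e^{-\alpha X^{\xi,b}_t e^{r(T-t)}}\mathcal{H}h(t,p_t;\xi_t,b_t)\,dt$.

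Finally I would verify that $\eta^{\xi,b}$, which consists of $\int -e^{-\alpha X^{\xi,b}_s e^{r(T-s)}}h(s,p_s)\,\alpha\sigma e^{r(T-s)}\xi_s\,dW_s$ plus the compensated jump integral against $\hat\Psi$, is a genuine martingale and not merely a local one. This is the main obstacle. The boundedness of $h$, the bound $|\xi_t|\le K$, $b_t\in[0,1]$, and the finiteness of the moment generating functions $M_k(\cdot)$ and $M_Z(\cdot)$ on the relevant range all feed into integrability estimates; in spirit this is the same computation that yields the uniform bound $K_2$ in Lemma~\ref{fbounded} and the supermartingale property of $L^{\xi,b}$ in Lemma~\ref{Qxibt}. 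I would show that the integrand of the stochastic integral against $dW$ is in $L^2(dt\otimes d\Pop)$ on $[0,T]$ and that the jump integrand satisfies the integrability condition ensuring the compensated integral is a true martingale (using that the $\mathfrak{G}$-compensator has finite total mass on $[0,T]$ and the exponential weights are bounded by constants depending only on $\alpha, K, T, r$ and the mgf's). With these estimates in place, $\eta^{\xi,b}$ is a $\mathfrak{G}$-martingale on $[0,T]$, completing the proof.
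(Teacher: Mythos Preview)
Your proposal is correct and follows essentially the same route as the paper: an It\^o/product-rule expansion of $G(t,X^{\xi,b}_t,p_t)=f(t,X^{\xi,b}_t)h(t,p_t)$, identification of the $dt$-part with $-e^{-\alpha X^{\xi,b}_t e^{r(T-t)}}\mathcal{H}h(t,p_t;\xi_t,b_t)$ after compensating the jumps with $\hat\nu$, and then a martingale verification that relies precisely on the $K_2$-bound of Lemma~\ref{fbounded} (via the change of measure of Lemma~\ref{Qxibt}). The only cosmetic difference is that the paper organizes the computation through the product rule $dG=h\,df+f\,dh+d[f,h]$ and evaluates the covariation term explicitly, whereas you compute the jump of $G$ in one stroke; the resulting drift and martingale pieces are identical.
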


\begin{proof}
	 Let $\xib\in \mathcal U[0,T]$ and $h:[0,T]\times\Delta_m\to(0,\infty)$ be some function satisfying the conditions stated in the lemma and bounded with constant $0<K_0<\infty$. Applying the product rule  to $G\big(t,X^{\xi,b}_t,p_t\big)=f\big(t,X^{\xi,b}_t\big)h(t,p_t)$, we get
	\begin{equation*}
	d G\big(t,X^{\xi,b}_t,p_t\big) 
	= h(t,p_{t-})df\big(t,X^{\xi,b}_t\big) + f\big(t,X^{\xi,b}_{t-}\big) d h(t,p_t)  + d\big[f\big(\cdot,X^{\xi,b}_\cdot\big),h(\cdot,p_\cdot)\big]_t 
	\end{equation*}
	and hence
	\begin{equation}\label{G}
	\begin{aligned}
	&d G\big(t,X^{\xi,b}_t,p_t\big) \\
	&= f\big(t,X^{\xi,b}_t\big)h(t,p_t)\bigg(\alpha e^{r(T-t)}\Big(\frac12\alpha\sigma^2 e^{r(T-t)}\xi_t^2 - (\mu-r)\xi_t - c(b_t)\Big) \\
	&\quad + \lambda\sum_{k=1}^m p_k(t)\int_0^\infty e^{\alpha b_t y e^{r(T-t)}}\int_{(0,1)}\!e^{\alpha \xi_t z \mathds{1}_{(L,\infty)}(y)e^{r(T-t)}}Q(dz)f_k(y)dy - \lambda\bigg)dt \\
	&\quad-f\big(t,X^{\xi,b}_{t-}\big) h(t,p_{t-}) \alpha\sigma e^{r(T-t)} \xi_tdW_t \\
   &\quad+\int_0^\infty f\big(s,X^{\xi,b}_{t-}\big)h(t,p_{t-})\big(e^{\alpha b_t y e^{r(T-t)}}e^{\alpha \xi_t z \mathds{1}_{(L,\infty)}(y) e^{r(T-t)}}-1\big)\hat\Psi(dt,d(y,z)) \\
	&\quad+f\big(t,X^{\xi,b}_{t}\big) \bigg(h_t(t,p_t) - \lambda h(t,p_t) + \lambda\sum_{k=1}^m p_k(t)\int_0^\infty h(t,J(p_t,y))f_k(y)dy\bigg)dt \\
	&\quad+\int_0^\infty f\big(t,X^{\xi,b}_{t-}\big)\big(h(t,J(p_{t-},y))-h(t,p_{t-})\big)\hat{\Psi}(dt,dy,(0,1)) \\
	&\quad + d\big[f\big(\cdot,X^{\xi,b}_\cdot\big),h(\cdot,p_\cdot)\big]_t.
	\end{aligned}
	\end{equation}
Using the introduced compensated random measure $\hat{\Psi}$ the variation becomes
\begin{align*}
&d\big[f\big(\cdot,X^{\xi,b}_\cdot\big),h(\cdot,p_\cdot)\big]_t \\
&= \int_E f\big(t,X^{\xi,b}_{t-}\big)\big(h(t,J(p_{t-},y))-h(t,p_{t-})\big)\Big(e^{\alpha b_t y e^{r(T-t)}}e^{\alpha \xi_t z \mathds{1}_{(L,\infty)}(y) e^{r(T-t)}} -1\Big)\hat{\Psi}(dt, d(y,z)) \\
&\quad + \lambda f\big(t,X_t^{\xi,b}\big)\sum_{k=1}^m p_k(t) \int_0^\infty h(t,J(p_t,y))e^{\alpha b_t y e^{r(T-t)}}\int_{(0,1)}\!e^{\alpha \xi_t z \mathds{1}_{(L,\infty)}(y)e^{r(T-t)}}Q(dz)f_k(y)dydt \\
&\quad - \lambda f\big(t,X_t^{\xi,b}\big)h(t,p_t)\sum_{k=1}^m p_k(t) \int_0^\infty e^{\alpha b_t y e^{r(T-t)}}\int_{(0,1)}\!e^{\alpha \xi_t z \mathds{1}_{(L,\infty)}(y)e^{r(T-t)}}Q(dz)f_k(y)dydt \\
&\quad - \lambda f\big(t,X_t^{\xi,b}\big)\sum_{k=1}^m p_k(t) \int_0^\infty h(t,J(p_t,y))f_k(y)dydt + \lambda f\big(t,X_t^{\xi,b}\big)h(t,p_t)dt.
\end{align*}
  Substituting this into \eqref{G}, we obtain
\begin{align*}
&dG\big(t,X^{\xi,b}_t,p_t\big) \\
&=f\big(t,X^{\xi,b}_t\big)\bigg(- \alpha\, e^{r(T-t)}h(t,p_t)\Big((\mu-r)\xi_t+c(b_t)-\frac12\alpha\sigma^2 e^{r(T-t)}\xi_t^2\Big) \\
& + \lambda f\big(t,X^{\xi,b}_t\big)\sum_{k=1}^m p_k(t) \int_0^\infty h(t,J(p_t,y))e^{\alpha b_t y e^{r(T-t)}}\int_{(0,1)}\!e^{\alpha \xi_t z \mathds{1}_{(L,\infty)}(y)e^{r(T-t)}}Q(dz)f_k(y)dy\\
& - \lambda\,h(t,p_t) + h_t(t,p_t)\bigg) dt - f\big(t,X^{\xi,b}_{t-}\big)\,h(t,p_{t-})\alpha\sigma e^{r(T-t)} \xi_t dW_t - f\big(t,X^{\xi,b}_{t-}\big)h(t,p_{t-})\hat\Psi(dt,E)\\
& + \int_E f\big(t,X^{\xi,b}_{t-}\big)\big(h(t,J(p_{t-},y))-h(t,p_{t-})\big)e^{\alpha b_t y e^{r(T-t)}}e^{\alpha \xi_t z \mathds{1}_{(L,\infty)}(y) e^{r(T-t)}}\hat{\Psi}(dt, d(y,z)),
\end{align*}
Therefore, by definition of the operator $\mathcal{H}$ given in~\eqref{H}, we have
\begin{equation*}
d G\big(t,X^{\xi,b}_t,p_t\big) = f\big(t,X^{\xi,b}_{t}\big)\mathcal{H} h(t,p_t;\xi_t,b_t)dt + d\eta^{\xi,b}_t,
\end{equation*}
where $\eta^{\xi,b}_t := \bar\eta^{\xi,b}_t - \hat\eta^{\xi,b}_t - \tilde\eta^{\xi,b}_t$
with
\begin{align*}
\bar\eta^{\xi,b}_t &:= \int_0^t \int_E f\big(s,X^{\xi,b}_{s-}\big)\big(h(s,J(p_{s-},y))-h(s,p_{s-})\big)e^{\alpha b_s y e^{r(T-s)}}e^{\alpha \xi_s z \mathds{1}_{(L,\infty)}(y) e^{r(T-s)}}\hat{\Psi}(dt, d(y,z)), \\
\hat\eta^{\xi,b}_t &:=  \int_0^t f\big(s,X^{\xi,b}_{s-}\big)h(s,p_{s-})\hat\Psi(ds,E), \\
\tilde\eta^{\xi,b}_t &:= \int_0^t f\big(s,X^{\xi,b}_{s-}\big)h(s,p_{s-})\alpha \sigma e^{r(T-s)} \xi_s dW_s.
\end{align*}
To complete the proof we need to show that the introduced processes are martingales w.r.t.\ ${\mathfrak G}$ on $[0,T]$. 
According to Corollary VIII.C4 in \cite{bre}, the process $(\tilde\eta^{\xi,b}_t)_{t\ge0}$ is a martingale w.r.t.\ $\mathfrak{G}$ if 
\begin{equation*}
\Eop\bigg[\int_0^t\!\! \int_E\!\Big| f\big(s,X^{\xi,b}_{s}\big)\big(h(s,J(p_{s},y))-h(s,p_{s})\big)e^{\alpha b_s y e^{r(T-s)}}e^{\alpha \xi_s z \mathds{1}_{(L,\infty)}(y) e^{r(T-s)}}\Big| \hat\nu(ds,d(y,z))\bigg] <\infty.
\end{equation*}
Using the boundedness of $h$ with constant $K_0$, we obtain that the expectation above is less or equal to
\begin{equation*}
\lambda 2K_0 M_Z\big(\alpha K e^{|r|T}\big)\sum_{k=1}^m M_k\big(\alpha e^{|r|T}\big)\int_0^t \Eop\big[\big|f\big(s,X^{\xi,b}_{s}\big)\big|\big]ds,
\end{equation*}
where, by Lemma~\ref{fbounded},
\begin{equation*}
\Eop\big[\big|f\big(s,X^{\xi,b}_{s}\big)\big|\big] = \Eop_{\Q_s^{\xi,b}}\bigg[\frac{\big|f\big(s,X^{\xi,b}_{s}\big)\big|}{L_s^{\xi,b}}\bigg] \le K_2,
\end{equation*}
which yields the desired finiteness.
Similarly the martingale property of $(\hat\eta^{\xi,b}_t)_{t\ge0}$ can be seen.
Moreover, by the boundedness of $h$ and $\xi$ as well as Lemma~\ref{fbounded}, it follows
$$\Eop\Big[\big(f\big(s,X^{\xi,b}_{s-}\big)h(s,p_{s-})\alpha \sigma e^{r(T-s)} \xi_s\big)^2\Big]<\infty,$$
which implies the martingale property of $(\tilde\eta^{\xi,b}_t)_{t\ge0}$.
\end{proof}

The following result can be found in \cite{Mitrinovic1993}.

\begin{lemma}\label{le:ineqsum}
	Let $\alpha_1\le \ldots \le \alpha_n$ and $\beta_1\le\ldots\le\beta_n$ be real numbers and $(p_1,\ldots,p_n)\in\Delta_n$. Then
	\begin{equation*}
	\sum_{j=1}^n p_j\alpha_j\beta_j \ge \sum_{j=1}^n p_j\alpha_j\sum_{k=1}^n p_k\beta_k.
	\end{equation*}
\end{lemma}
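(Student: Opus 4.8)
The plan is to prove this weighted Chebyshev-type sum inequality by the classical symmetrization trick. First I would introduce the auxiliary double sum
\begin{equation*}
D := \sum_{j=1}^n\sum_{k=1}^n p_jp_k(\alpha_j-\alpha_k)(\beta_j-\beta_k),
\end{equation*}
and observe that every summand is nonnegative: since both $(\alpha_j)$ and $(\beta_j)$ are nondecreasing, the factors $\alpha_j-\alpha_k$ and $\beta_j-\beta_k$ always carry the same sign (one of them possibly being zero), and $p_j,p_k\ge0$ because $(p_1,\ldots,p_n)\in\Delta_n$. Hence $D\ge0$.

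Next I would expand $D$ by multiplying out $(\alpha_j-\alpha_k)(\beta_j-\beta_k)=\alpha_j\beta_j-\alpha_j\beta_k-\alpha_k\beta_j+\alpha_k\beta_k$ and using the normalization $\sum_{j=1}^n p_j=1$. The terms $\alpha_j\beta_j$ and $\alpha_k\beta_k$ each contribute $\sum_{j=1}^n p_j\alpha_j\beta_j$, while each of the two cross terms contributes $-\sum_{j=1}^n p_j\alpha_j\sum_{k=1}^n p_k\beta_k$, so that
\begin{equation*}
D = 2\Big(\sum_{j=1}^n p_j\alpha_j\beta_j - \sum_{j=1}^n p_j\alpha_j\sum_{k=1}^n p_k\beta_k\Big).
\end{equation*}
Combining this identity with $D\ge0$ yields exactly the asserted inequality.

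There is essentially no real obstacle here; the only point requiring a moment's care is that the hypothesis $(p_1,\ldots,p_n)\in\Delta_n$ is used in both of its aspects — nonnegativity of the $p_j$ for the sign argument, and $\sum_j p_j=1$ for collapsing the double sums in the expansion. One could alternatively argue by induction on $n$ or via Abel summation, but the symmetrization argument above is the shortest and most transparent, so that is the route I would take.
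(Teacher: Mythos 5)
Your proof is correct. Note that the paper does not actually supply a proof of this lemma at all: it is stated as a known result and disposed of with a single citation to Mitrinovic, Pecaric and Fink (1993), where it appears as the weighted Chebyshev sum inequality. Your symmetrization argument is the standard and cleanest self-contained proof of that fact. The key identity
\begin{equation*}
\sum_{j=1}^n\sum_{k=1}^n p_jp_k(\alpha_j-\alpha_k)(\beta_j-\beta_k)
= 2\Big(\sum_{j=1}^n p_j\alpha_j\beta_j - \sum_{j=1}^n p_j\alpha_j\sum_{k=1}^n p_k\beta_k\Big)
\end{equation*}
is verified correctly, using $\sum_j p_j = 1$ to collapse the double sums, and the nonnegativity of each summand follows exactly as you say because the two differences share a sign and $p_j, p_k \ge 0$. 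You have also correctly isolated the two distinct roles played by the hypothesis $(p_1,\ldots,p_n)\in\Delta_n$. In short, your proof makes explicit what the paper leaves implicit, and there is no gap.
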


\subsection{Proofs}

Recall the function $f:[0,T]\times\R\to\R$ defined by \eqref{f} and the operator $\mathcal{H}$ given by~\eqref{H}.

\begin{proof}[Proof of Theorem~\ref{veri}]
   Let $h:[0,T]\times\Delta_m\to(0,\infty)$ be a function satisfying the conditions stated in the theorem. Note that every Lipschitz function is also absolutely continuous. We set
   \begin{equation*}
   G(t,x,p) := f(t,x)\,h(t,p), \quad (t,x,p)\in[0,T]\times\R\times\Delta_m.
   \end{equation*}
   Let us fix $t\in[0,T]$ and $(\xi,b)\in \mathcal U[t,T]$.
   From Lemma~\ref{characG}, it follows 
   \begin{equation}\label{GT}
   G(T,X^{\xi,b}_T,p_T) = G(t,X^{\xi,b}_t,p_t) + \int_t^T f(s,X^{\xi,b}_s)\mathcal{H} h(s,p_s;\xi_s,b_s)ds + \eta^{\xi,b}_T - \eta^{\xi,b}_t ,
   \end{equation}
   where $\stprT{\eta^{\xi,b}}$ is a martingale w.r.t.\ ${\mathfrak G}$ and we set $\mathcal{H} h(s,p_s;\xi,b)$ to zero at those points $s\in[t,T]$ where $h_t$ does not exist. Note that $h$ is partially differentiable w.r.t.\ $t$ almost everywhere in the sense of the Lebesgue measure according to the absolute continuity of $t\mapsto h(t,p)$ for all $p\in\Delta_m$.
   The generalized HJB equation~\eqref{HHJB} implies 
   \begin{equation*}
   \mathcal{H} h(s,p_s;\xi_s,b_s)\ge0\quad\stT.  
   \end{equation*}
   As a consequence
   \begin{equation*}
   \int_t^T f(s,X^{\xi,b}_s)\,\mathcal{H} h(s,p_s;\xi_s,b_s) ds\le 0,
   \end{equation*}
   due to the negativity of  $f$. Thus, by~\eqref{GT}, we get
   \begin{equation}\label{GGeta}
   G(T,X^{\xi,b}_T,p_T)\le G(t,X^{\xi,b}_t,p_t) + \eta^{\xi,b}_T-\eta^{\xi,b}_t.
   \end{equation}
   Using the boundary condition~\eqref{hHJBbcond}, we obtain
   \begin{equation*}
   G(T,x,p) = f(T,x)h(T,p) = f(T,x) = -e^{-\alpha x}=U(x).
   \end{equation*}
   Now, we take the conditional expectation in \eqref{GGeta} given $(X^{\xi,b}_t,p_t)=(x,p)$ on both sides of the inequality, which yields
   \begin{equation*}
   \Eop^{t,x,p}\big[U(X^{\xi,b}_T)\big]\le G(t,x,p).
   \end{equation*}
   Taking the supremum over all investment and reinsurance strategies $(\xi,b)\in \mathcal U[t,T]$, we obtain 
   \begin{equation}\label{VG}
   V(t,x,p) \le G(t,x,p).
   \end{equation} 
   To show equality, note that  $(\xi^\star_s,b^\star_s)$ given by~\eqref{optstr} (with $g$ replaced by $h$ in $A(s,p)$ and $B(s,p)$) are the unique minimizer of the HJB equation \eqref{HJBg}. 
   Therefore, 
   \begin{equation*}
   \mathcal{L} h(s,p_s;\xi^{\star}_s,b^\star_s) + \inf_{\varphi\in\partial^C h_p(t)}\{\varphi\} = 0.
   \end{equation*}
   So we can deduce that
   \begin{equation*}
   \mathcal{H} h(s,p_{s};\xi^{\star}_s,b^{\star}_s) = 0,\quad s\in[t,T].
   \end{equation*}
   This implies 
   \begin{equation*}
   \int_t^T f(s,X^{\xi^\star,b^\star}_s)\,\mathcal{H} h(s,p_s;\xi^{\star}_s,b^{\star}_s)ds = 0.
   \end{equation*}
   Consequently,
   \begin{equation*}
   U(X^{\xi^\star,b^\star}_T)= G(T,X^{\xi^\star,b^\star}_T,p_T) = G(t,X^{\xi^\star,b^\star}_t,p_t) + \eta^{\xi^\star,b^\star}_T - \eta^{\xi^\star,b^\star}_t.
   \end{equation*}
   Again, taking the conditional expectation given $(X^{\xi^\star,b^\star}_t,p_t)=(x,p)$ on both sides then yields
   \begin{equation*}
   \Eop^{t,x,p}\big[U(X^{\xi^\star,b^\star}_T)\big] = G(t,x,p)= -e^{-\alpha x e^{r(T-t)}}h(t,p)
   \end{equation*}
   and the proof is complete.
\end{proof} 

\begin{proof}[Proof of Lemma \ref{propg}]
	\begin{enumerate}
		\item The boundedness and positivity is proven by the same line of arguments as in \cite[Lemma 4.4 (a)]{BaeuerleLeimcke2020}.
		\item Follows by conditioning. 
		\item Follows again by conditioning. 
      \item The concavity is proven in much the same way as in \cite[Lemma 4.4 (c)]{BaeuerleLeimcke2020}.
		\item The Lipschitz condition is proven in much the same way as in \cite[Lemma 6.1 (d)]{BaeuerleRieder2007}.\qedhere
	\end{enumerate}
\end{proof}

\begin{proof}[Proof of Theorem~\ref{existenceHJB}]
   Fix $t\in[0,T)$ and $\xib\in \mathcal U[t,T]$.
   Let $\tau$ be the first jump time of $X^{\xi,b}$ after $t$ and $t'\in(t,T]$. It follows from Lemma~\ref{propg} and Lemma \ref{characG} that 
   \begin{equation}\label{eqprDir:Vtilde}
   V(\tau\wedge t',X^{\xi,b}_{\tau\wedge t'},p_{\tau\wedge t'}) 
   = V(t,X^{\xi,b}_t,p_t) + \int_t^{\tau\wedge t'} f(s,X^{\xi,b}_s)\,\mathcal{H}g(s,p_s;\xi_s,b_s)ds + \eta^{\xi,b}_{\tau\wedge t'} - \eta^{\xi,b}_t,
   \end{equation}
   where $\stprT{\eta^{\xi,b}}$ is a martingale w.r.t.\ ${\mathfrak G}$ and we set $\mathcal{H}g(s,p_s;\xi_s,b_s)$ to zero at those $s\in[t,T]$ where $g_t(s,p_s)$ does not exist. 
   For any $\varepsilon>0$ we can construct  a strategy $(\xi^\varepsilon,b^\varepsilon)\in \mathcal U[t,T]$ with $(\xi^\varepsilon_s,b^\varepsilon_s)=(\xi_s,b_s)$ for all $s\in[t,\tau\wedge t']$  from the continuity of $V$ such that
   \begin{align*}
   \Eop^{t,x,p}\Big[V(\tau\wedge t',X^{\xi,b}_{\tau\wedge t'},p_{\tau\wedge t'})\Big]
   &\le \Eop^{t,x,p}\Big[\Eop^{\tau\wedge t',X^{\xi,b}_{\tau\wedge t'},p_{\tau\wedge t'}}\Big[U(X_T^{\xi^\varepsilon,b^\varepsilon})\Big]\Big] +\varepsilon 
   \le \Eop^{t,x,p}\Big[U(X_T^{\xi^\varepsilon,b^\varepsilon})\Big]+\varepsilon \\
   &\le V(t,x,p)+\varepsilon.
   \end{align*}
   From the arbitrariness of $\varepsilon>0$ we conclude
   $$V(t,x,p) \ge \Eop^{t,x,p}\Big[V(\tau\wedge t',X^{\xi,b}_{\tau\wedge t'},p_{\tau\wedge t'})\Big]. $$
   Using this statement and \eqref{eqprDir:Vtilde} we obtain 
   \begin{align*}
   0 &\ge \lim_{t'\downarrow t}\Eop^{t,x,p}\bigg[\frac{1}{t'-t}\int_t^{t'} f(s,X^{\xi,b}_s)\,\mathcal{H}g(s,p_s;\xi_s,b_s) ds \big| t'<\tau\bigg]\Pop^{t,x,p}(t'<\tau) \\
   &\quad + \lim_{t'\downarrow t}\Eop^{t,x,p}\bigg[\frac{1}{t'-t}\int_t^{\tau} f(s,X^{\xi,b}_s)\,\mathcal{H}g(s,p_s;\xi_s,b_s)ds\big| t'\ge\tau\bigg]\Pop^{t,x,p}(t'\ge\tau),
   \end{align*}
  where
   \begin{equation*}
   \lim_{t'\downarrow t}\Pop^{t,x,p}(\tau\le t') = 1-\lim_{t'\downarrow t}e^{-\lambda(t'-t)}=0.
   \end{equation*}
   Consequently,
   \begin{equation*}
   0\ge \lim_{t'\downarrow t}\Eop^{t,x,p}\bigg[\frac{1}{t'-t}\int_t^{t'} f(s,X^{\xi,b}_s)\mathcal{H}g(s,p_s;\xi_s,b_s)ds\Ind{t'<\tau}\bigg].
   \end{equation*}
   By the dominated convergence theorem, we can interchange the limit and the expectation and we obtain by the fundamental theorem of Lebesgue calculus and $\Ind{t'<\tau}\to1$ $\Pop$-a.s.\ for $t'\downarrow t$, 
   \begin{equation*}
   0\ge \Eop^{t,x,p}\bigg[f(t,X^{\xi,b}_t)\,\mathcal{H}g(t,p_t;\xi_t,b_t)\bigg].
   \end{equation*}
   From now on, let $\xib\in[-K,K]\times[0,1]$ and $\varepsilon>0$ as well as $(\bar{\xi},\bar{b})\in \mathcal U[t,T]$ be a fixed strategy with $(\bar{\xi}_s,\bar{b}_s)\equiv(\xi,b)$ for $s\in[t,t+\varepsilon)$. Then
   \begin{equation*}
   0\ge \Eop^{t,x,p}\bigg[f(t,X^{\bar{\xi},\bar{b}}_t)\,\mathcal{H}g(t,p_t;\bar{\xi}_t,\bar{b}_t)\bigg] = f(t,x)\mathcal{H}g(t,p;\xi,b)
   \end{equation*}
   at those points $(t,p)$ where $g_t(t,p)$ exists. Due to the negativity of $f$, we get
   \begin{equation*}
   0\le \mathcal{H}g(t,p;\xi,b).
   \end{equation*}
   We show next the inequality above if $g_t$ does not exist. For this purpose, we denote by $M_p\subset[0,T]$ the set of points at which $g_p^\prime(t)$ exists for any $p\in\Delta_m$. On the basis of Theorem~\ref{th:genCgco}, we have, for any $p\in\Delta_m$, 
   \begin{equation*}
   \partial^C g_p(t) = co\Big\{\lim_{n\to\infty} g_p^\prime(t_n): t_n\to t, t_n\in M_p\Big\}.
   \end{equation*}
   That is, for every $\varphi\in\partial^C g_p(t)\subset[0,T]$, there exists $u\in\N$ and $(\beta_1,\ldots,\beta_u)\in\Delta_u$ such that $\varphi = \sum_{i=1}^u \beta_i\,\varphi^i$, where $\varphi^i = \lim_{n\to\infty} g_p(t_n^i)$ for sequences $(t_n^i)_{n\in\N}$ with $\lim_{n\to\infty}t_n^i=t$ along existing $g_p^\prime$. From what has already been proved, it can be concluded that, for any $i=1,\ldots,u$
   \begin{equation*}
   0 \le \mathcal{L}g(t_n^i,p;\xi,b)+g_t(t_n^i,p).
   \end{equation*}
   Thus, by the continuity of $t\mapsto g(t,p)$, $p\mapsto g(t,p)$ and $p\mapsto J(p,y)$, we get for $i=1,\ldots, u$
   \begin{equation*}
   0 \le \beta_i\mathcal{L}g(t,p;\xi,b)+\beta_i\lim_{n\to\infty}g_t(t_n^i,p), 
   \end{equation*}
   which yields
   \begin{equation*}
   0 \le \mathcal{L}g(t,p;\xi,b)+\sum_{i=1}^u\beta_i\lim_{n\to\infty}g_t(t_n^i,p)
   = \mathcal{L}g(t,p;\xi,b)+ \varphi.
   \end{equation*}
   Due to the arbitrariness of $\varphi\in\partial^C g_p(t)$ and $(\xi,b)\in[-K,K]\times[0,1]$, we obtain 
   \begin{equation*}
   0 \le \inf_{(\xi,b)\in[-K,K]\times[0,1]}\mathcal{L}g(t,p;\xi,b)+ \inf_{\varphi\in\partial^C g_p(t)}\{\varphi\}.
   \end{equation*}
   Our next objective is to establish the reverse inequality. For any $\varepsilon>0$ and $0\le t<t'\le T$, there exists a strategy $({\xi^{\varepsilon,t^\prime},b^{\varepsilon,t^\prime}})\in \mathcal{U}[t,T]$ such that
   \begin{equation*}
   V(t,x,p)-\varepsilon(t'-t)\le \Eop^{t,x,p}\Big[U\big(X_T^{\xi^{\varepsilon,t^\prime},b^{\varepsilon,t^\prime}}\big)\Big] 
   \le \Eop^{t,x,p}\Big[V\big(\tau\wedge t',X^{\xi^{\varepsilon,t^\prime},b^{\varepsilon,t^\prime}}_{\tau\wedge t'},p_{\tau\wedge t'}\big)\Big].
   \end{equation*}
   Using Lemma \ref{characG} it follows
   \begin{equation*}
   -\varepsilon(t'-t)\le \Eop^{t,x,p}\bigg[\int_t^{\tau\wedge t'} f\big(s,X^{\xi^{\varepsilon,t^\prime},b^{\varepsilon,t^\prime}}_s\big)\,\mathcal{H}g\big(s,p_s;\xi^{\varepsilon,t^\prime}_s,b^{\varepsilon,t^\prime}_s\big)ds\bigg].
   \end{equation*}
   In the same way as before, we get 
   \begin{align*} 
   -\varepsilon&\le \lim_{t'\downarrow t}\Eop^{t,x,p}\bigg[\frac{1}{t'-t}\int_t^{t'} f\big(s,X^{\xi^{\varepsilon,t^\prime},b^{\varepsilon,t^\prime}}_s\big)\,\mathcal{H}g\big(s,p_s;\xi^{\varepsilon,t^\prime}_s,b^{\varepsilon,t^\prime}_s\big)ds\Ind{t'<\tau}\bigg] \\
   &\le\lim_{t'\downarrow t}\Eop^{t,x,p}\bigg[\frac{1}{t'-t}\int_t^{t'}\! f\big(s,X^{\xi^{\varepsilon,t^\prime},b^{\varepsilon,t^\prime}}_s\big)\!\inf_{(\xi,b)\in[-K,K]\times[0,1]}\mathcal{H}g\big(s,p_s;\xi,b\big)ds\Ind{t'<\tau}\bigg].
   \end{align*}
   We can again interchange the limit and the infimum by the dominated convergence theorem which yields 
   \begin{equation*}
   -\varepsilon\le \Eop^{t,x,p}\bigg[\lim_{t'\downarrow t}\frac{1}{t'-t}\int_t^{t'} f\big(s,X^{\xi^{\varepsilon,t^\prime},b^{\varepsilon,t^\prime}}_s\big)\,\inf_{(\xi,b)\in[-K,K]\times[0,1]}\mathcal{H}g\big(s,p_s;\xi,b\big)ds\Ind{t'<\tau}\bigg].
   \end{equation*}
   Thus the same conclusion can be draw as above, i.e. 
   \begin{equation*}
   -\varepsilon \le f(t,x)\inf_{(\xi,b)\in[-K,K]\times[0,1]}\mathcal{H}g(t,p;\xi,b)
   \end{equation*}
   at those point where $g_t(s,p)$ exists. According to the negativity of $f$ and the arbitrariness of $\varepsilon>0$, we get, by $\varepsilon\downarrow0$,
   \begin{equation*}
   0 \ge \inf_{(\xi,b)\in[-K,K]\times[0,1]}\mathcal{H}g(t,p;\xi,b)
   \end{equation*}
   at those point where $g_t(s,p)$ exists.
   By the same way as before, we obtain in the case of no differentiability of $g$ w.r.t.\ $t$, that 
   \begin{equation*}
   0 \ge \inf_{(\xi,b)\in[-K,K]\times[0,1]}\mathcal{L}g(t,p;\xi,b) + \inf_{\varphi\in\partial^C g_p(t)}\{\varphi\}.
   \end{equation*}
   Summarizing, we have equality in the previous expression.
   The optimality of $(\xi^{\star},b^{\star})$  follows as in the proof of Theorem \ref{veri}.
\end{proof}






\end{document}